\definecolor{C0}{HTML}{1F77B4}
\definecolor{C1}{HTML}{FF7F0E}
\definecolor{C2}{HTML}{2ca02c}
\definecolor{C3}{HTML}{d62728}
\definecolor{C4}{HTML}{9467bd}
\definecolor{C5}{HTML}{8c564b}
\newcommand{\tododone}[1]{\todo[disable]{#1}\addcontentsline{tdo}{todo}{\st{#1}}}
\newcommand{\tododone}[1]{}
\colorlet{Changes@Color}{C2}
\renewcommand{\Changes@Markup@comment}[3]{%
  \IfStrEq{\Changes@optioncommentmarkup}{todo}%
		{\colorlet{Changes@todocolor}{authorcolor}\todo[color=Changes@todocolor!10, bordercolor=Changes@todocolor, linecolor=Changes@todocolor!70, nolist]{\textbf #1}}{}}
\def\ps@pprintTitle{%
  \let\@oddhead\@empty
  \let\@evenhead\@empty
  \def\@oddfoot{\reset@font\hfil\thepage\hfil}
  \let\@evenfoot\@oddfoot
}
\newtheorem{theorem}{Theorem}
\newlength{\bibsep}{\@listi \global\bibsep\itemsep \global\advance\bibsep by\parsep} % Compatibility with cas-sc
\begin{document}
% \shorttitle{Diffuse Phase-Boundary Flow}
%\shortauthor{E.~M.~Schmidt, M.~Meier, J.~M.~Quinlan, and B.~Runnels}

\shorttitle{}
\shortauthors{Boyd, Sandall, Meier, Quinlan, Runnels}
\title[mode=title]{A diffuse boundary method for phase boundaries in viscous compressible flow}

%\author{Emma M.~Schmidt, J.~Matt Quinlan \and Brandon Runnels\corresp{\email{brunnels@uccs.edu}}}
%\affiliation{Department of Mechanical and Aerospace Engineering, University of Colorado, Colorado Springs, CO 80918, USA}

\author[uccs]{Emma M. Boyd}
\author[isu]{Eric Sandall}
\author[uccs]{Maycon Meier}
\author[uccs]{J. Matt Quinlan}
\author[isu,uccs]{Brandon Runnels}[orcid=0000-0003-3043-5227]
\cormark[1]
\cortext[1]{Corresponding author}
\address[isu]{Department of Aerospace Engineering, Iowa State University, Ames, IA,  USA}
\address[uccs]{Department of Mechanical and Aerospace Engineering, University of Colorado, Colorado Springs, CO, USA}

\begin{abstract}
  Many physical systems of interest involve the close interaction of a flow in a domain with complex, time-varying boundaries.
  Treatment of boundaries of this nature is cumbersome due to the difficulty in explicity tracking boundaries that may exhibit topological transitions and high curvature.
  Such conditions can also lead to numerical instability.
  Diffuse boundary methods such as the phase field method are an attractive way to describe systems with complex boundaries, but coupling such methods to hydrodynamic flow solvers is nontrivial.
  This work presents a systematic approach for coupling flow to arbitrary implicitly-defined diffuse domains.
  It is demonstrated that all boundary conditions of interest can be expressed as suitable fluxes, noting that angular momentum flux is necessary in order to account for cases such as the no-slip condition.
  Moreover, it is shown that the diffuse boundary formulation converges exactly to the sharp interface solution, resulting in a well-defined error bound.
  The method is applied in a viscous compressible flow solver with block-structured adaptive mesh refinement, and the convergence properties are shown.
  Finally, the efficacy of the method is demonstrated by coupling to other classical flow problems (vortex shedding), problems in solidification (coupling to dendritic growth), and flow through eroding media (coupling to the Allen Cahn equation).
\end{abstract}

\begin{keywords}
  Computational methods\\
  Navier-Stokes equations \\
  Flow-structure interactions\\
  Porous media\\
  Solidification/melting
\end{keywords}
\maketitle

\ifdefined\usetodonotes\thispagestyle{mypage}\fi

\section{Introduction}

%\emma{still too much emphasis on combustion, may want to include more on the phase field method as applied to solid mechanics}

%\emma{the big names in diffuse interface methods for fluids are: Jain, Saurel, Allaire, Koch, Mirjalili, and Huang (Favrie and Murrone to a lesser extent) as long as all these are cited (at least one paper for each is currently cited, either the newest or most cited) I don't believe we'll have serious issues with the reviewers}
 
Complex solid-fluid multiphase modeling is essential to modern science and technology, from dendrite growth in batteries\cite{chen2021phase} to fluid flow through microporous media for electronics cooling\cite{arshad2025heat} to solid rocket motor propellant combustion\cite{meier2024diffuse}.
Each of these applications involves complex phase boundary geometries. 
In particular, topological changes such as particle agglomeration, complete erosion of solids in a flow, and spontaneous phase transformations due to uneven heating pose significant numerical challenges.
Many important applications involve extreme flow regimes, including turbulent, compressible, and creeping flows.
These complex flow conditions give rise to various solid-fluid interactions, where viscous effects, heat transfer, and mass transfer may all play critical roles.
In this work, a diffuse interface numerical method is developed that is adept at handling topological changes, multiscale behavior, and time-varying solid-fluid boundaries. 
Our aim is to develop a versatile method that can be easily adapted for various applications.

A wide range of methods have been proposed to treat internal boundaries in multiphase applications.
In cases where the interface evolution is slow, a viable approach is to conform the mesh to the boundary itself, for instance, using the method of boundary-fitted coordinates (BFC) \cite{thompson1982boundary,bijl1998unified}.
Several methods have been used to model boundary-driven flows with extreme time-varying morphology.
The marker-and-cell (MAC) method is one way to account for complex evolving boundaries of incompressible fluids in an Eulerian frame \cite{welch1965mac,tome1994gensmac,mckee2004recent,mckee2008mac}.
Alternatively, Lagrangian approaches associate the mesh with the material, which can provide high accuracy boundary evolution~\cite{quan2007moving}; these methods, however, are prone to ``mesh tangling'' or negative cell volumes near complex curvatures.
Similarly, front-tracking methods require explicitly calculating the location of the interface \cite{chertock2008interface, sato2013sharp} and may become intractable when topological changes of the boundary occur~\cite{maltsev2022high}.
One example is the immersed boundary method (IBM), which is a diffuse method that requires explicit interface tracking \cite{peskin1972flow,sotiropoulos2014immersed,griffith2020immersed}.
Another example is the volume-of-fluid method (VOF).
In VOF, the volume fraction of each phase is tracked within each computational cell, allowing a relatively simple approach to reconstruct the phase boundaries. 
By solving advection equations for the volume fractions, VOF can accurately capture large deformations of the interface, such as splashing, coalescence, and breakup\cite{pilliod2004second}. 
VOF is used, for example, in free-surface flows, bubble dynamics\cite{yujie2012three}, and droplet impact\cite{bussmann1999three}.
VOF is, however, unsuitable for multiphase problems involving voids and may become computationally untenable for non-convex geometries \cite{maric2020unstructured,niethammer2019extended,fuster2018momentum}.
Given these challenges in multiphase problems with complex topology, an approach that avoids explicitly tracking or reconstructing the interface is appealing\cite{mirjalili2019comparison}.

Diffuse interface methods have been used extensively in fluid applications~\cite{anderson1998diffuse,saurel2018diffuse}.
These methods rely on a phase indicator field defined over the entire problem domain, allowing implicit interface tracking as the indicator field evolves according to the boundary interactions between the phases.
This framework provides a closure model describing the phase boundary, as in the five-equation model of~\textcite{allaire2002five} or the recent work of \textcite{goulding2025conservative}.
This approach also shares many similarities with the phase field method of solid mechanics, which has been directly applied to the Navier-Stokes equations in the work of \textcite{huang2020consistent}.
The evolution equation used to evolve the interface may cause the interface to diffuse over time, and a correction term may be necessary to retain the interface geometry\cite{jain2020conservative}.
In other similar models~\cite{abgrall1996prevent, sainsaulieu1995finite, johnsen2012preventing,mirjalili2024conservative}, the order parameter is subject to an advection-diffusion equation coupled to the governing equations of the system. 
These approaches perform well in immiscible multiphase flows, where the indicator field may represent a mass or volume fraction \cite{fakhari2010phase,soligo2019mass,wang2019brief,nathan2025accurate,abels2024mathematical}.

For a more complicated system such as deflagration to detonation transition, further governing equations are required to capture mass and energy transfer between the phases, as seen in the seven equation model of~\textcite{baer1986two}. 
This seven-equation model has been used in several combustion applications, including hydrogen\slash oxygen combustion \cite{murrone2019eulerian}, premixed turbulent combustion \cite{bohbot2010multi}, and burning droplets \cite{ruggirello2011reaction}.
The model extends the classical Euler compressible flow system by introducing additional equations to account for the conservation of mass, momentum, and energy for each phase, along with equations governing the mass and energy transfer between the phases due to the phase change. 
The seven-equation model simultaneously solves conservation statements in the interface with the governing equations outside of the interface in order to evolve the phase boundary.
For complex problems, the number of equations required to evolve the boundary becomes prohibitively large.
Some other diffuse interface approaches in combustion include artificially thickened flame models for premixed combustion, which determine a diffuse length scale based on the criteria that the deflagration speed be consistent with the free boundary problem~\cite{legier2000dynamically, poinsot2005theoretical}.
These applications include multiple mechanisms for phase change, and 
using a single surrogate field to describe the phase boundary does not easily lead to a single physical interpretation.
Again, the number of governing equations required to evolve such a field from first principles would be prohibitively large.
The current work takes a numerical approach, which, while consistent with the models described above, allows for an arbitrarily evolving order parameter.

This phase boundary evolution approach has been applied to solid-fluid interactions in the absence of mass transfer~\cite{wallis2021diffuse,maltsev2022high, kemm2020simple}.
Notably, the works of Jain \cite{jain2021assessment,jain2020diffuseinterface}, Ghaisas\cite{ghaisas2018unified}, Subramaniam\cite{subramaniam2018high}, and Adler\cite{adler2020diffuseinterface} are examples of diffuse interface models for elastic and plastic deformation of solids in compressible flow with immiscible phases.
This bears some similarity to the wall potential construction used in some cases to restrict fluid motion to the outside of the solid~\cite{favrie2009solid,jacqmin2000contact}.
Localized artificial diffusivity (LAD) methods, which share many characteristics with diffuse interface methods, have been applied to combustion and compressible flows~\cite{lee2017localized,jainstable}. 
Crucially, these methods fail to guarantee that the added diffuse length scale will leave the flow far from the phase boundary unchanged.
Additionally, it is necessary to develop some mechanism that accounts for complicated solid-fluid boundary conditions at the interface without prohibitively increasing the size of the governing system of equations.

This work provides a systematic approach for determining diffuse boundary source terms that replicate the effect of an implicitly defined diffuse boundary on the flow field, with guaranteed convergence properties.
This computational method is thus capable of modeling solid-fluid phase boundaries on a single computational mesh, eliminating the need for external coupling of different simulation techniques.
The simultaneous evolution of all fields tightly couples both phases and avoids explicit tracking or reconstruction of the solid-fluid phase boundary to mitigate the computational burden.
These features are crucial for efficient simulations of solid composite propellant combustion, as well as many other simulation domains that feature nontrivial interactions between the solid and fluid phases.

This paper is structured in the following way.
First, a generalized form of the diffuse boundary source terms and governing equations are developed, which are then specialized to a set of example cases (\cref{sec:theory}).
Next, some of the practical considerations necessary in the implementation of the diffuse boundary theory are discussed (\cref{sec:numerical_considerations}), followed by a brief discussion of the computational framework and methods used in this work (\cref{sec:implementation}).
Finally, a selection of representative examples is presented that demonstrate the accuracy and convergence properties of the method, along with a variety of salient problems that demonstrate the method's versatility (\cref{sec:examples}).

\section{Theory}\label{sec:theory}

This section builds on and substantially generalizes the diffuse boundary method proposed by the authors in \textcite{schmidt2022self}.
The previous work established the ability of diffuse source terms to capture salient boundary conditions, written in the form of fluxes, for 1D and 2D inviscid flow; however, it did not treat viscous flow nor did it account for the important and highly prevalent no-slip boundary condition.
A key point in this work is to extend the ``flux'' interpretation of diffuse boundary source terms to the no-slip condition for viscous flow.

\subsection{Domain definition with implicit, diffuse boundaries}

In a discrete formulation, a solution is defined in a region $\Omega\in \mathbb{R}^3$, which is at least partially bounded by $\partial\Omega$ along which boundary conditions are prescribed.
The diffuse boundary formulation admits solutions on the entirety of $\mathbb{R}^3$ by replacing the domain $\Omega$ with a representative function $\eta$, known as the order parameter.
The order parameter $\eta$ is a time-varying, Lipschitz continuous function over $\mathbb{R}^3$ with range $[0,1]$.
The key feature of the order parameter is that in the sharp interface limit, (that is the limit as $\epsilon\rightarrow0$), the support of $\eta$ is exactly the discrete-boundary domain $\Omega$. 
In the case that $\epsilon>0$, the support of $\eta = 1$ is referred to as $\Omega_\epsilon$, and the support of $\nabla\eta$ is referred to as $\partial\Omega_\epsilon$---this defines the diffuse boundary region.
The gradient, $\nabla\eta$, is required to be 0 outside of $\Omega_\epsilon\cup\partial\Omega_\epsilon$, and there is assumed some parameterization of $\eta$, $\hat\eta(s)$, such that
\begin{align}
  \eta(\bm{y}+s\bm{n}) = \hat\eta(s) \quad \forall\bm{y}\in\partial\Omega, \quad s\in[-\epsilon/2,\epsilon/2],
\end{align}
where $\bm{n}$ is the inward-facing normal at $\bm{y}$, $\hat\eta(-\epsilon/2)=0$, $\hat\eta(\epsilon/2)=1$, and $d\hat\eta/ds$ is a mollifier over $[-\epsilon/2,\epsilon/2]$.
Thus, $\nabla\eta(\bm{y}+s\bm{n}) = \bm{n}\,d\hat\eta/ds$ and $\nabla\eta=\bm{0}$ outside of $\partial_\epsilon\Omega$.
These definitions lead to the following theorem, which forms the basis for the work presented here:

\begin{theorem}\label{thm:boundary_integral}
  Let $\eta$ be an idealized order parameter with length scale $\epsilon$, and let $f$ and $g$ be either scalar or
vector-valued bounded functions, with $\bm{n} \cdot \nabla g$ bounded in $\partial\Omega_\epsilon$.
  \begin{align}
    \lim_{\epsilon\to0}\int_A\int_{-\epsilon/2}^{\epsilon/2}\Big[f\eta + g|\nabla\eta|\Big]ds\,dA = \int_A\,g\,dA \ \ \ \ \  \forall A \subset \partial \Omega.
  \end{align}
\end{theorem}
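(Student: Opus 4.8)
The plan is to split the integrand into its two pieces and treat them separately: the $f\eta$ term vanishes in the limit, while the $g|\nabla\eta|$ term collapses onto the boundary. First I would fix $\bm{y}\in A$ and pass to the normal coordinate $s$ supplied by the idealized order parameter, so that $\eta(\bm{y}+s\bm{n})=\hat\eta(s)$ and $\nabla\eta(\bm{y}+s\bm{n})=\bm{n}\,d\hat\eta/ds$, whence $|\nabla\eta|=d\hat\eta/ds$ since $d\hat\eta/ds$ is a nonnegative mollifier on $[-\epsilon/2,\epsilon/2]$. The first term is then immediate: with $M_f:=\sup|f|<\infty$ and $0\le\eta\le1$, one has $\big|\int_{-\epsilon/2}^{\epsilon/2} f\eta\,ds\big|\le M_f\,\epsilon$, so after integrating over $A$ (assumed of finite area) this contribution is $O(\epsilon\,|A|)\to0$.

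For the second term, the key is that $d\hat\eta/ds$ is an approximate identity: $\int_{-\epsilon/2}^{\epsilon/2}(d\hat\eta/ds)\,ds=\hat\eta(\epsilon/2)-\hat\eta(-\epsilon/2)=1$. Hence
\[
  \int_{-\epsilon/2}^{\epsilon/2} g(\bm{y}+s\bm{n})\,\frac{d\hat\eta}{ds}\,ds-g(\bm{y})
  =\int_{-\epsilon/2}^{\epsilon/2}\big[g(\bm{y}+s\bm{n})-g(\bm{y})\big]\,\frac{d\hat\eta}{ds}\,ds,
\]
and along the segment $\{\bm{y}+s\bm{n}:s\in[-\epsilon/2,\epsilon/2]\}\subset\partial\Omega_\epsilon$ the fundamental theorem of calculus gives $|g(\bm{y}+s\bm{n})-g(\bm{y})|\le|s|\,C\le(\epsilon/2)\,C$, where $C:=\sup_{\partial\Omega_\epsilon}|\bm{n}\cdot\nabla g|<\infty$ by hypothesis. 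Combined with the unit mass of $d\hat\eta/ds$, the displayed difference is bounded by $(\epsilon/2)\,C$, uniformly in $\bm{y}\in A$; for vector-valued $g$ this is applied componentwise.

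Finally I would pass the limit through the $dA$-integral. The inner $s$-integral $\int_{-\epsilon/2}^{\epsilon/2} g|\nabla\eta|\,ds$ is bounded uniformly by $\sup|g|$ and converges to $g(\bm{y})$ uniformly on $A$, and $A$ has finite area, so uniform (equivalently, dominated) convergence yields $\lim_{\epsilon\to0}\int_A\int_{-\epsilon/2}^{\epsilon/2} g|\nabla\eta|\,ds\,dA=\int_A g\,dA$; adding the vanishing first term gives the claim. The one point needing care is the validity of the normal-coordinate parameterization near $A$: for $\epsilon$ below a threshold fixed by the reach (minimal radius of curvature) of $\partial\Omega$, the map $(\bm{y},s)\mapsto\bm{y}+s\bm{n}(\bm{y})$ is injective on $A\times[-\epsilon/2,\epsilon/2]$, so the idealized-order-parameter construction is well posed; since the statement is a limit $\epsilon\to0$ it suffices to take $\epsilon$ small, and no Jacobian factor arises because the outer integral is written against $dA$ rather than the induced volume measure on the tube. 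I expect the approximate-identity estimate of the second paragraph---extracting a convergence rate from boundedness of $\bm{n}\cdot\nabla g$ rather than mere continuity of $g$---to be the technical heart of the argument.
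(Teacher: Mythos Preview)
Your argument is correct and is precisely the natural route: split off the $f\eta$ term by the trivial $O(\epsilon)$ bound, then recognize $d\hat\eta/ds$ as an approximate identity of unit mass and use the hypothesis on $\bm{n}\cdot\nabla g$ to control $|g(\bm{y}+s\bm{n})-g(\bm{y})|$ uniformly by $C\epsilon/2$. The remark about the reach of $\partial\Omega$ and the absence of a Jacobian (since the integral is written as $ds\,dA$ rather than $dV$) is exactly the right caveat.

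As for comparison with the paper: there is none to make. The paper does not prove \cref{thm:boundary_integral} in-line but simply cites \textcite{schmidt2022self} for the proof. Your self-contained argument is almost certainly what appears there, since the structure of the statement (mollifier $d\hat\eta/ds$, bounded $f$, bounded normal derivative of $g$) telegraphs this decomposition; there is essentially no other way to do it.
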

\begin{proof}
  The proof of this theorem may be found in \textcite{schmidt2022self}. 
\end{proof}
In practice, it is convenient to select $\eta$ from a family of functions that, in the sharp interface limit as $\epsilon\rightarrow 0$, approach a Heaviside function. Consequently,  $\nabla\eta$ limits to the Dirac delta function with respect to the interface normal parameter, $\nabla^2\eta$ limits to a unit doublet, and so on.
Prior to this limit, however, it is often the case that the condition of compact support ($\nabla\eta=0$) in the diffuse boundary region is left unenforced.
This is particularly true in the cases in which the diffuse boundary is governed by a separate set of equations, as is the case when coupling to a phse field method.
(This may be dealt with, as discussed in subsequent sections, by using a numerical cutoff to prevent ``leakage'' of the flow solution into the solid domain.)
In the development of our theory, however, the condition of compact support is always assumed in order to avoid unnecessary complexity.

The order parameter $\eta$ describes the smooth distribution of a phase in $\mathbb{R}^3$. In the case of two phases, $1 - \eta$ is the order parameter of the second phase and a field $\varphi$ is defined for both phases over all of $\mathbb{R}^3$ by the mixture rule
\begin{align}
  \varphi = \varphi^{\mathrm{Phase 1}} \eta + \varphi^{\mathrm{Phase 2}} (1 - \eta).  \label{eq:mix}
\end{align}
For more than two phases, additional order parameters are introduced to the mixture rule such that the supports of the order parameters for an open cover of $\mathbb{R}^3$.
In general, the fields $\varphi^{\mathrm{Phase 1}}$ and $\varphi^{\mathrm{Phase 2}}$ are solutions to different sets of governing equations.
Of interest is a set of governing equations that hold for all of $\mathbb{R}^3$ and whose solution is $\varphi^{\mathrm{Phase 1}}\eta$. 
These governing equations must, in the sharp interface limit, reduce to the governing equations of $\varphi^{\mathrm{Phase 1}}$ over $\Omega \subset \mathbb{R}^3$, and boundary conditions applied over $\partial\Omega \subset \mathbb{R}^3$.  
A set of diffuse interface equations is valid only if it obeys this limit.

\subsection{Diffuse formulation of the Navier-Stokes equations}
\label{sec:equations}

The setting for this work is the flow of a compressible, viscous fluid.
This flow is given by the solution to the Navier-stokes equations for the conservation of mass, momentum, and energy:
\begin{subequations}\label{ns}
  \begin{align}
    \label{eq:mass}
    \mbox{Mass conservation:}&&\frac{\partial \rho}{\partial t} + \nabla\cdot (\rho \bm{u})  &= 0  & \bm{x} &\in \Omega, \\
    \label{eq:mom} \mbox{Linear momentum conservation:} &&
    \frac{\partial}{\partial t}(\rho\bm{u}) + \nabla\cdot(\rho\bm{u}\otimes\bm{u} - \mathbf{T} + p\,\mathbf{I}) &= \bm{0} & \bm{x} &\in \Omega, \\
    \label{eq:angmom}\mbox{Angular momentum conservation:} &&
    \big(\mathbf{T} - \bm{\sigma}(\nabla\bm{u})\big) &= \bm{0} & \bm{x} &\in \Omega \\
    \mbox{Energy conservation:}&&\frac{\partial \mathrm{E}}{\partial t} + \nabla\cdot(\mathrm{E} \bm{u} - \mathbf{T}\bm{u}) + \dot{q}&= 0 & \bm{x} &\in \Omega \label{eq:eng} \\
    \intertext{
    These equations govern  the flow within the interior of the domain $\Omega$.
    For an explicitly defend boundary $\partial\Omega$, the behavior of the solution fields across the boundary are determined by the flux equations:
    }
    \mbox{Mass transport:}&& \rho\bm{u} \cdot\bm{n}&= \dot{m}  &  \bm{x} &\in \partial\Omega \label{eq:mass_flux}\\
    \mbox{Linear Momentum transport:}&& \big(\rho\bm{u}\otimes\bm{u} - \mathbf{T} + p\,\mathbf{I}\big)\bm{n} &= \dot{\mathbf{P}} &  \bm{x} &\in \partial\Omega \label{eq:mom_flux}\\
    \mbox{Angular momentum transport:} && \mathbf{T}\bm{n} &= \dot{\mathbf{L}}  &  \bm{x} &\in \partial\Omega \label{eq:ang_flux}\\
    \mbox{Energy transport:}&& \big(E\bm{u}-\bm{T u}\big)\cdot\bm{n} &= \dot{e}  &  \bm{x} &\in \partial\Omega \label{eq:eng_flux},
  \end{align}
\end{subequations}
where $\rho$ is the density, $\bm{u}$ is the fluid velocity, $p$ is pressure (adopting the convention that positive pressure corresponds to compression), and $\operatorname{dim}$ is the spatial dimension.
The quantities $\dot{m}$, $\dot{e}$, $\dot{\mathbf{P}}$, and $\dot{\mathbf{L}}$ are fluxes for mass, energy, linear momentum, and angular momentum.
$E$ is internal energy per volume, which is defined to be comprised of kinetic and internal energy components: 
\begin{align}
  E = \mathrm{K} + \mathrm{U} = \frac{1}{2}\rho\,|\bm{u}|^2 + \mathrm{U}(p,T).
\end{align}
The tensors $\mathbf{T},\bm{\sigma}$ both correspond to the Cauchy stress tensor and are identical everywhere in the discrete boundary case.
It is useful in the diffuse boundary case, however, to distinguish between the actual flow stress, $\mathbf{T}$, and the Cauchy stress as determined by the flow's constitutive response that is governed by $\bm{\sigma}$.
The system is thus closed by the selection of the equation of state ($\mathrm{U}(p,T)$) and the viscous response $\sigma(p,\nabla\bm{u})$.
This work leaves the equation of state general; for the viscous response, a Newtonian fluid is assumed:
\begin{align}
  \bm{\sigma}(\nabla\bm{u}) = \mathbb{M}\nabla\bm{u}
\end{align}
where the tensor $\mathbb{M}$ is a fourth order tensor possessing both major and minor symmetry.
For an isotropic flow, $\mathbb{M}$ is defined in terms of Lam\'e parameters $\mu,\lambda$ as
\begin{align}
  \mathbb{M}\nabla\bm{u} = 2\mu\operatorname{sym}(\nabla\bm{u}) + \lambda\,(\nabla\cdot{\bm{u}})\,\mathbf{I}.
\end{align}

It is taken implicitly in the above formulation that any boundary condition can be achieved through the proper prescription of the flux variables $\dot{m}$, $\dot{\mathbf{P}}$, $\dot{\mathbf{L}}$, and $\dot{e}$.
If one prescribes boundary conditions in terms of these fluxes, then the formulation is complete.
At the same time, it is usually inconvenient (or impossible) to define realistic boundaries in this way, as this would involve the mixing of both essential and natural boundary conditions together.
Therefore, each flux variable is considered composed of a mixture of prescribed quantities and derived quantities, which then interact with the internal governing equations to produce the desired values.
In this way, the above generalized flux formulation can be adapted to any realistic set of boundary conditions by prescribing the appropriate quantities and leaving the others as solution fields.
For natural boundaries, the appropriate flux terms themselves are prescribed in terms of applied forces, such as pressure, traction at the boundary, or heat flux. 
On the other hand, for essential boundary conditions, the primitive constituents of the flux terms (such as density, velocity, or temperature) are prescribed, with all other terms left as unknowns.
Note that setting an essential boundary specifies the values of certain degrees of freedom at the boundary, fixing them to a particular value. 

Also of note in this formulation is the explicit treatment of the conservation of angular momentum.
It is common practice to reduce the angular momentum to a simple symmetry restriction $\mathbf{T}=\mathbf{T}^T$; one can verify that this substitution over an explicitly defined domain reduces the Navier-Stokes equations to their more traditional form.
In the context of a diffuse boundary, however, it will be shown that this reduction of angular momentum is no longer possible, and that certain boundary conditions manifest as angular momentum fluxes (and, hence, breakers of the symmetry of the stress tensor).
This will be discussed in more detail subsequently.
It is now possible to construct the following diffuse interface replacement of \cref{eq:mass,eq:mom,eq:angmom,eq:eng,eq:mass_flux,eq:mom_flux,eq:ang_flux,eq:eng_flux}, which are considered to be valid (but inactive) over $\mathbb{R}^{\operatorname{dim}}$:
%\begin{tcolorbox}[sharp corners,frame empty]
  \begin{subequations}\label{diffuse}\begin{align}\label{eq:diffuse_mass}
      \mbox{Mass conservation:} &&
      \eta\frac{\partial \rho }{\partial t} + \nabla\cdot(\eta\rho\bm{u}) &= \dot{m}|\nabla\eta| && \forall \bm{x}\in\mathbb{R}^{\operatorname{dim}},\\ 
      \label{eq:diffuse_mom}
      \mbox{Linear momentum conservation:} &&
      \eta\frac{\partial (\rho u)}{\partial t} + \nabla\cdot(\eta(\rho\bm{u}\otimes\bm{u} - \mathbf{T} + p\,\mathbf{I})) &= \dot{\mathbf{P}}|\nabla\eta| && \forall \bm{x}\in\mathbb{R}^{\operatorname{dim}},\\
      \label{eq:diffuse_ang}
      \mbox{Angular momentum conservation:} &&
      \eta \mathbf{T} - \mathbb{M}(\nabla\eta\bm{u}) &= \dot{\bm{\mathbf{L}}} |\nabla \eta| && \forall \bm{x}\in\mathbb{R}^{\operatorname{dim}},\\
      \mbox{Energy conservation:} &&
      \eta\frac{\partial (E)}{\partial t} + \nabla\cdot(\eta (E\bm{u} - \mathbf{T}\bm{u})) + \eta\dot{q}&= \dot{e}|\nabla\eta| && \forall \bm{x}\in\mathbb{R}^{\operatorname{dim}} \label{eq:diffuse_eng},
    \end{align}\end{subequations}
%\end{tcolorbox}
where $\eta_{\epsilon}$ is an idealized order parameter (as defined above) corresponding to the explicit domain $\Omega$,  characterized by diffuse boundary thickness $\epsilon$.
In the above, the right-hand side is nearly identical to the corresponding conservation equations except for the presence of $\eta$ in each term.
The right hand side depends on the replacement of each boundary-valued source term ($\dot{m}:\partial\Omega\to\mathbb{R}^{\operatorname{dim}}$, etc.,) by equivalent domain-valued source term fields ($\dot{m}:\Omega\to\mathbb{R}^{\operatorname{dim}}$, etc.).
Because the value of the domain-value source term fields only matters within the diffuse boundary itself, their values are considered to be determined by normal extrusion from the sharp boundary.
As long as $\epsilon$ is sufficiently small as to maintain validity (i.e., smaller than the smallest curvature radius of interest), such an extrusion should produce negligible variation in the direction normal to the sharp boundary. 

The validity of \cref{eq:diffuse_mass,eq:diffuse_mom,eq:diffuse_ang,eq:diffuse_eng} is established by demonstrating that the original governing equations (\cref{eq:mass,eq:mom,eq:angmom,eq:eng,eq:mass_flux,eq:mom_flux,eq:ang_flux,eq:eng_flux}) are recovered in the limit as $\epsilon\to0$.
To demonstrate this efficently, the diffuse equations are written compactly as
\begin{align}
  \eta\frac{\partial \bm{f}}{\partial t} + \nabla\cdot(\eta \mathbf{G}) + \eta\nabla\cdot(\mathbf{H}) = \mathbf{K}\cdot\nabla\eta,
\end{align}
where any of the four governing equations are recoverd through the appropriate selection of quantities $\bm{f}$, $\mathbf{G}$, $\mathbf{H}$, and $\mathbf{K}$.
For example, $\bm{f}$ is a scalar for the mass and energy conservation equations, a vector for linear momentum, and zero for angular momentum.
The remaining variables $\mathbf{G}$, $\mathbf{H}$, and $\mathbf{K}$ are tensors of order $n+1$ where $n$ is the rank of the equation.
Now, equivalence is proven by writing the diffuse equations in weak form:
\begin{align}
  \int_B\Big[  \eta\frac{\partial \bm{f}}{\partial t} + \nabla\cdot(\eta \mathbf{G}) + \eta\nabla\cdot(\mathbf{H}) - \mathbf{K}\cdot\nabla\eta\Big]dV = 0 \ \ \ \ \forall B\subset\mathbb{R}^{\operatorname{dim}}
\end{align}
To derive the original conservation equations, $\mathbb{B}$ is restricted to the interior of the diffuse boundary $\Omega_\epsilon$, recalling that $\Omega_\epsilon=\operatorname{supp}(\eta=1)$.
\begin{align}
  \int_B\Big[\frac{\partial \bm{f}}{\partial t} + \nabla\cdot(\mathbf{G}) + \nabla\cdot(\mathbf{H})\Big]dV = 0 \ \ \ \ \forall B\subset\Omega_\epsilon
\end{align}
Passing back into the strong form, where the restriction to $\Omega_{\epsilon}$ (naturally) limits the application to the interior of the domain, yielding the interior governing equation:
\begin{align}
  \frac{\partial \bm{f}}{\partial t} + \nabla\cdot(\mathbf{G}) + \nabla\cdot(\mathbf{H})= 0 \ \ \ \ \  \forall \bm{x} \in\Omega,
\end{align}
as expected, since $\Omega_\epsilon\to\Omega$ as $\epsilon\to0$.
Next, the boundary terms are considered.
Observing that the weak form holds for all $B\in\mathbb{R}^{\operatorname{dim}}$, it must also hold for all $A\times[-\epsilon/2,\epsilon/2] \ \ \ \forall A\subset\partial\Omega$.
Consequently, it follows that
\begin{align}
  \int_A\int_{-\epsilon/2}^{\epsilon/2}\Big[  \eta\frac{\partial \bm{f}}{\partial t} + \nabla\cdot(\eta \mathbf{G}) + \eta\nabla\cdot\mathbf{H} - \mathbf{K}\cdot\nabla\eta\Big]dV = 0 \ \ \ \ \forall A\subset\partial\Omega
\end{align}
Application of the product rule and light rearranging of terms yields
\begin{align}
  \int_A\int_{-\epsilon/2}^{\epsilon/2}\Big[
  \Big(\frac{\partial \bm{f}}{\partial t}
  + \nabla\cdot \mathbf{G}
  + \nabla\cdot\mathbf{H}\Big)\,\eta
  + \Big(\mathbf{G} - \mathbf{K}\Big)\cdot\nabla\eta
  \Big]dV = 0 \ \ \ \ \forall A\subset\partial\Omega
\end{align}
Noting that the order parameter is defined such that $\nabla\eta=\bm{n}\,|\nabla\eta|$, Theorem 1 may be applied to recover the weak form of the boundary flux terms,
\begin{align}
  \int_\Omega (\mathbf{G}-\mathbf{K})\,dA = \bm{0} \ \ \ \forall A\subset\partial\Omega,
\end{align}
which lead to the strong form that holds over the boundary of the domain:
\begin{align}
  \mathbf{G}-\mathbf{K} = \bm{0} \ \ \ \ \forall \bm{x}\in\partial\Omega.
\end{align}
This shows that the diffuse boundary source term formulation is equivalent to the discrete boundary condition in the sharp interface limit, as long as the boundary conditions are prescribed in the form of fluxes.
The next step is to show that all boundary conditions of interest can be written in such a form.

\subsection{Boundary conditions}

This section specializes the boundary fluxes $\dot{m}$, $\dot{\mathbf{P}}$, $\dot{\mathbf{L}}$, and $\dot{e}$ for various boundary conditions of interest.
As a general comment, it is always assumed that only one type of boundary condition is applied through the diffuse source term.
This is not a limitation of the theory, which is certainly capable of treating multiple boundary condition types.
At the same time, such a generalization introduces needless complexity and will therefore not be considered here.

\subsubsection{The no-boundary condition}

In many cases, it is desirable to choose source terms that entirely eliminate the effect of the boundary on the flow.
This is accomplished by selecting flow variables that correspond to terms in the governing equations that are then canceled.
The no-boundary terms, which will sometimes be called ``passive'' source terms, are:
\begin{align}
  \dot{m} &\mapsto \rho\,\bm{u}\cdot\bm{n}
  &
    \dot{\textbf{P}} &= \rho\bm{u}\otimes\bm{u} - \mathbb{M}\nabla\bm{u} + p
  &
    \dot{\textbf{L}} &= \bm{0}
  &
    \dot{e} &= \Big[\frac{1}{2}\rho\,(\bm{u}\cdot\bm{u})\,\bm{u} + \big[U(p,T)+p\big]\,\bm{u} - (\mathbb{M}\nabla\bm{u})\,\bm{u}\Big]\cdot\bm{n} .
              \label{eq:noboundary}
\end{align}
Including these terms as source terms cancel out the effect of $\nabla\eta$ on the governing equations, successfully removing the influence of the boundary on the flow.
Note that it is essential to use consistent stencils when calculating derivatives for these terms, and sometimes it makes more computational sense to rearrange the governing equations rather than use these terms explicitly.
In this work, the passive source term formulation will be used for purposes of clarity, with the understanding that each boundary condition may be optimized differently in a final implementation.

\subsubsection{Normal mass flux and the non-penetration condition}

Here we begin with the basic boundary conditions of prescribed flow normal to the boundary.
It is often enforced implicitly with the no-slip condition, but we will consider it to be a distinct condition in this work for reasons that will be discussed subsequently.
First, consider a boundary through which a normal velocity, $u_0$, is prescribed.
Density and energy are left as unknowns.
Then, the prescription of normal velocity at the boundary is achieved by defining the mass flux condition
\begin{align}
  \dot{m} &\mapsto \rho\,u_0\bm{n}
\end{align}
The other fluxes are defined as in \cref{eq:noboundary}.
In this case, the unknown density is actually the density of the flow at the time; that is to say, the strength of the source term is determined by the density of the flow within the diffuse boundary.
This can then be substituted into the continuity equation and rearranged to show
\begin{align}
  \Big[\frac{\partial\rho}{\partial t} + \nabla\cdot(\rho\bm{u})\Big]\,\eta + \Big[\rho(u_0-\bm{u}\cdot\bm{n})\Big]|\nabla\eta| = 0.
\end{align}
This recovers, by application of theorem 1, assuming that $\rho>0$, the boundary condition in the limit as $\epsilon\to0$:
\begin{align}
  \bm{u}\cdot\bm{n}=u_0 \ \ \ \forall\bm{x}\in\partial\Omega.
\end{align}
Alternatively, one may wish to prescribe a total normal mass flux through the boundary $\dot{m}_0$.
In this case, the mass flux condition is simply
\begin{align}
  \dot{m} \mapsto \dot{m}_0,
\end{align}
that is, the mass flux is simply set to the prescribed value.
When substituted into the continuity equation and taken to the limit, it yields
\begin{align}
  \rho(\bm{u}\cdot\bm{n}) = \dot{m}_0  \ \ \ \ \forall\bm{x}\in\partial\Omega,
\end{align}
as desired.
Obviously, in the case of the non-penetration condition, $\dot{\bm{m}}_0$ is simply zero. 

It is important to observe that both of these cases allow for either normal velocity or normal mass flux to be prescribed; it is impossible to specify both simultaneously using this form of the boundary condition.
This is because the boundary condition is mediated through the conservation of mass, which is only a single scalar equation.
In order to prescribe two variables simultaneously (e.g., velocity and density), boundary conditions enforced through another equation (e.g., energy conservation) is required.

Though the non-penetration condition is inherently tied to the conservation of mass, it has an obvious effect on the fluid momentum as well.
The translation from the mass-conserving effect of the non-penetration condition is communicated to the momentum conservation equations through the equation of state: violation of the non-penetration condition induces a localized compensating mass flux at the boundary.
This compensating mass flux increases the flow density in the vicinity of the diffuse boundary, which translates to an increase in pressure as a result of the EOS.
The resulting pressure gradient then acts as an effective, conjugate, momentum flux resulting from the presence of an impenetrable boundary.
Thus the non-penetration condition, which is inherently tied to the conservation of mass, is mediated by the EOS. 

This mediating effect of the EOS is crucial, because it means that the EOS is essentially responsible for properly enforcing the boundary condition.
That is, an incomplete or non-closed EOS may result in improperly enforced boundary conditions, for example by allowing infinite accumulation of mass within the diffuse boundary and no compensating pressure gradient.

In some cases, however, it is desirable to use a simplified EOS.
For example, assuming that the fluid is both thermally and calorically perfect, the gamma-law EOS may be used, which only relates energy to pressure;
the temperature changes enter only implicitly and may be recovered through the thermal EOS if desired.
In the case of diffuse boundaries, such a simplified EOS is not capable of properly enforcing the non-penetration condition, instead allowing density to increase indefinitely without incurring a pressure gradient.
One way to rectify this issue is, naturally, to complete the EOS, though this may incur an undesirable degree of complexity for the remaining flow.

Another option is to augment the EOS with an additional term that specifically models the fluid interaction with the wall.
Numerous such EOS models exist and could readily be tailored to match a particular desired wall behavior (for instance, in the case of a semi-permeable membrane).
In the present context, a simple ``wall pressure'' augmentation is defined that is nothing other than a linear response to the normal velocity difference within the diffuse boundary
\begin{align}
  p_w = \lambda \,(\bm{u} - u_0\bm{n})\cdot\nabla\eta,
\end{align}
where $\lambda$ is a constitutive coefficient.
When the pressure is substituted into the linear momentum equation, expansion of the derivative leads to
\begin{align}
  \nabla \cdot ( \eta p_w )
  =
  \lambda \nabla\eta \cdot(\bm{u}-u_0\bm{n})\nabla\eta
  + \lambda \eta \nabla\big[(\bm{u}-u_0\bm{n})\cdot\nabla\eta\big] 
\end{align}
In the present work, only the first term is considered, and so the resultant effect of the additional ``wall pressure'' is the prescription of a linear momentum source term:
\begin{align}
  \dot{\mathbf{P}}_0\nabla\eta = \lambda \nabla\eta\,(\bm{u} - u_0\bm{n})\cdot\nabla\eta.
  \label{eq:lagrange_multiplier}
\end{align}
The wall strength, $\lambda$, controls the degree to which the wall is able to resist the flow, and the limit $\lambda\to\infty$ corresponds to a perfectly impenetrable wall.
In practice, and in the present work, $\lambda$ may simply be chosen to meet the desired tolerance.

\subsubsection{Tangential velocity and the no-slip condition}

Though often relatively simple to implement in the context of discrete boundaries, boundary conditions that specify a particular tangential velocity (this is subsequently referred to as no-slip) can present a unique challenge when implemented as a diffuse source term.
This difficulty may stem from the fact that it is not immediately obvious the ``flux'' to which a no-slip-type boundary condition corresponds.
It is proposed here that any prescribed tangential velocity most naturally corresponds to a flux of angular momentum into the flow, and is therefore most naturally represented as an angular momentum source term.
Consequently, the following form for the corresponding angular momentum source term is proposed: 
\begin{align}
     \dot{\mathbf{L}}_0 \mapsto -\mathbb{M}(\bm{u}^0\otimes\bm{n}),
\end{align}
in which $\bm{u}^0$ is the vector-valued prescribed velocity at the boundary.
This may be substituted into the corresponding diffuse conservation equation,
\begin{align}
  \eta\mathbf{T} - \mathbb{M}\nabla(\eta\bm{u}) = -\mathbb{M}(\bm{u}^0\otimes\nabla\eta).
\end{align}
Applying the product rule, and noting the requirement that $\mathbb{M}$ has major and minor symmetry,
\begin{align}
  \eta\mathbf{T} - \eta\mathbb{M}\nabla\bm{u} - \mathbb{M}(\bm{u}\otimes\nabla\eta) = -\mathbb{M}(\bm{u}^0\otimes\nabla\eta)
\end{align}
Rearranging to a form consistent with theorem 1,
\begin{align}
  \eta\big[\mathbf{T} - \mathbb{M}\nabla\bm{u}\big] = \big[\mathbb{M}(\bm{u}\otimes\bm{n}) - \mathbb{M}(\bm{u}^0\otimes\bm{n})\big]|\nabla\eta|
\end{align}
Application of theorem 1 returns $\bm{T}=\mathbb{M}\nabla\bm{u}$ in the interior, and
\begin{align}
  \mathbb{M}\big((\bm{u}-\bm{u}^0)\otimes\bm{n}\big) = \bm{0},
\end{align}
on the boundary.
Since $\mathbb{M}$ is singular in general, it is not possible to invert it to recover the matched tangential velocities.
Therefore, {\it the matched velocity is only enforced when $(\bm{u}-\bm{u}^0)\otimes\bm{n}$ is in the domain of $\mathbb{M}$}.
For example, in the case of inviscid flow when $\mathbb{M}=0$, then no part of $\bm{u}$ can be matched to $\bm{u}^0$ and the no-slip condition is effectively unenforceable---as expected.
Similarly, if $\mathbb{M}$ depends only on the deviatoric portion of $(\bm{u}-\bm{u}^0)\otimes\bm{n}$, then only the projection $(\bm{u}-\bm{u}^0)\cdot\bm{t} \ \ \ \forall \bm{t}\perp\bm{n}$ can be satisfied.

Similarly to the above case of non-penetration, it can now be seen that the no-slip condition (like no-penetration, a primal or essential condition), must be mediated through the constitutive behavior of the fluid.
Just as a proper EOS is necessary to enforce the non-penetration constraint, it is also the case that a suitable viscosity model is necessary to enforce matched prescribed velocities.

In practice, angular and linear momentum equations are combined and solved simultaneously.
Care must be taken at this stage due to the delicate positioning of $\eta$.
We begin by identifying the following form of angular momentum, which is:
\begin{align}
  \eta\mathbf{T}  = \eta\mathbb{M}\nabla\bm{u} + \mathbb{M}(\bm{u}-\bm{u}^0)\otimes\nabla\eta .
\end{align}
The momentum equation depends on the divergence of this expression, which must be calculated and expressed using index notation:
\begin{align}
  \frac{\partial}{\partial x_j}(\eta\mathrm{T}_{ij})
  &= \frac{\partial}{\partial x_j}\Big[\eta\mathbb{M}_{ijkl}u_{k,l} + \mathbb{M}_{ijkl}(u_k-u_k^0)\eta_{,l}\Big] \\
  &= 
    \eta\mathbb{M}_{ijkl}u_{k,lj} +
    \mathbb{M}_{ijkl}u_{k,l}\eta_{,j} +
    \mathbb{M}_{ijkl}u_{k,j}\eta_{,l} +
    \mathbb{M}_{ijkl}(u_k-u_k^0)\eta_{,jl}
\end{align}
This indicates that the passive linear momentum source terms must be updated to compensate for this specific form of the $\nabla\eta$ boundary term:
\begin{align}
  (\dot{\mathbf{P}}_0\nabla\eta)_i \mapsto \rho u_iu_j\eta_{,j} - \mathbb{M}_{ijkl}u_{k,l}\eta_{,j} - \mathbb{M}_{ijkl}u_{k,j}\eta_{,l} + p\,\delta_{ij}\eta_{,j}
\end{align}
When these have been substituted into the linear momentum equation, the result is
\begin{align}
  \eta\Big[\frac{\partial}{\partial t}(\rho u_i)  + \frac{\partial}{\partial x_j}\Big(\rho\,u_i\,u_j - \mathbb{M}_{ijkl} u_{k,l} + p\delta_{ij}\Big)\Big] = \mathbb{M}_{ijkl}(u_k-u_k^0)\,\frac{\partial^2\eta}{\partial x_j\partial x_l}
\end{align}
It is then easy to see that the normal momentum equations are recovered in the interior, indicating that the momentum source term has been correctly adjusted.
Theorem 1 cannot be directly applied to consider the boundary term, however, because it depends on the second derivative of $\eta$ rather than the gradient.
Further discussion about the effect of the angular momentum source term is presented in the results.

\subsubsection{Discussion on the role of angular momentum in diffuse boundary driven flow}

The above derivation of the no-slip condition shows that the tangential velocity boundary condition is naturally interpreted as a flux of angular momentum into the flow.
In sharp-interface continuum theory, angular momentum is typically reduced to the symmetry of the Cauchy stress tensor in the flow.
This raises interesting questions about the form taken by an angular momentum flux, which is considered in detail in this section.
(It is noted that this subsection is incidental to the flow of the rest of the paper and can be skipped without loss of continuity.
Additional details on the derivation are also presented in \cref{sec:angular_momentum_detailed_derivation})

The integral form of the conservation of angular momentum for a volume $\Omega(t)$ that is defined such that it is advected with the flow.
In the absence of body forces, this is written as
\begin{align}
  \frac{d}{dt}\int_{B(t)} (\bm{x} \times \rho\bm{u})\, dV
  &= \int_{\partial B(t)} (\bm{x} \times \bm{\sigma}\bm{n})\, dA
    + \int_{\partial B(t)} \dot{\mathbf{L}}\bm{n}\,dA \ \ \ \ \forall B(t) \in \phi(\Omega,t)
\end{align}
The terms correspond to a total change in the angular momentum in $B$ (on the left hand side); the total moment generated by surface tractions $\bm{\sigma}\bm{n}$ (the Cauchy stress $\bm{\sigma}=\mathbf{T}-p\,\bm{I}$ acting on the surface normal $\bm{n}$); and the total angular momentum flux (on the right side, respectively). 
The above must hold for every possible subregion in the domain $\Omega$, and this will be tacitly assumed subsequently.
Application of the Reynolds's transport theorem along with judicious use of the divergence theorem and product rule yields, after a straightforward yet tedious calculation (see Appendix A for more details)
\begin{align}
  \int_B\bm{x}\times\Big[\frac{\partial\rho\bm{u}}{\partial t} + \operatorname{div}\big(\rho\bm{u}\otimes\bm{u}-\bm{\sigma}\big)\Big]\,dV
  = \int_B\,\Big[\hat{\bm{g}}_i\epsilon_{ijk}\sigma_{jk} + \nabla\cdot\dot{\mathbf{L}}\Big]\,dV,
\end{align}
where $\hat{\bm{g}}_i$ is the i$^{\text{th}}$ basis vector in the coordinate system of choice, and the divergence $\nabla\cdot$ on tensors is assumed to contract the second index. 
Now, the governing equations are recast from the discrete boundary to the diffuse boundary frame, with the replacements $\Omega\to\mathbb{R}^n$ and $dV\mapsto\eta\,dV$, resulting in:
\begin{align}
  \int_{B}\bm{x}\times\Big[\eta\frac{\partial\rho\bm{u}}{\partial t} + \eta\operatorname{div}\big(\rho\bm{u}\otimes\bm{u}-\bm{\sigma}\big)\Big]\,dV
  = \int_{B}\Big[\epsilon_{ijk}\eta\sigma_{jk}\bm{g}_i + \nabla\cdot\dot{\mathbf{L}}\eta\Big]\,dV.
\end{align}
Light manipulation with the product rule allows the equation to be written as
\begin{align}
  \int_{B}\bm{x}\times\Big[\underbrace{\eta\frac{\partial\rho\bm{u}}{\partial t} + 
    \operatorname{div}\big(\eta(\rho\bm{u}\otimes\bm{u}-\bm{\sigma})\big)}_{\text{linear momentum}}
    -
    \operatorname{grad}\eta\cdot\big(\rho\bm{u}\otimes\bm{u}-\bm{\sigma}\big)
    \Big]\,dV = \int_{B}\Big[\epsilon_{ijk}\sigma_{jk}\bm{g}_i  + \nabla\cdot\dot{\mathbf{L}}\Big]\eta\,dV,
\end{align}
where \cref{eq:diffuse_mom} is identified and substituted, yielding the intermediate result:
\begin{align}
  \int_{B}\bm{x}\times\Big[\dot{\mathbf{P}} - \rho\bm{u}\otimes\bm{u}+ \bm{\sigma}\Big]\nabla\eta\,dV
  = \int_{B}\Big[\hat{\bm{g}}_i\epsilon_{ijk}\sigma_{jk} + \nabla\cdot\dot{\mathbf{L}}\Big]\,\eta\,dV.
\end{align}

The first observation is that the left hand side has an explicit dependence upon the coordinate variable $\bm{x}$.
Because of translation invariance, however, the equation must hold under the shift $\bm{x}\mapsto\bm{x}+\bm{b}$ for any $\bm{b}$.
This is only possible if the multiplier itself is zero, which leads to the observation
\begin{align}
  \Big[\dot{\mathbf{P}} + \bm{\sigma} - \rho\bm{u}\otimes\bm{u}\Big] \bm{n} = \bm{0}
\end{align}
at all points within the diffuse boundary.
In the typical case in which $\dot{\mathbf{P}}$ contains a ``passive'' advection component, this condition merely leads to the balance of the momentum source term with the Cauchy stress tensor.

Now consider the right hand side, which has been shown to be equal to zero.
Recalling that the above holds for all $\bm{B}$ in real space, one obtains the strong form relationship
\begin{align}
  \nabla\cdot\dot{\mathbf{L}} = -\bm{g}_i\epsilon_{ijk}\sigma_{jk}.
\end{align}
Usually, the Cauchy stress tensor is symmetric, causing the right hand side to vanish.
This implies that, when an angular momentum source term is present, the Cauchy stress tensor contains an {\it anti-symmetric component}.
That is, the effect of an angular momentum source term, such as a no-slip boundary condition, is to break the symmetry of the stress tensor to generate a localized diffuse couple moment within the boundary.
Moreover, this explains why the $\dot{\mathbf{L}}$ has tensorial character: it is mapped to a skew-symmetric tensor, and therefore has only 3 degrees of freedom (in 3D) rather than 9.

\subsubsection{Other boundary conditions}

Here, a basic treatment of other types of less common boundary conditions is presented, including ``natural'' boundary conditions, for purposes of demonstrating the versatility of the diffuse boundary framework.
One such boundary condition is the imposition of a ``traction'' at a surface, perhaps imposed by a force-controlled sliding wall.
If the known, prescribed traction is $\bm{\tau}_0$, then the corresponding diffuse source term is
\begin{subequations}\begin{align}
  \bm{\dot{P}}_0 &\mapsto \rho\bm{u}\otimes\bm{u}  + p - \bm{\tau}_0,
\end{align}\end{subequations}
which mixes passive and active terms.
(Other fluxes are defined as in \cref{eq:noboundary})
Substituting into the diffuse momentum equation, and rearranging, gives
\begin{align}
      \eta\Big[\frac{\partial (\rho u)}{\partial t} + \nabla\cdot(\rho\bm{u}\otimes\bm{u}) - \nabla\cdot(\mathbf{T})\Big] &= \Big(\mathbf{T} - \bm{\tau}_0 \Big)\nabla\eta,
\end{align}
recovering the discrete boundary condition in the sharp interface limit by application of Theorem 1.
Observe that the condition requires $\mathbf{T}=\bm{\tau}_0$, which requires that the fluid is capable of sustaining such a stress tensor.
For an inviscid flow, this condition is ill-defined and will lead to numerical instability.
Related to the applied traction boundary condition is the prescribed pressure, or ``wall'' boundary condition by modifying the linear momentum source term as
\begin{subequations}\begin{align}
  \bm{\dot{P}}_0 &\mapsto \rho\bm{u}\otimes\bm{u}  + p_0 - \mathbf{T}
\end{align}\end{subequations}
Then, once again substituting into the diffuse interface equations and applying the product rule,
\begin{subequations}\begin{align}
  \eta\left(\frac{\partial (\rho u)}{\partial t} + \nabla\cdot(\rho\bm{u}\otimes\bm{u} - \mathbb{M}\nabla\bm{u} p)\right) \eta
  &= (p - p_0) \nabla \eta 
\end{align}\end{subequations} 
As before, the unknown quantities are allowed to evolve according to the governing equations, leaving the prescribed pressure to drive the flow.
Applying the Navier-Stokes equations and applying the sharp interface limit as in previous sections, one can see that $p = p_0$ and $U(p,T) = U(p_0,T)$.
(Note that when $U(p,T)$ is linear, such as for a calorically perfect gas, this second equation is redundant.
For more complicated equations of state, nevertheless, both conditions are needed to ensure recovery of boundary conditions from the diffuse source terms.)

Finally, one may consider the imposition of an energy flux through the following definition of the energy flux term
\begin{align}
  \bm{\dot{e}}_0 &\mapsto \frac{1}{2}\rho (\bm{u}\cdot\bm{u})\bm{u} + (U(p,T) + p)\bm{u} -(\mathbb{M}\nabla\bm{u})\bm{u} + \bm{q}_0
\end{align}
(with mass and momentum sources unchanged).
One can follow the exact same procedure as above to recover $\bm{q} = \bm{q}_0$ at the boundary in the sharp interface limit.

More complex boundary condition types can be constructed by following the same procedure.
Moreover, one can apply this same process to other formulations of the flow equations.
The essential step is to identify the appropriate flux term to which the desired boundary condition corresponds, and then to design the source term to produce the desired condition in the sharp interface limit.

\section{Numerical considerations}\label{sec:numerical_considerations}

This section addresses some of the numerical techniques that are necessary in the implementation of the diffuse boundary theory.
The above theory is complete and has been shown exact in the sharp limit, but there are many complicating factors in the actual implementation that can lead to complexity or undesired behavior.

\subsection{Treatment of conserved variables as mixtures}

The replacement $\Omega\to\mathbb{R}^n$ prompts the question of what to do with the field variables in the region outside of the original domain.
Since the diffuse boundary theory amounts to $\bm{0}=\bm{0}$ in $\mathbb{R}^n\setminus\Omega_\epsilon$, there are no equations to govern the behavior of the field variables, and their values are therefore undefined.
While this does not ultimately affect the theory, the lack of definition can result in numerical complications, especially at the exterior edge of the diffuse domain, where spurious gradients are left untempered.
In this formulation, it is desirable to work exclusively with the conserved fields of density, momentum, and energy, as they are the most conducive to the flux source term formulation.
(Primitive variables, when needed, are simply derived from the conserved quantities.)

To ensure that all solution fields are well-defined everywhere, define ``mixed'' fields, $\bar{\rho}$, $\bar{\bm{p}}$, and $\bar{E}$, corresponding to the following mixture rule
\begin{align}
  \bar{\rho} &= \eta\,\rho_f + (1-\eta)\rho_s
  &
    \bar{\bm{p}} &= \eta\,\bm{p}_f + (1-\eta)\bm{p}_s
  &
  \bar{E} &= \eta\,E_f + (1-\eta)E_s 
\end{align}
where $\rho_f$, $\bm{p}_f$, and $E_f$ are the conserved field solutions to the diffuse viscous flow equations (the ``fluid'' values), and $\rho_s$, $\bm{p}_s$, and $E_s$ are externally defined states of the solution in the exterior of the fluid domain (the ``solid'' values).
The choice of the solid field variables is arbitrary, and can even be determined by a completely separate set of governing equations.
In this work, they are simply set to constant values.
(Nevertheless, it is important to recognize that these values must be compatible with the numerical methods in the fluids solver; setting $\rho_s\to0$, for instance, can trigger numerical error in a Riemann solver unless special care is taken to avoid this issue.)

The advantage of the mixed field formulation is that they are well-defined everywhere in $\mathbb{R}^n$, even though the fluid fields are only defined in the fluid domain (and undefined elsewhere).
It is essential to work with well-defined fields, because lack of definition can lead to numerical complications, especially outside of the fluid region.
Then, the challenge is how to formulate the theory to rely only on the mixture fields and the solid fields, eschewing any explicit dependence on the fluid fields.

Starting with the density field, we take the time derivative of the mixed density field $\bar{\rho}$, and perform a straightforward application of the product rule.
Then, by substituting again the definition of $\bar{\rho}$, the following expression is obtained:
\begin{align}
  \frac{\partial\bar{\rho}}{\partial t} =
  \eta\frac{\partial\rho_f}{\partial t} +
  (1-\eta) \frac{\partial\rho_s}{\partial t} + 
  \frac{1}{\eta}\frac{\partial\eta}{\partial t}\Big(\bar{\rho} - \rho_s\Big)
\end{align}
This equation represents the relationship between the time evolution of the mixed field and the time evolution of the fluid field, which is the first term on the right hand side.
The second two terms on the right hand side enforce, respectively, consistency between the mixed and solid field if the solid field is changing in time; and any time-dependent changes in the domain. 

The fluid fields $\rho_f$, $\bm{p}_f$, and $E_f$ are not tracked explicitly, and so must be calculated on-the-fly in the computation of their $\eta$-multiplied partial derivatives as determined from the governing equations.
They can be calculated simply by inverting the mixture rule to obtain
\begin{align}
  \rho_f = \frac{1}{\eta}\big(\bar{\rho} - (1-\eta)\rho_s\big),
\end{align}
and so on, for $\bm{p}_f$ and $E_f$.
Clearly, fluid values are well-defined within the support of $\eta$, but undefined in the solid region.
Theoretically, this is problem free, since the derivative itself is multiplied by $\eta$.
Numerical issues may arise, however, especially if using solvers that are agnostic to the diffuse interface method.
Riemann solvers that are not very robust may cause errors when seemingly problematic flow values are encountered.
These issues are effectively remediated by (1) adding a small value ($\zeta<<1$) to $\eta$ to prevent numerical error when $\eta \rightarrow 0$ in the denominator; and (2) choosing a cutoff value of $\eta$ under which all fluxes are explicitly set to zero.
In the work presented here, both (1) and (2) proved effective at stabilizing the solver and preventing unwanted information propagating through the solid region.
Naturally, many other stabilizing measures may exist; these are left to future work.

As only conserved variables are tracked in this method, primitive variables (velocity, pressure) are only calculated as needed or during post-processing.
Unlike the conserved variables, for which a mixture rule provides global well-posedness, the primitive variables are necessarily undefined except in the fluid region.
Therefore, when primitive values are calculated, they are usually multiplied by $\eta$ to eliminate misleading values within the solid.

\section{Numerical implementation}\label{sec:implementation}

This work uses an advection-diffusion, Roe approximate Godunov, finite volume scheme to solve the diffuse interface equations in two dimensions \cite{godunov1959finite,roe1981approximate,laney1998computational}.
This solver leverages equivalent wave speeds, and eigenvectors shared between the Navier-Stokes equations and the diffuse interface system, as demonstrated in the work of \textcite{schmidt2022self}.
The first order explicit Euler method is employed for time integration, and second order accurate finite difference stencils are used for second order spatial derivatives.

All algorithms are implemented in the authors' in-house open-source code Alamo \cite{runnels2021massively}, which is built upon the AMReX framework \cite{zhang2019amrex}.
The block-structured AMR capability provided through AMReX enables sufficient resolution of the diffuse boundaries without prohibitive computational cost.
(All of the simulations presented here are run either on a desktop computer or a single compute node, with none of the runs exceeding 24 hours.)
Temporal substepping, with dynamically adjusted timesteps, further reduce the numerical expense. 
Alamo uses a multiple-inheritance polymorphic integrator system to manage multiphysics applications.
This enables the efficient coupling of the present flow solver to a number of other multiphysics applications that are already implemented in Alamo.

\section{Examples}\label{sec:examples}

The remainder of this work is dedicated to a variety of examples that demonstrate diffuse source terms for a range of boundary behavior. 
Initial examples are canonical flow cases that are presented to verify the approach and demonstrate convergence.
The latter examples demonstrate the method's versatility and potential for coupling to complex boundary-driven problems.

\subsection{Parallel shear flows}

The focus of this section is the verification of the diffuse angular momentum source formulation of the tangential velocity (no-slip) boundary condition.
For this, two classic examples of shear flow are investigated: Poiseuille flow and Couette flow between two infinite and parallel plates.
(It is noted that the previous work has already demonstrated the ability of the method to capture other types of boundary conditions, so they are not repeated here.)

\begin{figure}
    \begin{minipage}[t]{0.46\linewidth}
        \vspace{0pt}
        \begin{subfigure}[t]{\linewidth}
            \centering
            \includegraphics[width=\linewidth,clip,trim=0cm 1.3cm 0cm 1.2cm]{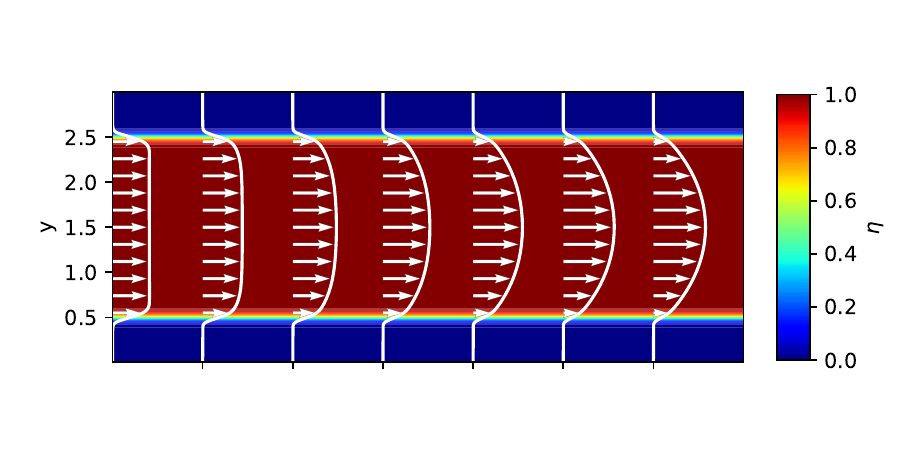}
            \caption{Order parameter $\eta(\bm{x})$}
            \label{fig:channelfieldplot_order}
        \end{subfigure}
        \begin{subfigure}[t]{\linewidth}
            \centering
            \includegraphics[width=\linewidth,clip,trim=0cm 0.4cm 0cm 1.2cm]{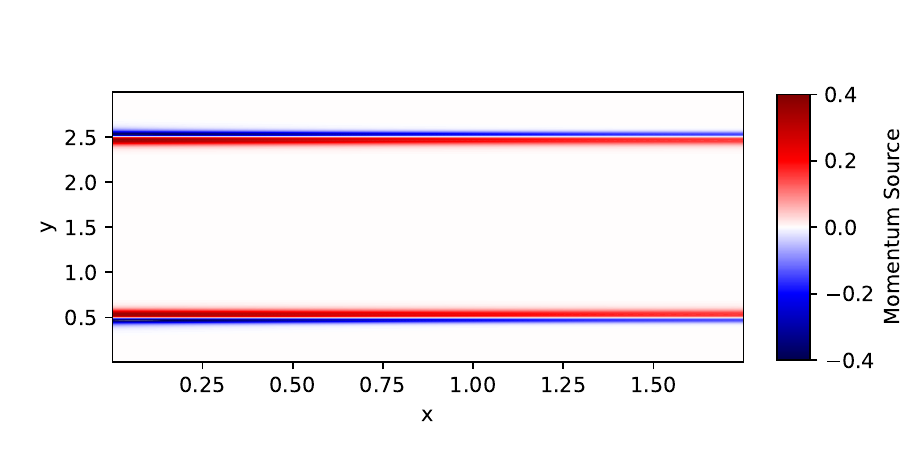}
            \caption{$x$ component of the viscous source term}
            \label{fig:channelfieldplot_source}
        \end{subfigure}
    \end{minipage}
    \hfill
    \begin{minipage}[t]{0.53\linewidth}
        \vspace{0pt}
        \begin{subfigure}[t]{\linewidth}
            \centering
            \includegraphics[width=\linewidth,clip,trim=0cm 0.4cm 0cm 0.2cm]{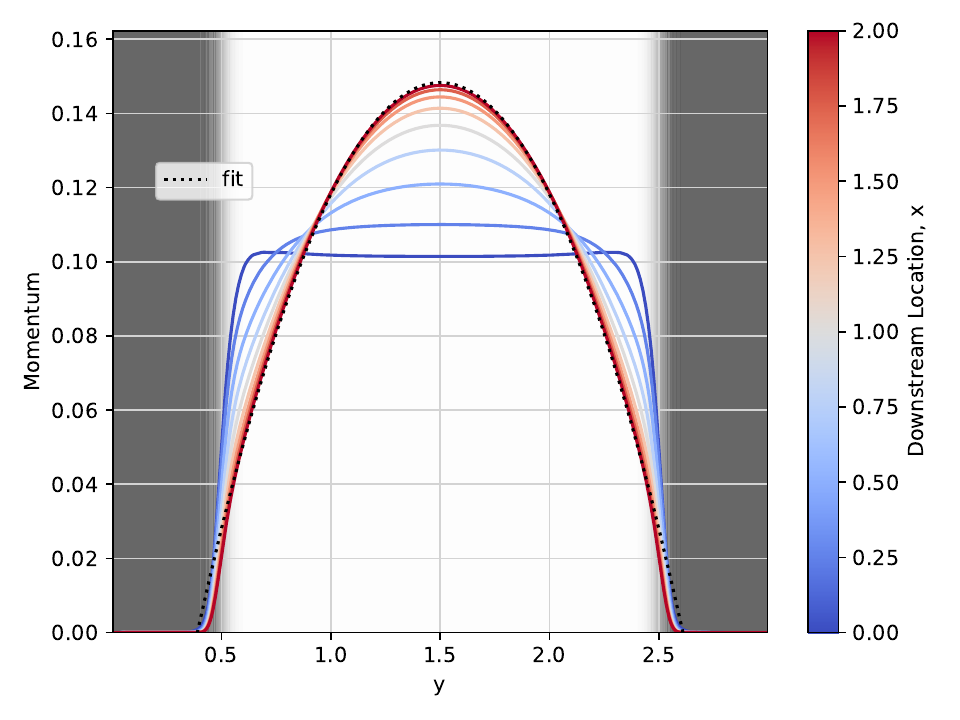}
            \caption{Entrance region}
            \label{fig:channelfieldprofile_entrance}
        \end{subfigure}
    \end{minipage}
    \caption{Flow through the entrance region of a channel with a diffuse thickness of $\epsilon=0.05$. Contours of the order parameter $\eta$, with arrows indicating the development of the flow profile (a).  Doublet structure of the $x$ component of the viscous source term (b). Superimposed momentum profiles from entrance to fully developed flow (c).}
    \label{fig:channelfieldplot}
\end{figure}

\subsubsection{Poiseuille flow}

In Poiseuille flow, often referred to as channel flow or 2d pipe flow, two boundary layers form at the channel walls and thicken until they eventually meet at the center of the channel.
At this point, the transverse velocity profile is parabolic, the pressure gradient in the flow direction is constant, and the flow is fully developed \cite{white2005viscous}.
The implicit method presented in this work accurately captures the development of the flow, which takes place in the channel as implicitly defined by the smoothly varying order parameter $\eta$, resulting in the development of parabolic flow through the cross section (\cref{fig:channelfieldplot_order}).
The no-slip condition at the solid boundary is enforced solely by the viscous source terms within the diffuse region, which can be seen as a doublet along the channel walls (\cref{fig:channelfieldplot_source}).
The momentum source can be imagined as an injection of mass in the diffuse region to counter the flow at the wall and enforce the no-slip condition.
An unsurprising repercussion of this implementation is that momentum is non-physically introduced into the system; thus, a Lagrange multiplier is used to maintain conservation of momentum (\cref{eq:lagrange_multiplier}).
This Lagrange term can be set arbitrarily high depending on the tolerance required for the simulation.
The flow is shown to converge at the end of the entrance region to a steady parabolic profile as derived in the analytic solution (\cref{fig:channelfieldprofile_entrance}).

It has been discussed earlier that the diffuse method proposed in this work converges to the sharp interface solution as the diffuse thickness decreases.
This convergence is exemplified qualitatively by comparing the fully-developed flow profiles with varying diffuse thickness and are plotted against the analytic solution provided by \textcite{anderson1991fundamentals} and shown for reference in \cref{eq:channelflow} (\cref{fig:channelfieldprofile_comp}).
Note that this equation is multiplied by the local density to obtain momentum,
\begin{equation}
    u = \frac{1}{2\mu}\left(\frac{dp}{dx}\right)(y^2-Dy),
    \label{eq:channelflow}
\end{equation}
where $u$ is the axial velocity, $\mu$ is the dynamic viscosity of the fluid, $dp/dx$ is the axial pressure gradient, $D$ is the channel height, and $y$ is the transverse distance along the channel.
This equation is technically only valid for incompressible flow, so the simulations are performed at low Mach numbers, which become incompressible in the limit as $M\to 0$ \cite{white2005viscous}.
Note that the gray regions indicate the solid walls with a diffuse thickness of $\epsilon=0.05$, although multiple values of $\epsilon$ are plotted.
This is simply to give an indication of the location of the diffuse region, but the impact of the diffuse thickness is apparent as $\epsilon$ increases and the flow widens.

To quantify convergence, a simple error is computed by taking the relative difference in the centerline momentum of the analytic solution and the simulation results, given by
\begin{equation}
    \varepsilon_\mathrm{rel} = \frac{|\max({P_\mathrm{exact})-\max(P_\mathrm{approx})|}}{|\max(P_\mathrm{exact})|},
    \label{eq:relerror}
\end{equation}
where $P$ is the $x$ component of momentum and $\varepsilon$ is the true relative error, not to be confused with the diffuse thickness, $\epsilon$.  The error is shown to follow a power law relationship with the diffuse thickness (\cref{fig:channel_relerror}) according to the equation:
\begin{equation}
    \varepsilon_{rel}=0.864\epsilon^{0.863},
\end{equation}
with a coefficient of determination of $0.991$.
Although the measurement of fit is reasonably high, since a limited number of data points were used, it would be unwise to assume this relationship is correct based on these results alone.
(It will be shown in the next section that with Couette flow, however, that this assumption is indeed justified.)

\begin{figure}[b]
  \centering
  \begin{subfigure}{0.48\linewidth}
    \centering
    \includegraphics[height=5.5cm]{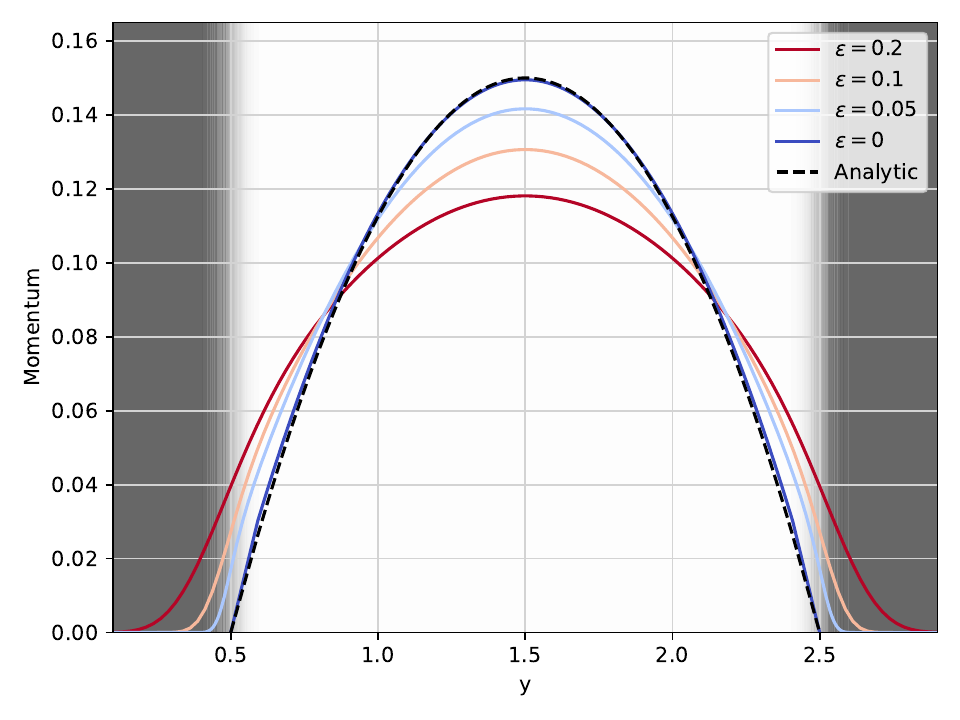}
    \caption{Fully developed}
    \label{fig:channelfieldprofile_comp}
  \end{subfigure}
  \begin{subfigure}{0.48\linewidth}
    \centering
    \includegraphics[height=5.5cm]{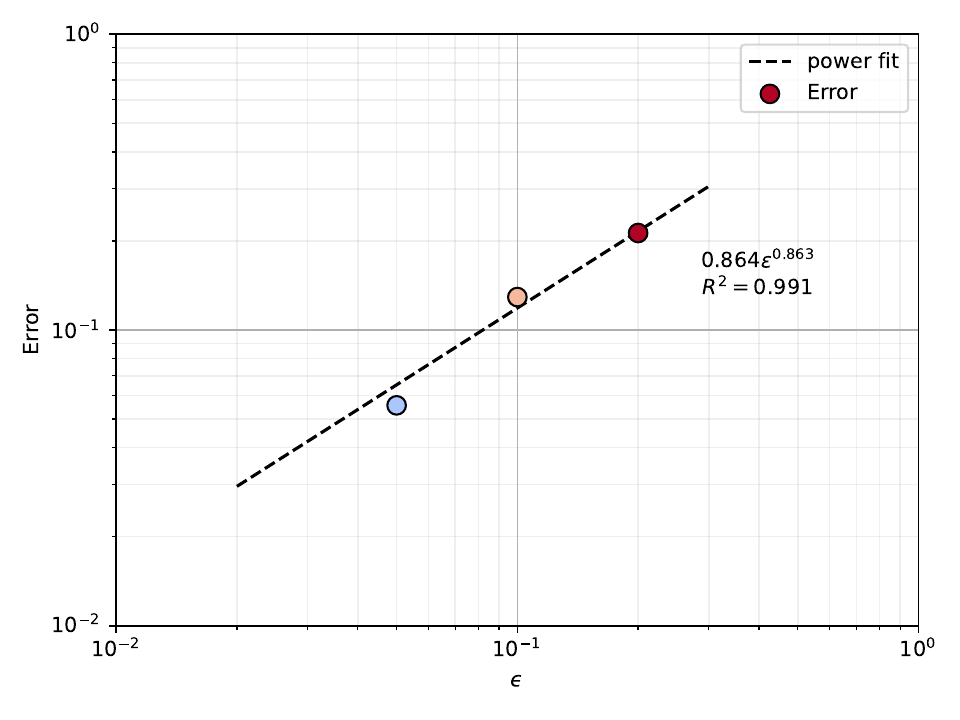}
    \caption{Relative error vs diffuse thickness}
    \label{fig:channel_relerror}
  \end{subfigure}
  \label{fig:channelfieldprofile}
  \caption{
    Convergence towards the analytic solution of fully developed channel flow between two parallel plates is shown. Momentum profiles at varying diffuse interface thicknesses, $\epsilon$ (a) and the power law relationship between relative error and diffuse interface thickness (b).
  }
\end{figure}

\begin{figure}
  \centering
  \begin{subfigure}{0.48\linewidth}
    \centering
    \includegraphics[width=\linewidth,clip,trim=0.5cm 0.5cm 0.5cm 0.5cm]{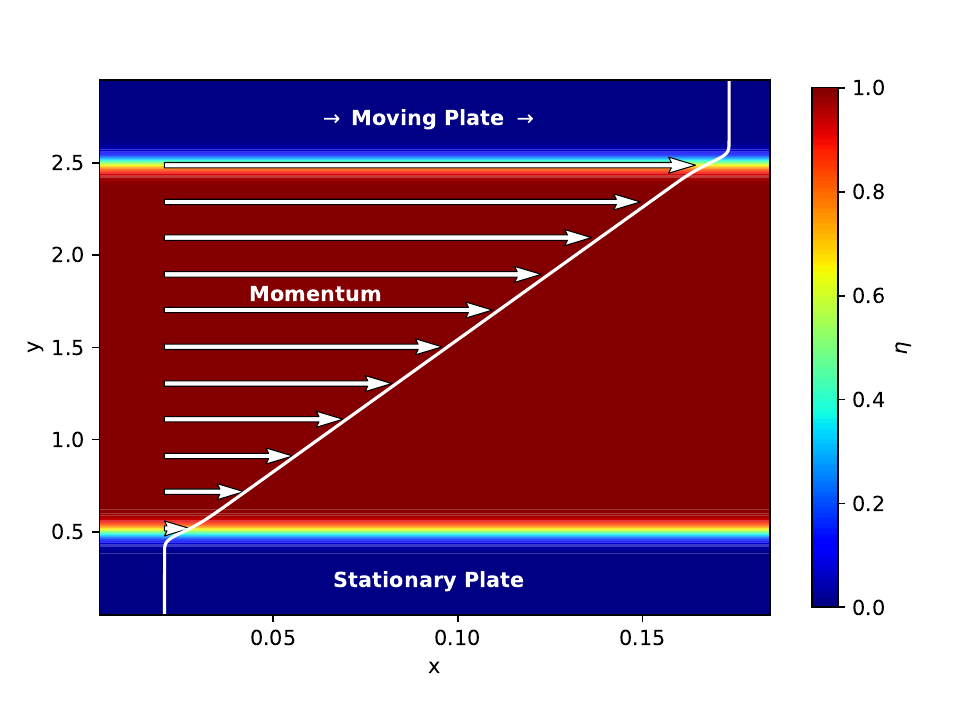}
    \caption{Order parameter $\eta(\bm{x})$}
    \label{fig:couettefieldplot_momentum}
  \end{subfigure}\hfill
  \begin{subfigure}{0.48\linewidth}
    \centering
    \includegraphics[width=\linewidth,clip,trim=0.5cm 0.5cm 0.5cm 0.5cm]{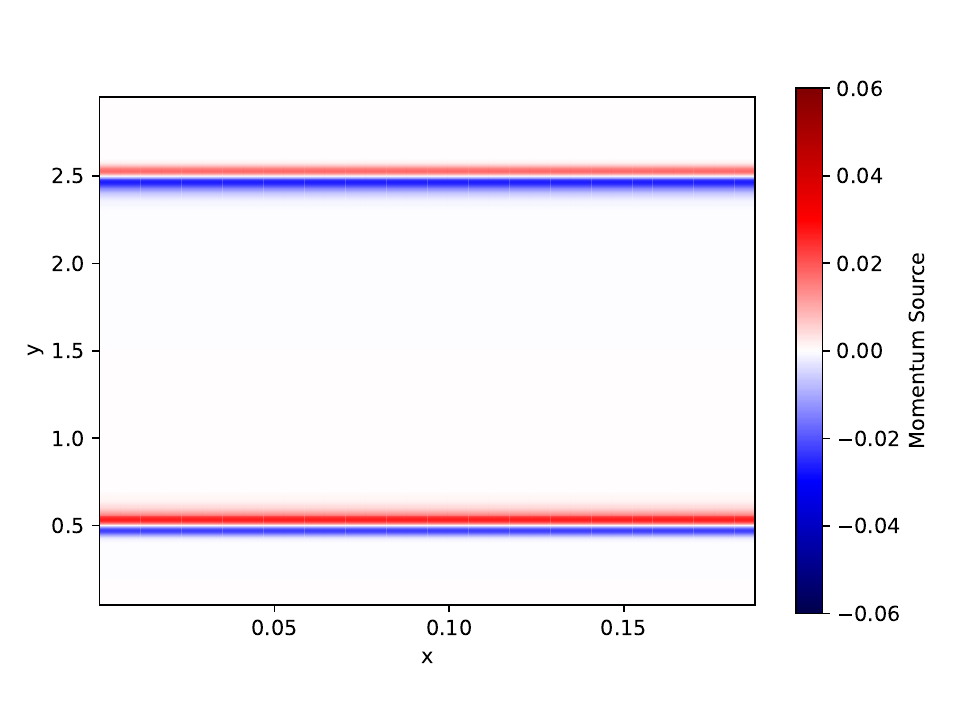}
    \caption{$x$ component of the viscous source term}
    \label{fig:couettefieldplot_source}
  \end{subfigure}
  \caption{Couette flow with a diffuse thickness of $\epsilon=0.05$. Contours of the order parameter $\eta$, with arrows indicating the flow profile (a).  Doublet structure of the $x$ component of the viscous source term (b).}
  \label{fig:couettefieldplot}
\end{figure}
\begin{figure}
  \centering
  \begin{subfigure}{0.48\linewidth}
    \includegraphics[height=5.5cm]{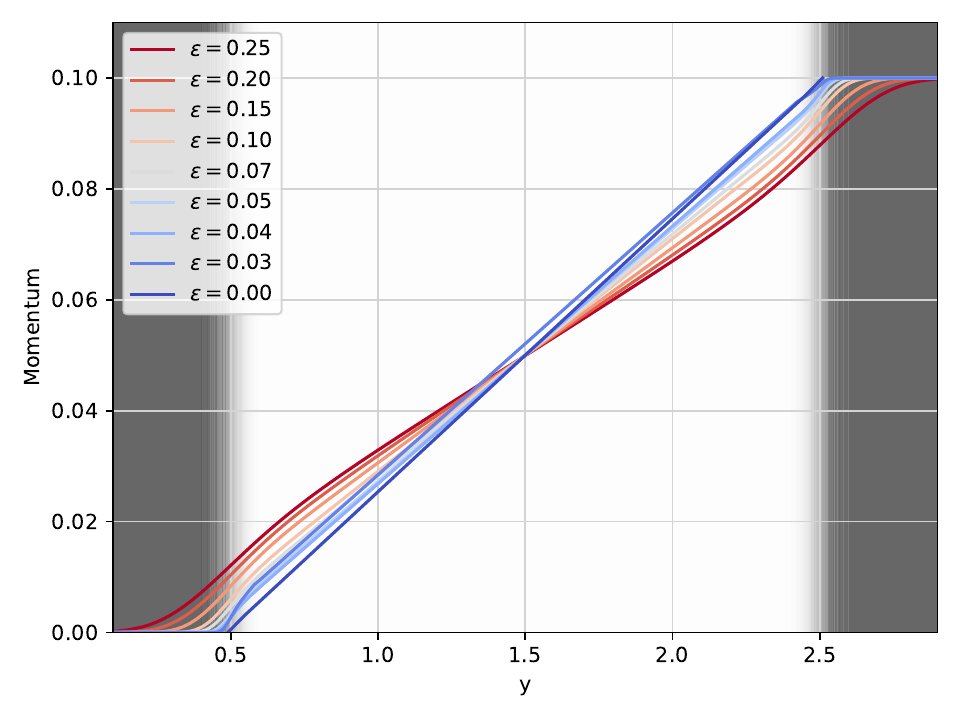}
    \caption{Fully developed}
    \label{fig:couette_convergence}
  \end{subfigure}
  \begin{subfigure}{0.48\linewidth}
    \includegraphics[height=5.5cm]{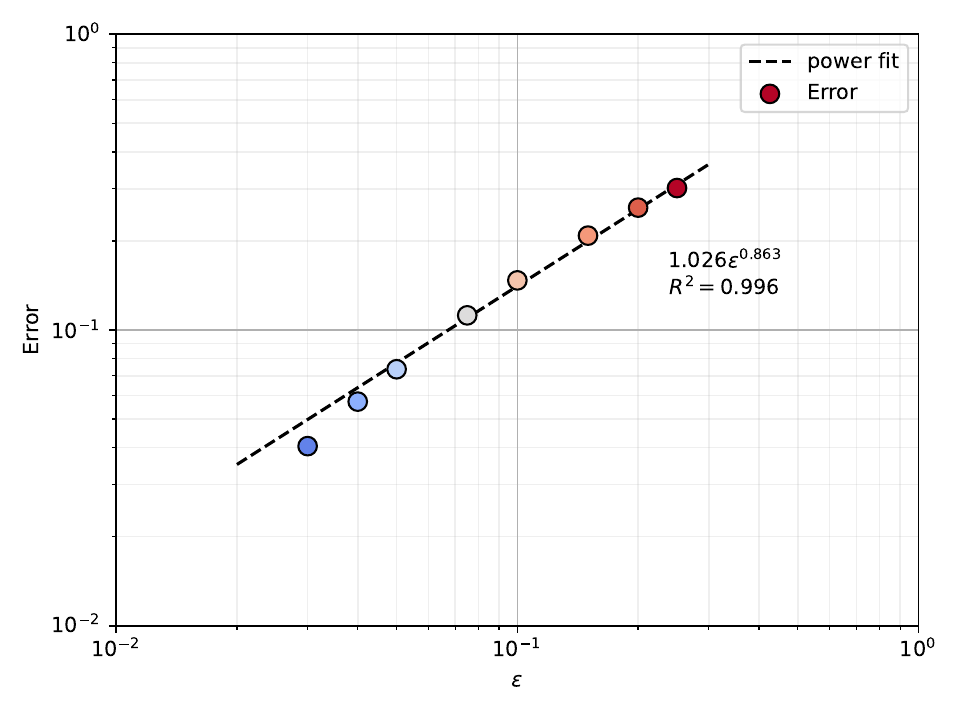}
    \caption{Relative error vs diffuse thickness}
    \label{fig:couette_error}
  \end{subfigure}
  \caption{
    Convergence towards the analytic solution of Couette flow between two parallel plates is shown. Steady-state momentum profiles at varying diffuse interface thicknesses, $\epsilon$ (a) and the power law relationship between relative error and diffuse interface thickness (b).
  }
  \label{fig:couettefieldprofile}
\end{figure}

\subsubsection{Couette flow}

This section considers the viscous flow between two parallel surfaces moving at different velocities.
Once steady-state flow is achieved, the momentum flow profiles traversing the channel are linear \cite{white2005viscous}, which is correctly realized with the current approach (\cref{fig:couettefieldplot_momentum}).
Notice that even within the diffuse region, the linear profile is largely maintained.
As in the channel flow case, the no-slip condition is enforced by the diffuse viscous source terms (\cref{fig:couettefieldplot_source}).
It should be noted that the sign of the source doublet along the upper wall is flipped compared to the channel flow case.
This correctly illustrates that the source term is causing the flow velocity to increase to match the velocity of the moving surface, whereas for channel flow the momentum source is causing the flow to have zero tangential velocity at the surface.

Qualitative convergence is shown by considering the steady-state velocity profiles with decreasing values for $\varepsilon$ (\cref{fig:couette_convergence}), illustrating how the steady-state solution converges toward the sharp-interface solution as the diffuse thickness decreases.
Similarly to the channel flow convergence plot, the gray region represents only a single value of diffuse thickness, $\epsilon=0.05$, although momentum profiles are shown for multiple values of $\epsilon$.
Nevertheless, the impacts of the diffuse thickness are evidenced by the slope of the steady-state profile.
Error is calculated the in the same way as in \cref{eq:relerror}, except with respect to the slope of the steady-state momentum profiles rather than centerline momentum.
It can be seen (\cref{fig:couette_error}) that the error again follows a power law relationship with the diffuse thickness, given by
\begin{equation}
    \varepsilon_\mathrm{rel} = 1.026\epsilon^{0.863}
\end{equation}
Notably, the exponent in this calculation was  the same as determined in the channel flow case, further supporting the reliability of the power law relationship.
It is clear from inspection that this exact match is coincidental (one can determine by inspection a fairly large standard deviation, especially in the Couette flow case); however, the consistency between these two cases builds confidence that the diffuse boundary method converges appropriately.

\subsection{Vortex shedding}

\begin{figure}
  \centering
  \begin{subfigure}{0.48\linewidth}
    \centering
    \includegraphics[width=0.75\linewidth]{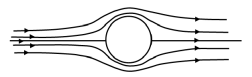}  
    \caption{Re$<<1$}
  \end{subfigure}
  \begin{subfigure}{0.48\linewidth}
    \centering
    \includegraphics[width=0.75\linewidth]{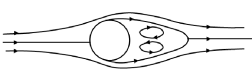}  
    \caption{Re$\approx10$}
  \end{subfigure}
  \begin{subfigure}{0.48\linewidth}
    \centering
    \includegraphics[width=0.8\linewidth]{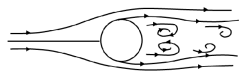}
    \caption{Re$\approx60$}
  \end{subfigure}
  \begin{subfigure}{0.48\linewidth}
    \centering
    \includegraphics[width=0.75\linewidth]{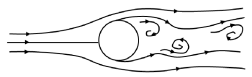}%
    \caption{Re$\approx1000$}
  \end{subfigure}
  \caption{Flow structures around a cylinder in uniform flow  \cite{hughes1999schaum}}
  \label{fig:cylinder}
\end{figure}

Vortex shedding and the resulting vortex street is a natural phenomenon that occurs in viscous flow around bluff bodies, named after Theodore von K\'arm\'an, whose original works were published in German in 1911--1912 \cite{karman1911nach,karman1912nach} and published in English in 2013 \cite{karman2013on1,karman2013on2}.
For the particular case of uniform flow around a circular cylinder, the flow field can vary dramatically for differing Reynolds numbers as illustrated by \textcite{hughes1999schaum} (\cref{fig:cylinder}).
At very low Reynolds numbers ($\text{Re}<<1$), the flow is symmetric top to bottom; a slight deviation exists front to back due to the development of a boundary layer.
As the Reynolds number increases, the inertial forces begin to overcome the viscous forces and the boundary layer separates, forming two symmetric, counter-rotating vortices behind the cylinder, but the flow is still steady.
Further increasing the Reynolds number above $\text{Re}\approx60$ results in unstable, but highly repeatable, asymmetric vortex oscillations that then convect away from the cylinder in what is referred to as the von K\'arm\'an vortex street (see Schlichting-1979 \cite{schlichting1961boundary}).
This instability is magnified as the Reynolds number is further increased, and the onset of oscillations become more sensitive to small perturbations.
Physically, asymmetry is always present due to slight variations in shape or surface roughness, etc.; numerically, even small rounding or truncation errors can be sufficient to induce this asymmetric vortex shedding.

The diffuse interface method is used to simulate the cylinder example just described with a Reynolds number of $\mathrm{Re}=240$ (\cref{fig:cylinder_alamo}).
The order parameter is set to unity everywhere except for the cylinder region and the diffuse boundary surrounding it; adaptive mesh refinement ensures that the diffuse boundary is suitably refined without incurring excess resolution everywhere in the domain (\cref{fig:cylinder_mesh}).
At steady state (\cref{fig:cylinder_vortex}), the vorticity develops qualitatively in the same pattern as previously shown.
The term "steady state" may be misleading in the present context as it is not referring to a flow that is constant in time since the vortex shedding is inherently time dependent; rather, it is referring to the periodic behavior that is consistent and repeatable over time.

% the experimentally measured velocity field, from say Hiemenz-1911 \cite{hiemenz1911grenzschicht} (see Schlichting-1979 \cite{schlichting1961boundary}).

% \textcite{daneshi2016numerical} mentions that flow separation in the sub-critical regime (i.e., the boundary layer remains laminar until the point of separation, $300\lesssim \mathrm{Re}\lesssim 2\times10^5$), boundary layer separation occurs approximately $80\text{\textdegree}$ from the front stagnation point.
The momentum source terms in the $x$ and $y$ directions (\cref{fig:cylinder_source}) indicate flow separation occurs at a point ahead of the top and bottom of the cylinder---as expected.
For instance, the change in sign of the $x$-momentum source term is representative of flow reversal along the wall due to the flow separation.

\begin{figure}
  \begin{subfigure}{\linewidth}
    \includegraphics[height=3.7cm]{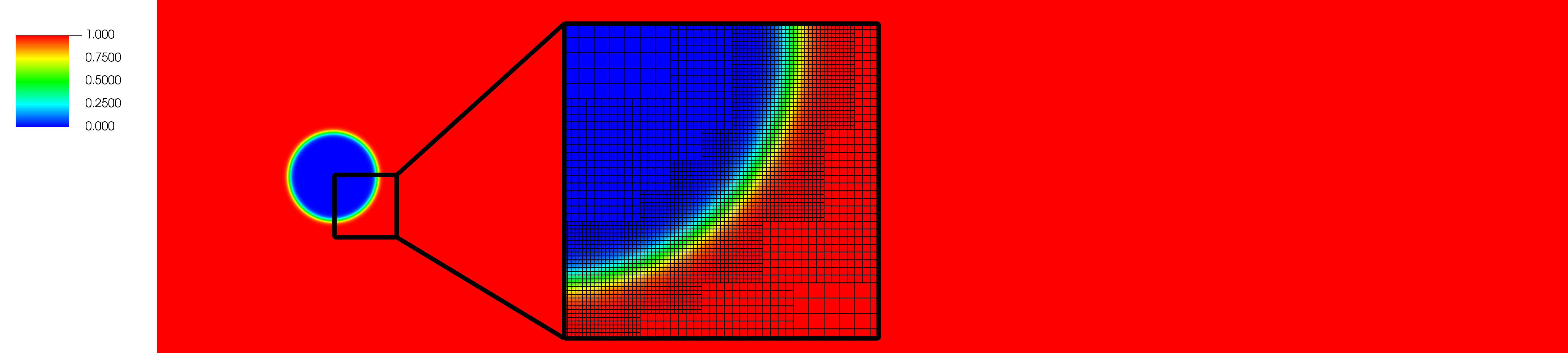}  
    \caption{Order parameter $\eta$; popout shows mesh refinement}
    \label{fig:cylinder_mesh}
  \end{subfigure}
  \begin{subfigure}{\linewidth}
    \includegraphics[height=3.7cm]{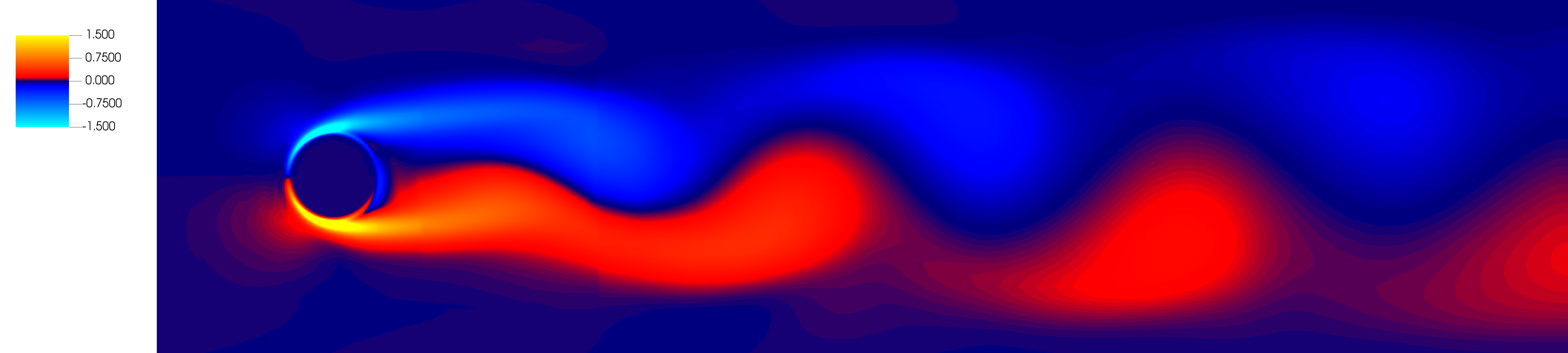}  
    \caption{Vorticity $\omega$}
    \label{fig:cylinder_vortex}
  \end{subfigure}
  \begin{subfigure}{\linewidth}
    \includegraphics[height=3.7cm, clip, trim=0cm 0cm 96cm 0cm]{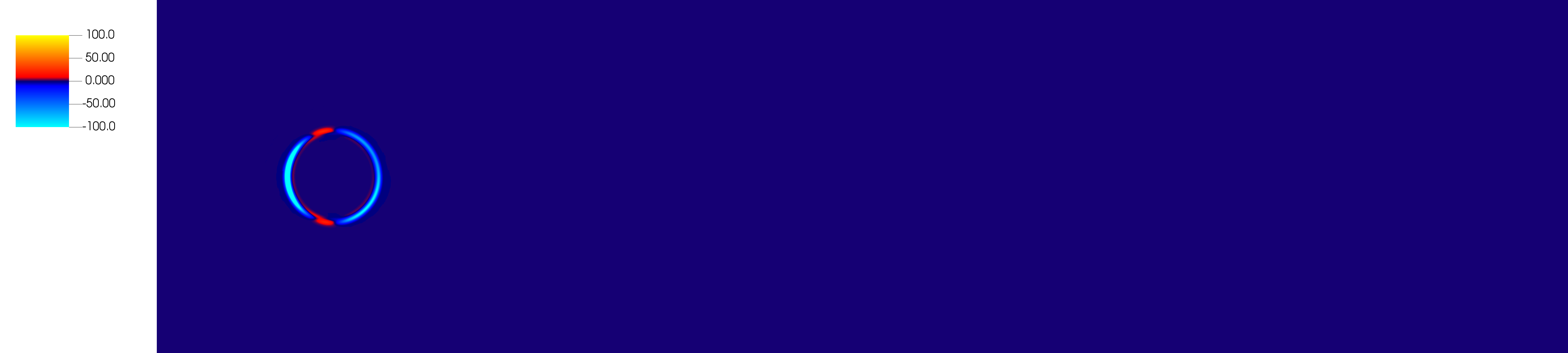}% 
    \includegraphics[height=3.7cm, clip, trim=20cm 0cm 96cm 0cm]{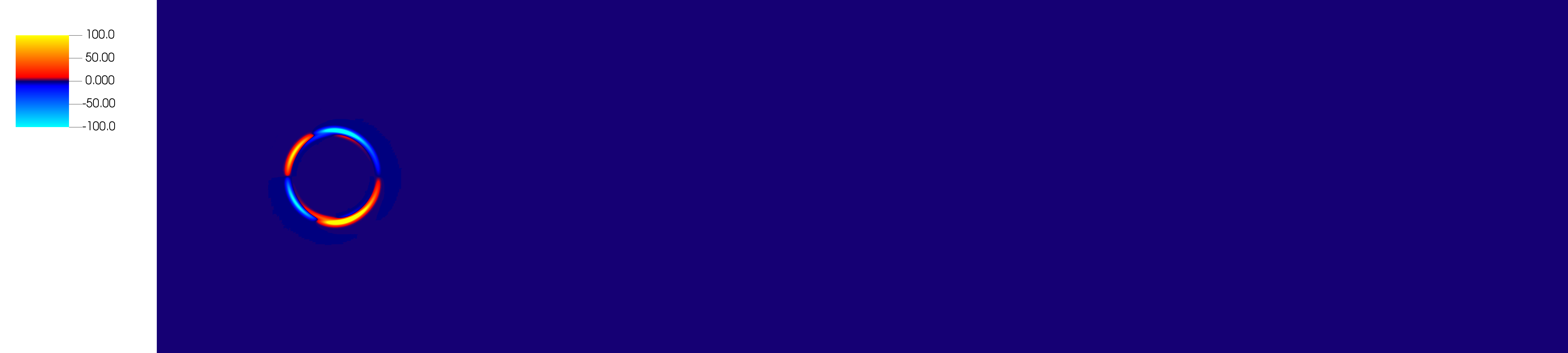}
    \caption{Momentum source terms in x direction (left) and y direction (right)}
    \label{fig:cylinder_source}
  \end{subfigure}
  \caption{Vortex shedding for flow around a cylinder.}
  \label{fig:cylinder_alamo}
\end{figure}

\subsection{Dendrite growth in forced flow}

\begin{figure} 
  \newlength{\dendritefigheight}%
  \setlength{\dendritefigheight}{2.6cm}%
  \begin{subfigure}{0.33\linewidth}
    \includegraphics[height=\dendritefigheight,clip,trim=4.1cm 0 0 0]{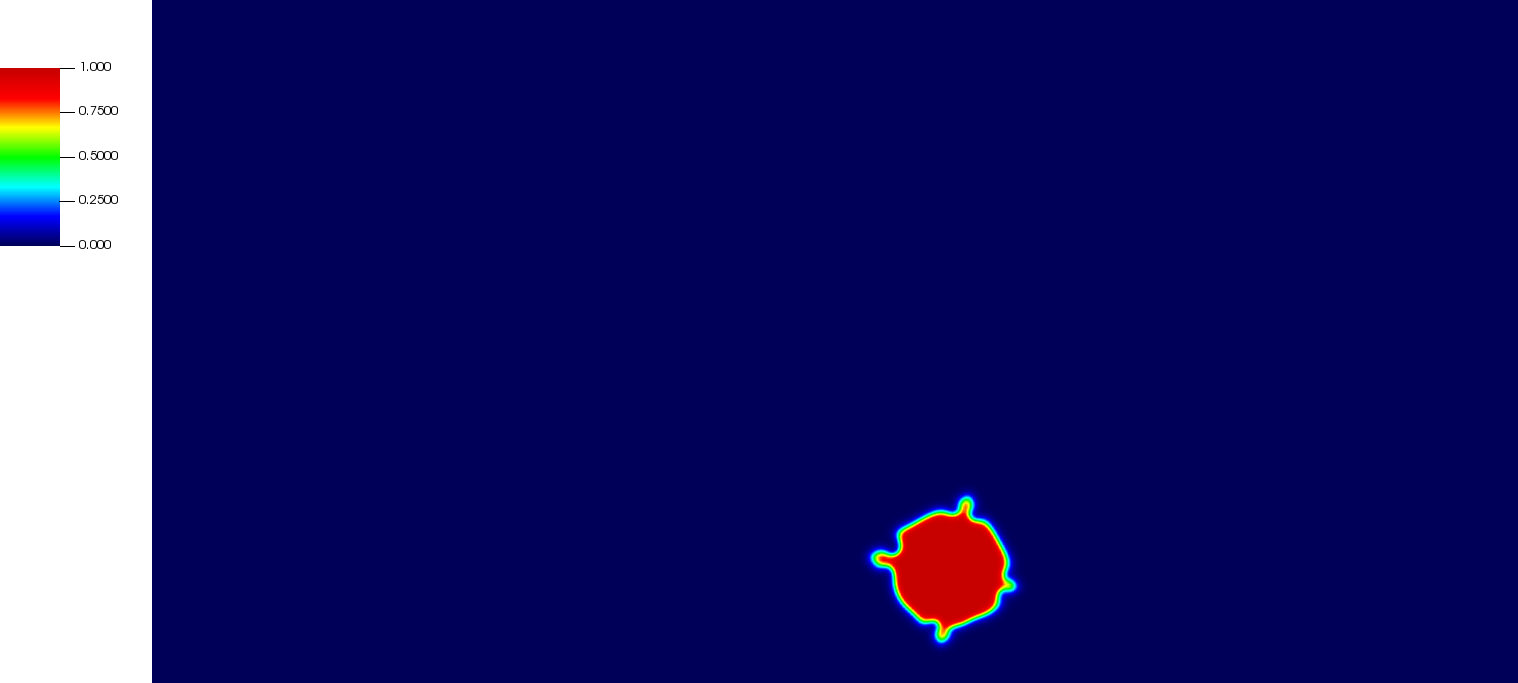}%
    \hspace{-5cm}%
    \includegraphics[height=2.5cm]{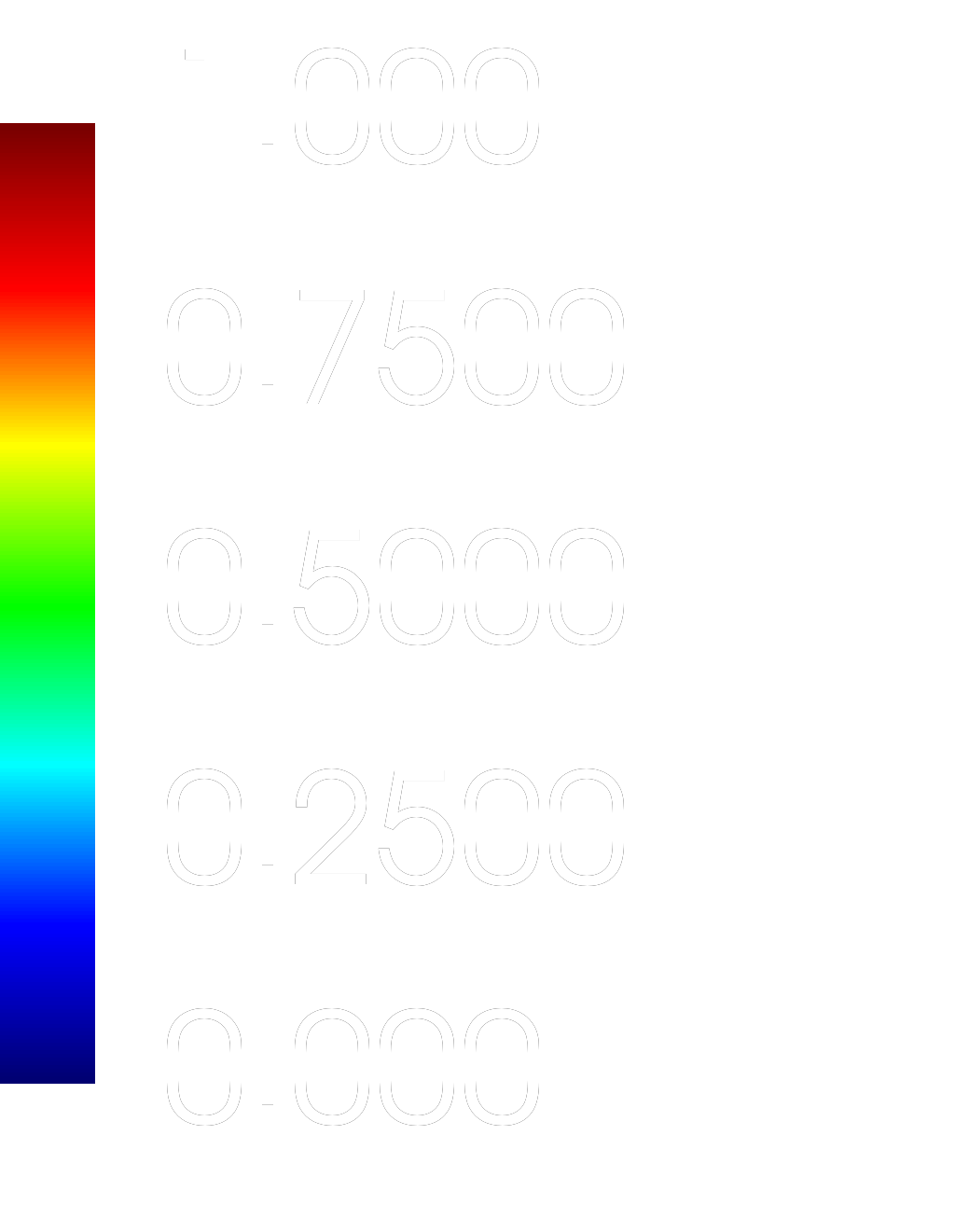}
    \includegraphics[height=\dendritefigheight,clip,trim=4.1cm 0 0 0]{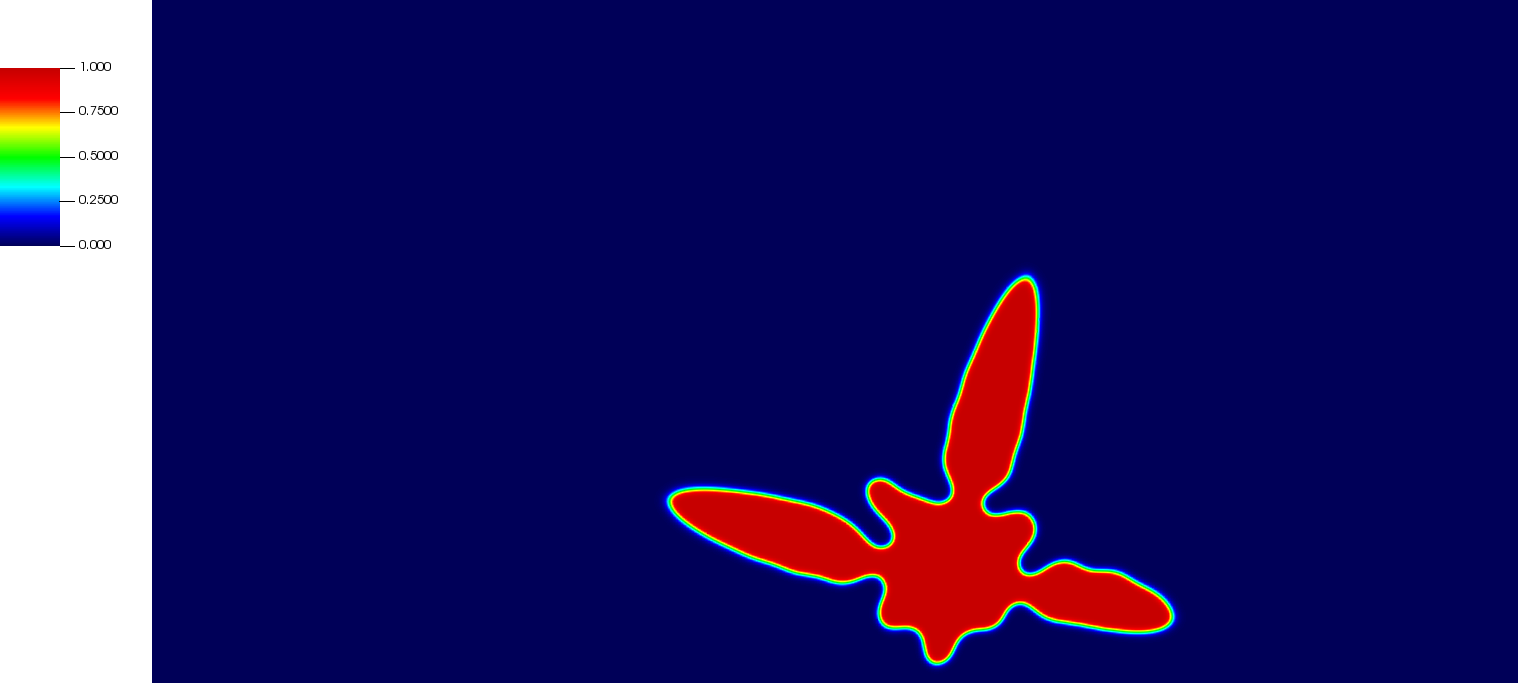}
    \includegraphics[height=\dendritefigheight,clip,trim=4.1cm 0 0 0]{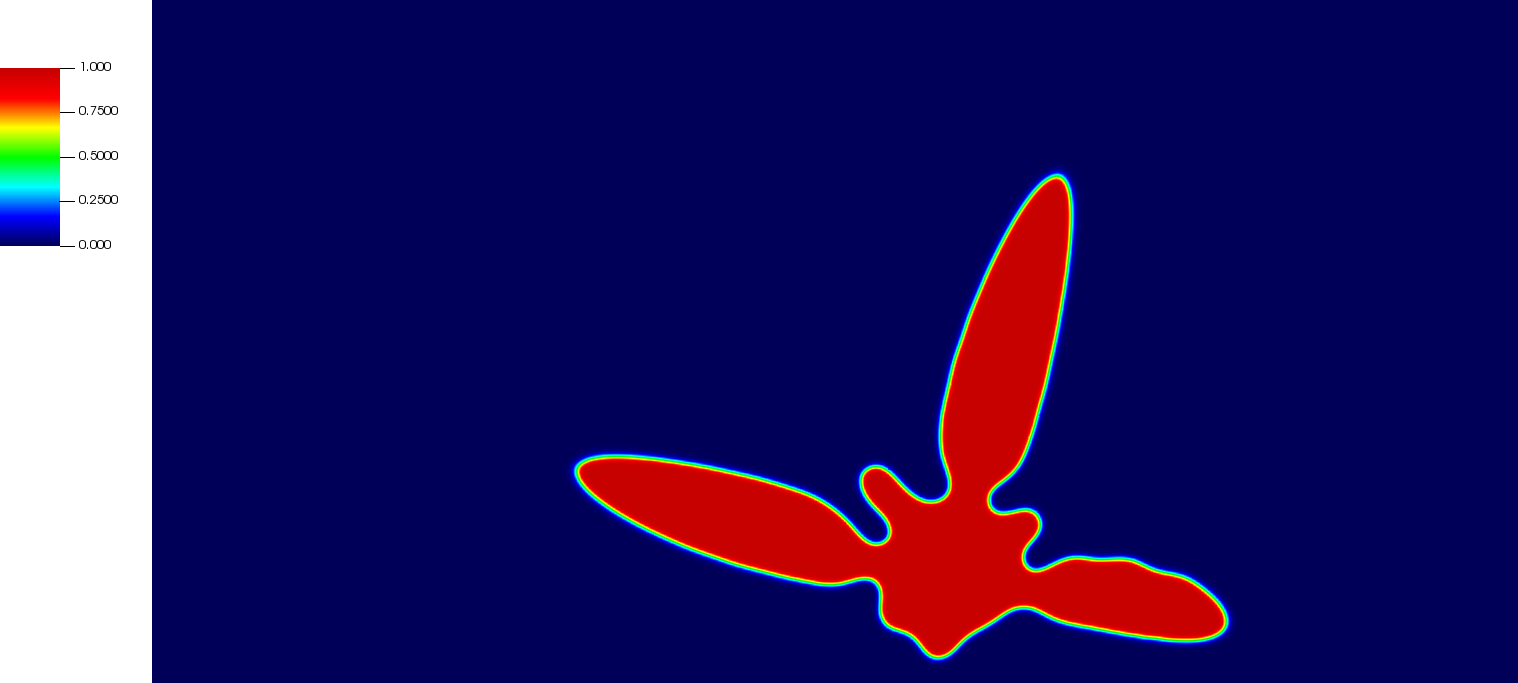}
    \label{fig:dendriteEta}
    \caption{Order parameter \\ }
  \end{subfigure}%
  \begin{subfigure}{0.33\linewidth}
    \includegraphics[height=\dendritefigheight,clip,trim=4cm 0 0 0]{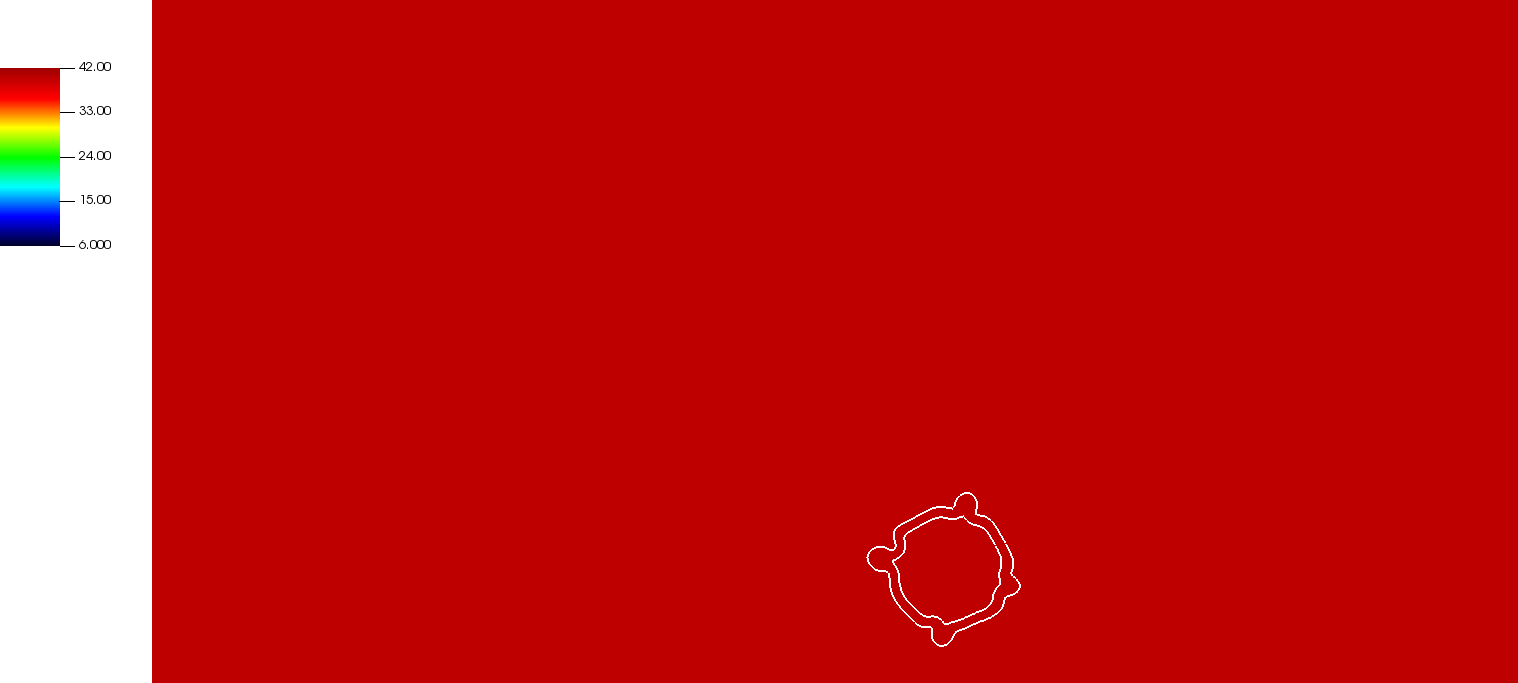}%
    \hspace{-5cm}%
    \includegraphics[height=2.5cm]{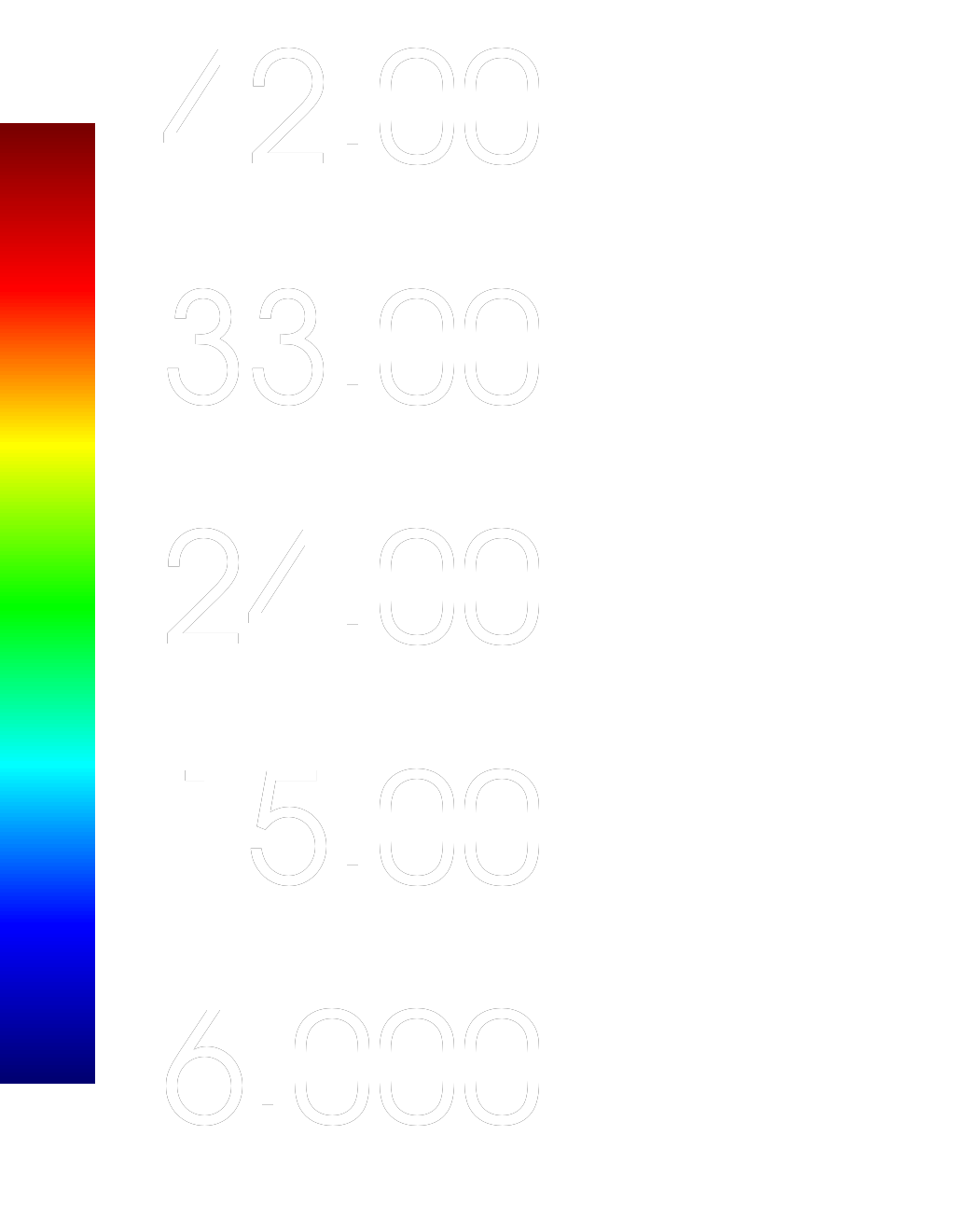}
    \includegraphics[height=\dendritefigheight,clip,trim=4cm 0 0 0]{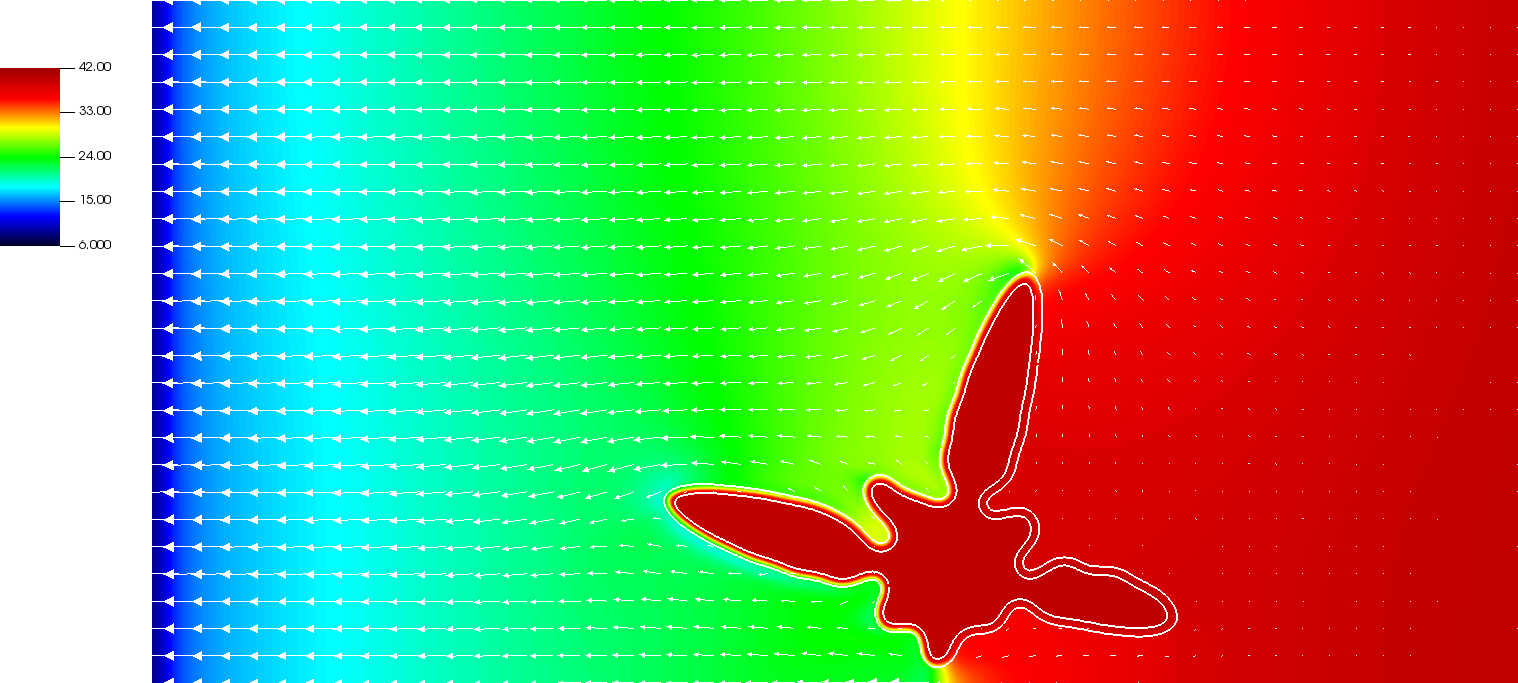}
    \includegraphics[height=\dendritefigheight,clip,trim=4cm 0 0 0]{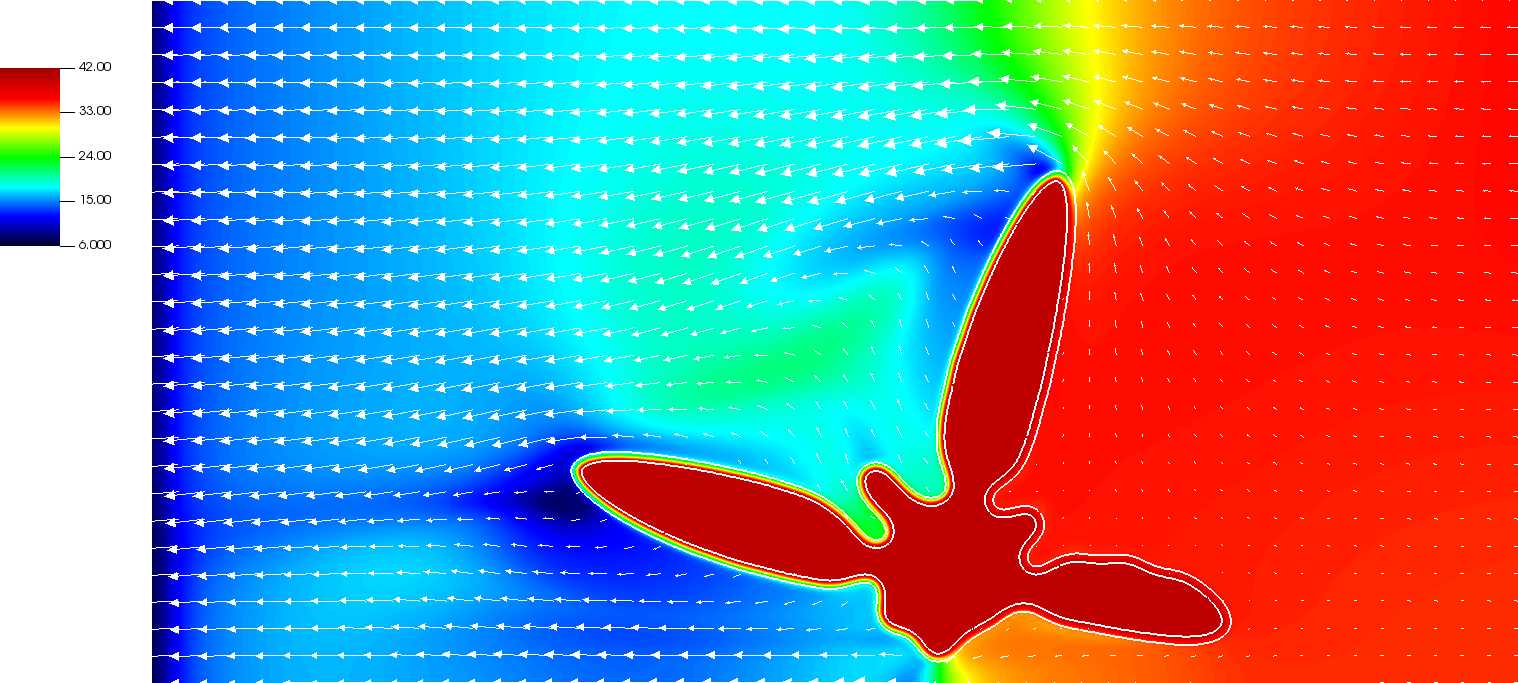}
    \caption{Fluid density and velocity}
    \label{fig:dendriteVelocity}
  \end{subfigure}
  \begin{subfigure}{0.33\linewidth}
    \includegraphics[height=\dendritefigheight,clip,trim=4cm 0 0 0]{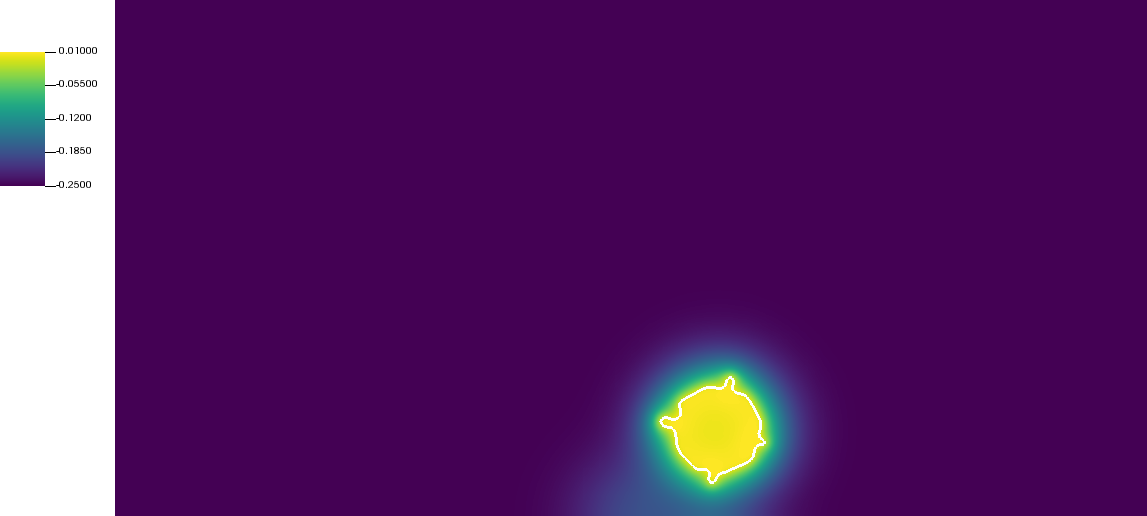}%
    \hspace{-5cm}%
    \includegraphics[height=2.5cm]{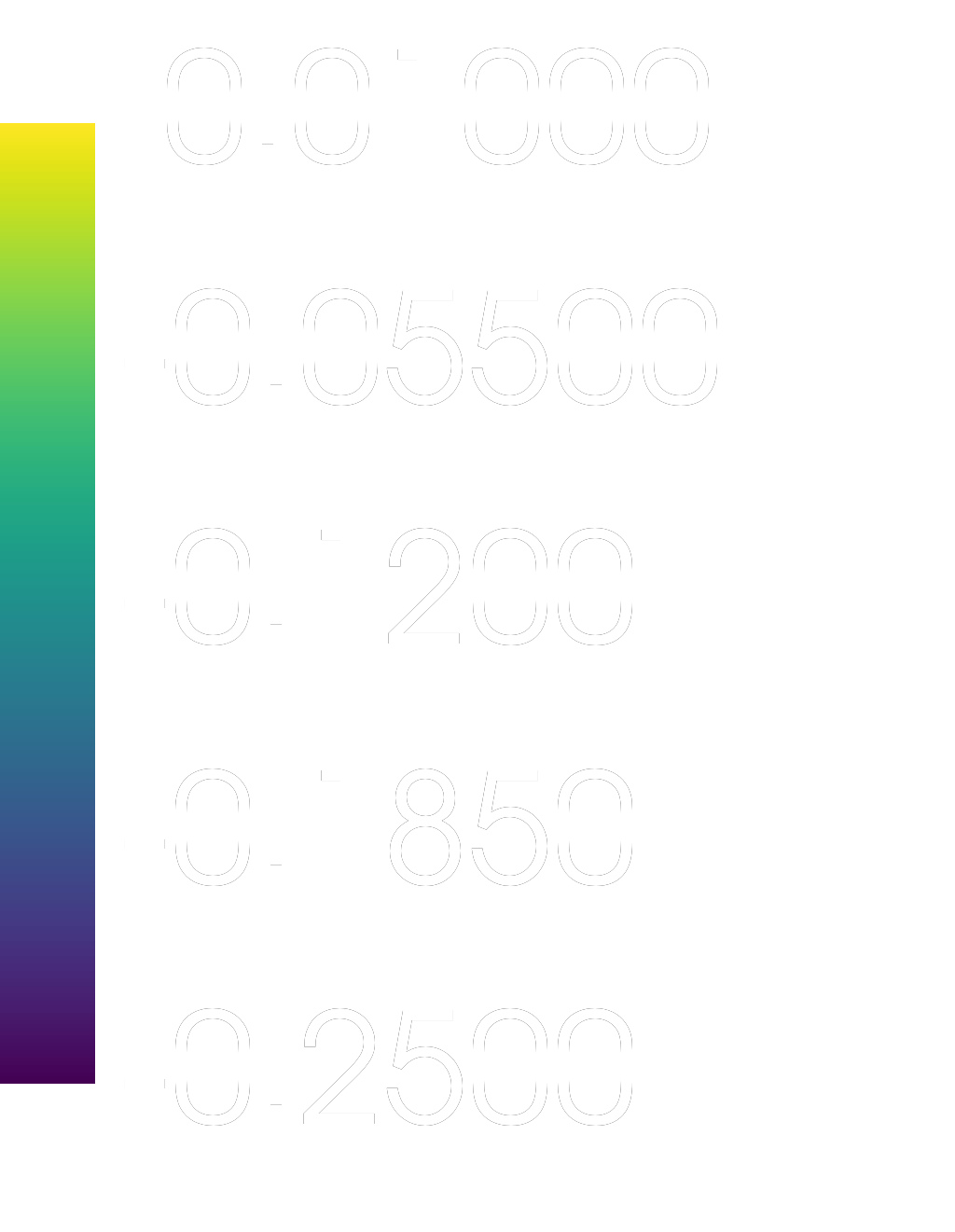}
    
    \includegraphics[height=\dendritefigheight,clip,trim=4cm 0 0 0]{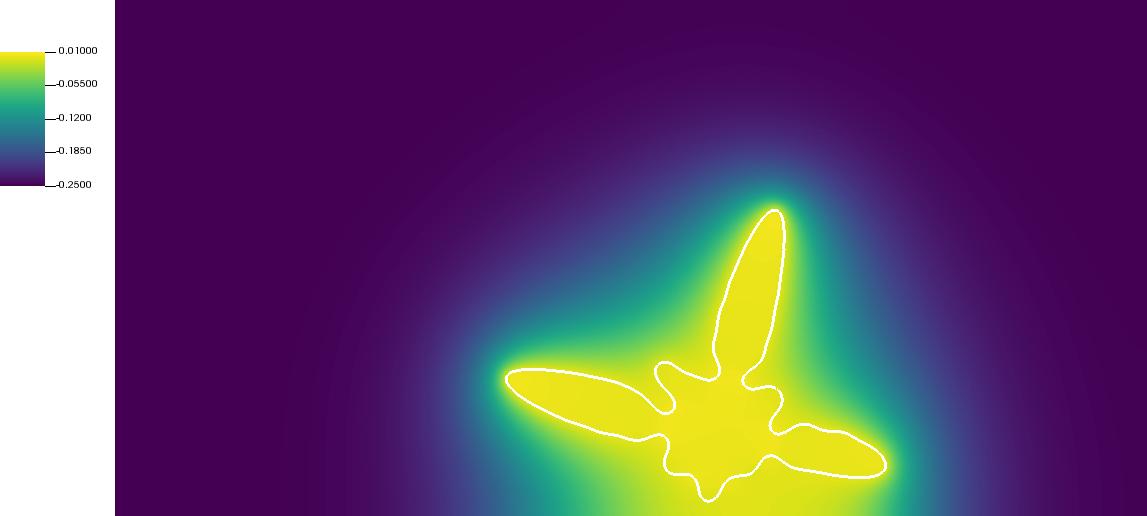}
    \includegraphics[height=\dendritefigheight,clip,trim=4cm 0 0 0]{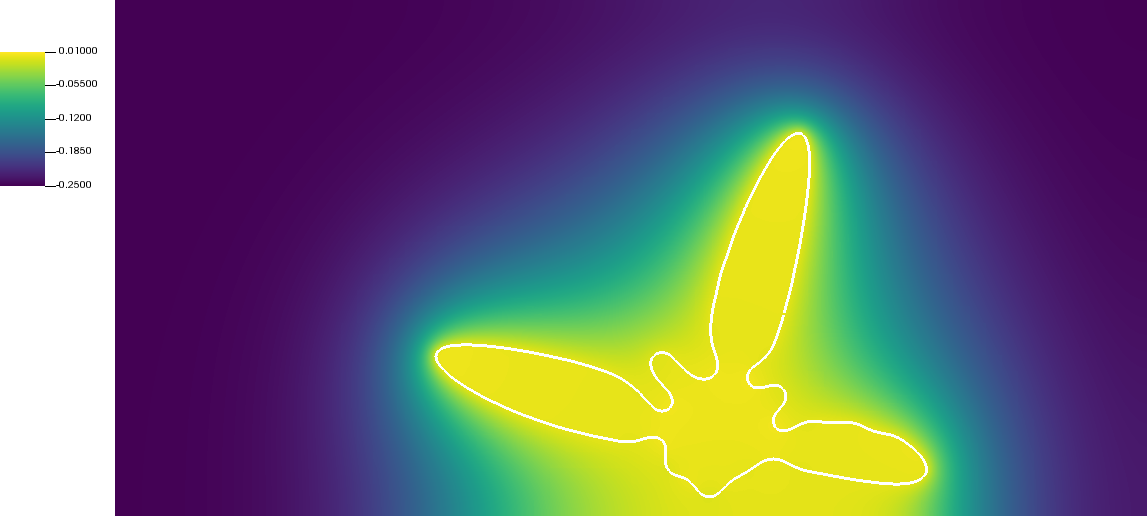}
    \caption{Phase field temperature}
    \label{fig:dendriteTemperature}
  \end{subfigure}
  \caption{Flow around a growing dendrite}
  \label{fig:dendrite}
\end{figure}

The diffuse boundary method's power lies in its ability to treat highly complex boundaries implicitly, with no regard needed for complex boundary descriptions or numerically tricky events such as topological transitions.
Numerous physical domains exist across evolving solid\slash fluid domains, in which neither the evolution of the interface, the flow field, nor the coupling between the two can (or should) be reduced to low order.
An important example is the growth of dendrite structures, which is of paramount importance in solidification, phase transformation, and materials science.
Dendrite growth is well-known to be coupled to the ambient flow conditions and is (by definition) a complex boundary-driven phenomenon.

The conditions for dendritic growth are directly associated with the variation of the Gibbs free energy in the system. 
This gradient energy at the interface can be generated by different mechanics, such as undercooling during solidification \cite{tiller1953redistribution}, and composition gradient in both multi-component and single-component systems, where diffusion can occur between the different species or due to density differences \cite{strickland2020directional,sun2009effect}. 
Thus, in the presence of a viscous flow field, the temperature gradients due to convection and the composition change due to density variation have major roles in the interface energy and, therefore, control the direction and velocity of dendritic growth.

The applications of current models and robust flow solvers for the practical study of dendritic growth are extensive.
The evolution of dendrites during the solidification process was the original motivation for the development of phase-field modeling.
Simulations using forced flow field that account for momentum kinetics have been previously performed, but they are typically very general and fail to assert the convergence of the solution as the diffuse interface length approached the sharp interface limit \cite{tong2001phase, lu2005three}. 
Other sharp interface methods, such as the cellular automata method, have been used to account for mass and momentum transfer between the phases, but they face the challenge of explicitly tracking the interface evolution that relies on strong assumptions over the interface geometry and they have also been limited to incompressible flow cases \cite{yuan2009numerical, yuan2010dendritic, tonhardt2000dendritic}. 
Simulations of both 2D and 3D samples for a viscous flow case using phase-field modeling have been presented by Jeong et al.~\cite{jeong2001phase} using a simplified Cahn-Hillard model for constant mobility, but the fluid solved relied on an averaging solution presented in \cite{beckermann1999modeling}.
Jaafar et al.~present an in-depth review of methods used in the literature, including front-tracking methods, enthalpy methods, fluid volume methods, and level-set methods \cite{jaafar2017review}.
Zhang et al.~have presented a detailed review of the use of phase-field models for dendritic growth, including the use of sophisticated techniques, such as adaptive mesh refinement \cite{zhang2023numerical}.

To demonstrate the capacity and flexibility of the model presented in this paper, a simple dendritic growth model is implemented, following the work presented in \cite{kobayashi1994numerical}.
In order to couple the dendritic evolution with the flow, we start by introducing a new order parameter, $\alpha = 1 - \eta$, so that information can be transferred between the different evolution equations in every timestep, with $\alpha = 1$ representing the solid phase and $\alpha = 0$ representing the fluid phase.
Then the free energy functional for the dendritic model can be written as follows: 
\begin{align}
\label{eq:dendriteFunctional}
  W = \int_\Omega \Big[ \frac{1}{2}\varepsilon^2 |\nabla\alpha|^2
  + \psi
  \Big]\,d\bm{x},
\end{align}
where $\epsilon$ is the interface energy, and the $\psi$ function is a double-well potential, as described in \cite{kobayashi1992simulations}, with local minima at $\alpha = 0$ and $\alpha = 1$.
The kinetic evolution of $\alpha$ can be obtained from the gradient of equation \ref{eq:dendriteFunctional}, through a traditional Allen-Cahn approach and take the following form:
\begin{equation}
\label{eq:dendriteEvolution}
   \tau \frac{\partial \alpha }{\partial t} = -\frac{\partial W}{\partial \alpha} = \varepsilon^2 \Delta\alpha + \alpha(1 - \alpha) (\alpha - \frac{1}{2} +m),
\end{equation}
where $\tau$ is a scaling parameter, and $ m $ is the anisotropy barrier, which is a function of temperature, as described in \cite{kobayashi1994numerical}.
Thus, a diffusion equation is also required to evaluate the temperature evolution, and is implemented as
\begin{equation}
    \label{eq:dendriteDiffusion}
    \frac{\partial T}{\partial t} = \Delta T + \frac{\partial \alpha}{\partial t}.
\end{equation}
This implementation of the heat diffusion equation does not account for any heat generation terms. 
Therefore, by initializing the system with zero flux Neumann boundary conditions for the temperature field, this field is effectively dimensionless and can be interpreted as the level of undercooling in the system, similar to what is presented in \cite{zhang2021solution} and \cite{sun2009effect}.

The simulation is performed by initializing a $12 \times 6$ dimensionless domain with a single circular inclusion with a 0.1\,radius to represent a dendritic nucleation site.
Density is initialized as a constant, with a dimensionless value of 42.0, across all phases.
The domain is set with zero flux Neumann boundary conditions at all edges for $\alpha$ and $\eta$.
The energy and density fields are initialized by setting zero flux Neumann boundary condition is the top, bottom, and right edges. 
The left edge is initialized with Dirichlet boundary condition with value 1.0. 
For the momentum field, the x-momentum is set to zero flux Neumann boundary conditions on all edges, while the y-momentum is set to zero flux Neumann boundaries on the left and right edges, and 0.0 Dirichlet conditions are enforced on the top and bottom edges.

The simulation starts as a stationary system for the fluid flow while allowing initial development of the dendrite, purely based on the four-fold anisotropy imposed through the $m$ in \cref{eq:dendriteEvolution}, as shown in the top images of \cref{fig:dendrite}.
Contour lines are plotted in white to highlight the dendrite geometry.
Then, a pressure drop is introduced to the left-hand edge to initiate the flow field evolution (\cref{fig:dendriteVelocity}).
The velocity fields can be qualitatively compared to those in \cite{sun2009effect}, as the rate of growth is favorably affected by a larger density variation. 
It also demonstrated the model's ability to simultaneously capture the complex flow directions due to the dendritic geometry change.
The undercooling directions remain relatively symmetric as no thermal forces have been added to the simulation (\Cref{fig:dendriteTemperature}).

Further investigation is required to perform a quantitative analysis of the results, which rests beyond the scope of this work. 
The emphasis here is the model's capability to properly capture two complex boundary evolution systems, flow dynamics and dendritic growth, simultaneously and with good qualitative results.
The flow is computed without incompressibility assumptions and is able to track momentum and mass conservation across coupled solvers while implicitly tracking dendritic evolution.

\subsection{Flow interaction with a diffuse Allen-Cahn evolving material} 

\begin{figure}
  \centering
    \includegraphics[height=3.6cm,clip,trim=15cm 0cm 0cm 0cm]{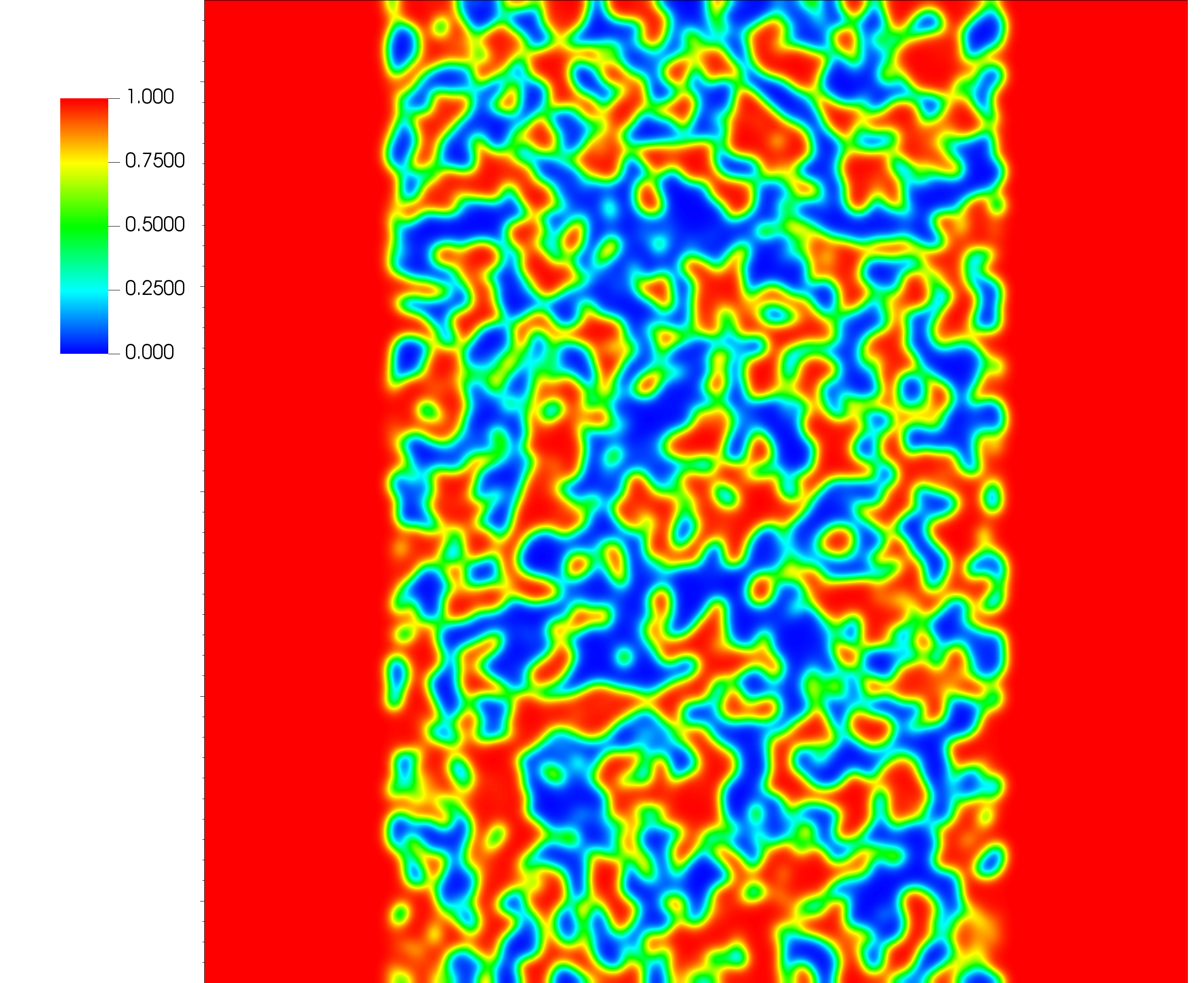}
    \includegraphics[height=3.6cm,clip,trim=15cm 0cm 0cm 0cm]{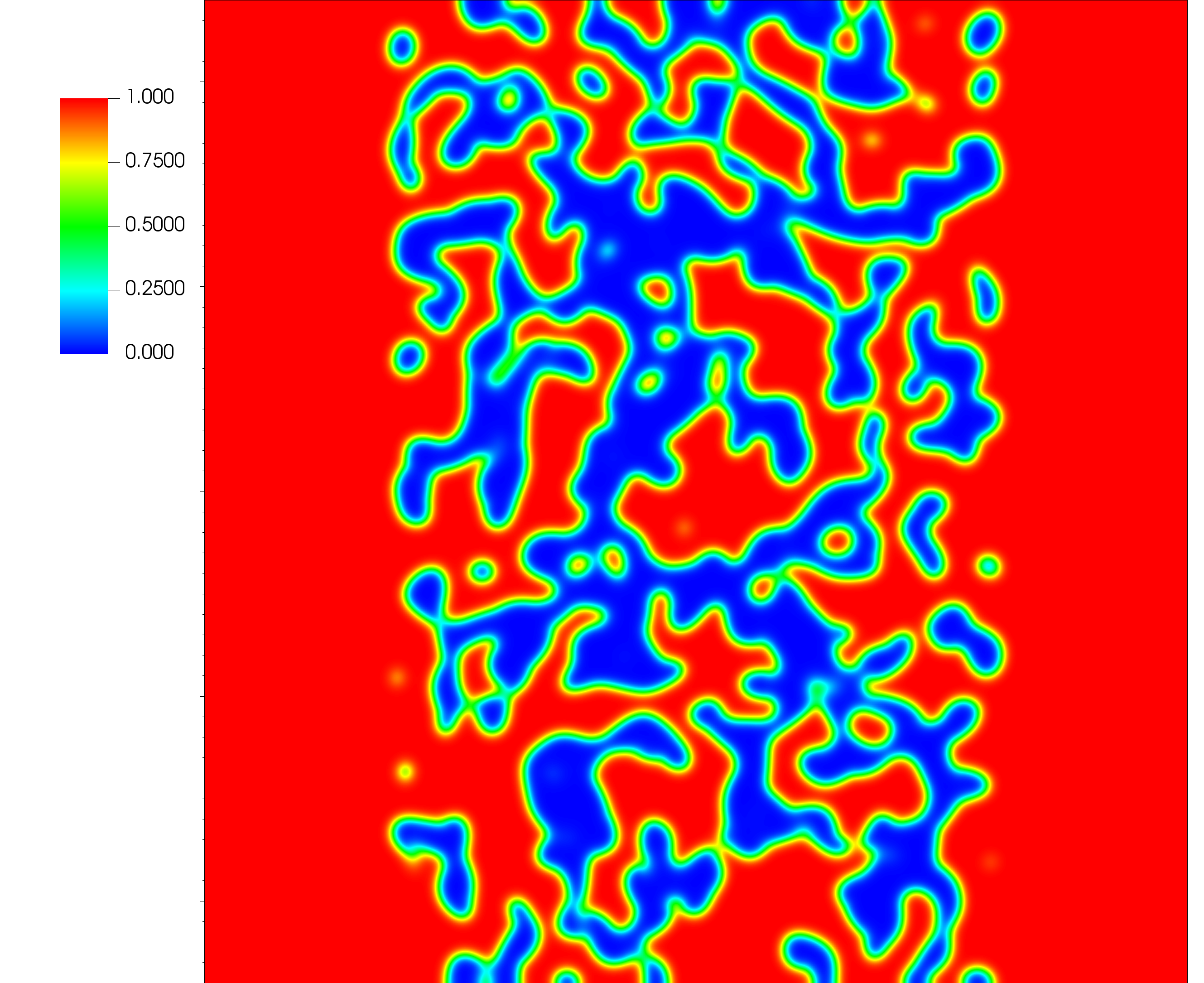}
    \includegraphics[height=3.6cm,clip,trim=15cm 0cm 0cm 0cm]{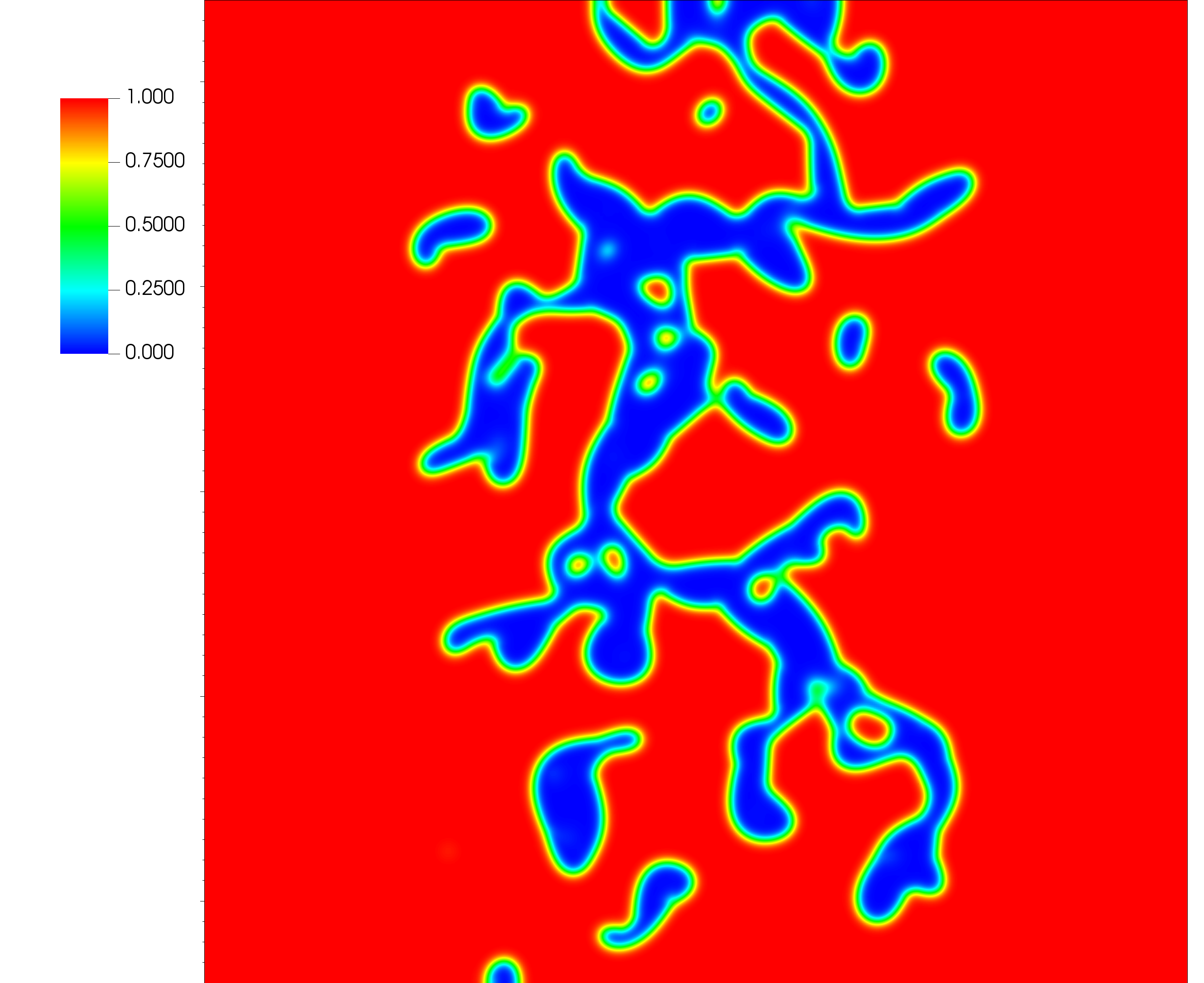}
    \includegraphics[height=3.6cm,clip,trim=15cm 0cm 0cm 0cm]{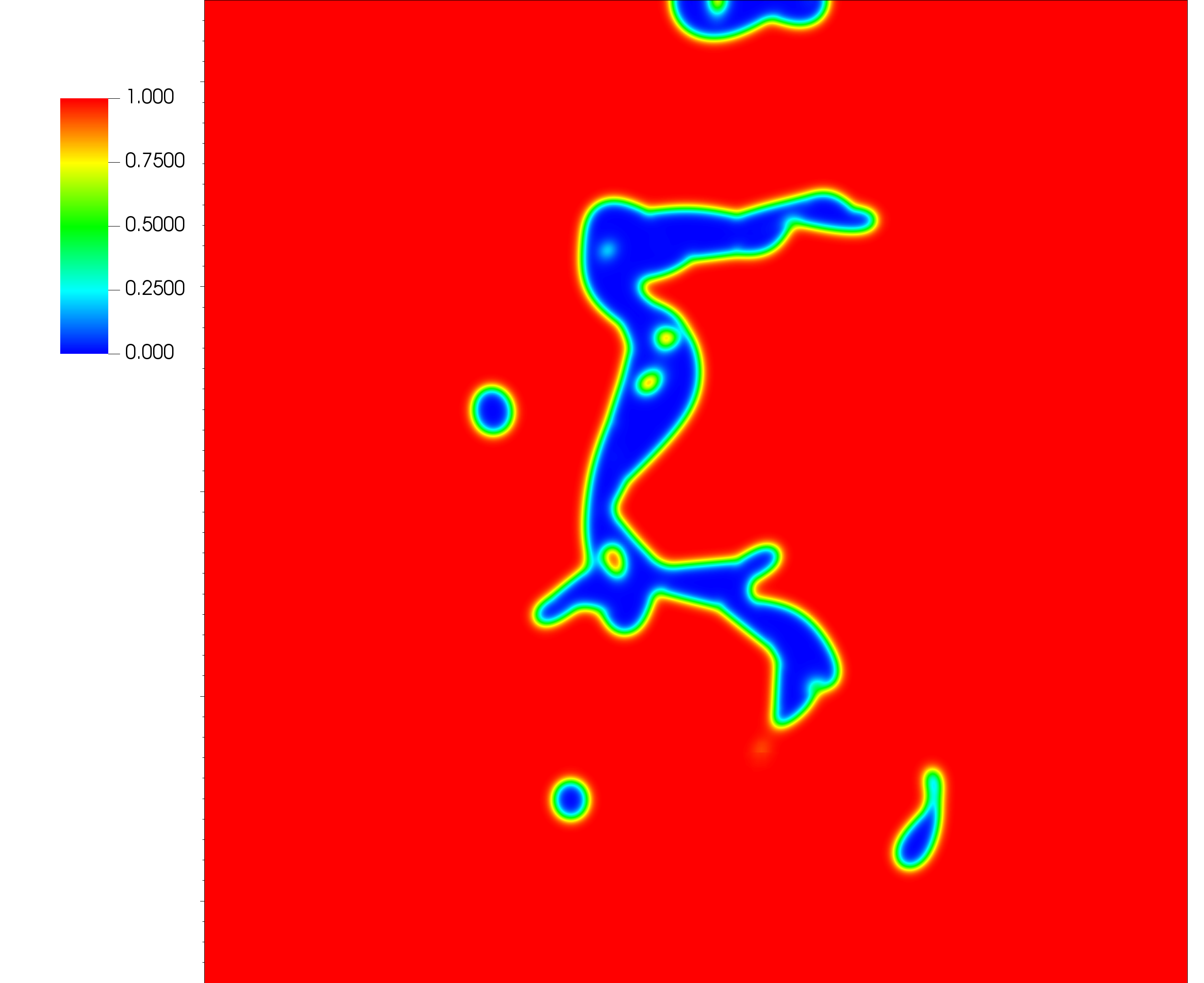}
    \caption{
    Evolution of the order parameter field; red corresponds to $\alpha=0,\eta=1$ (flow region), blue to $\alpha=1,\eta=0$ (solid region).
    $\alpha$ is seeded randomly and allowed to initially equilibrate bidirectionally, and then is evolved to zero.
    Initially, there is a full barrier between the inlet (left) and outlet; this is eventually breached as the solid region erodes away.
    }
  \label{fig:allencahn-eta}
\end{figure}

\begin{figure}
  \centering
  \begin{minipage}{0.45\linewidth}
    \includegraphics[height=3.4cm,clip,trim=15cm 40cm 0cm 0cm]{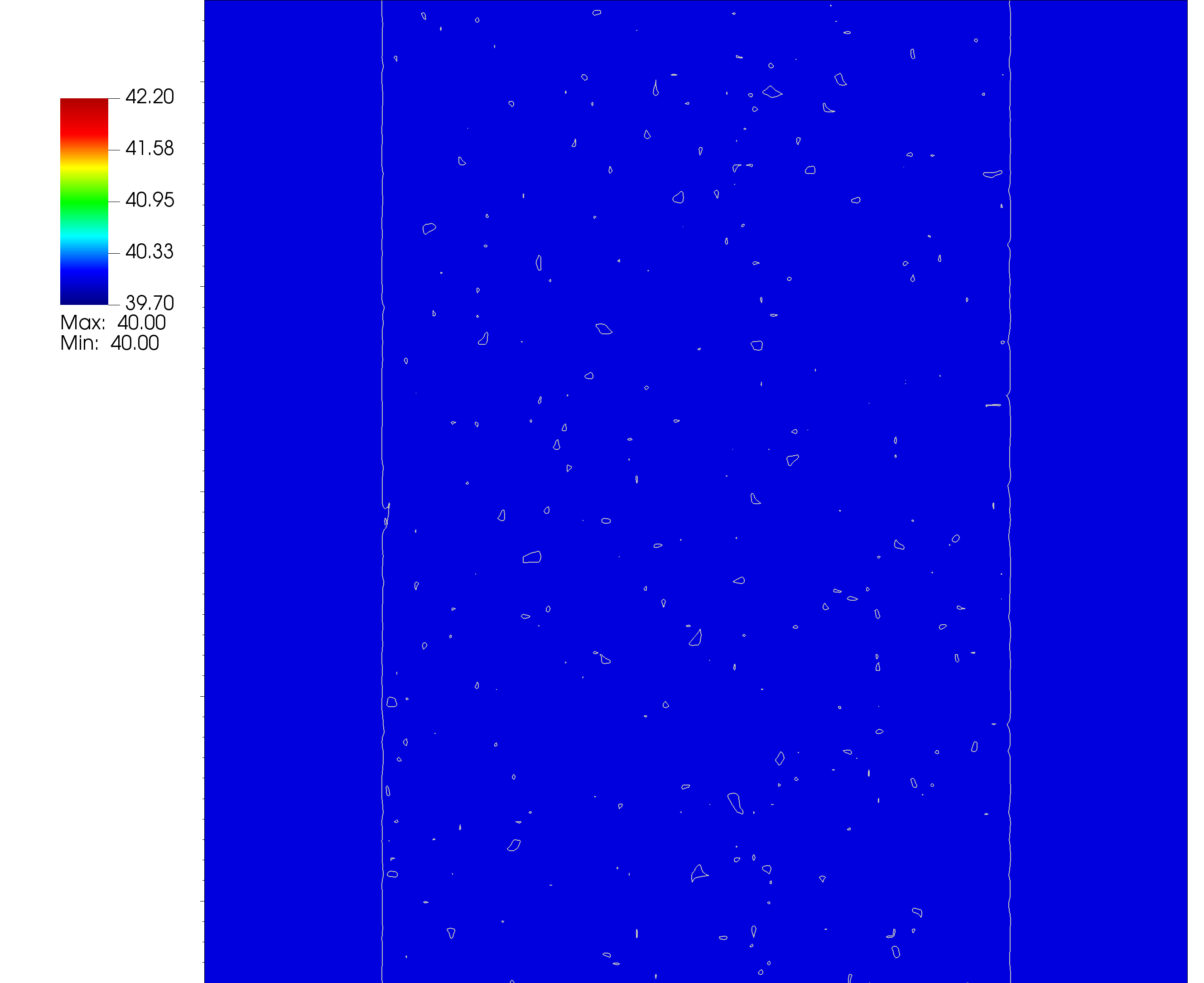}
    \includegraphics[height=3.4cm,clip,trim=15cm 40cm 0cm 0cm]{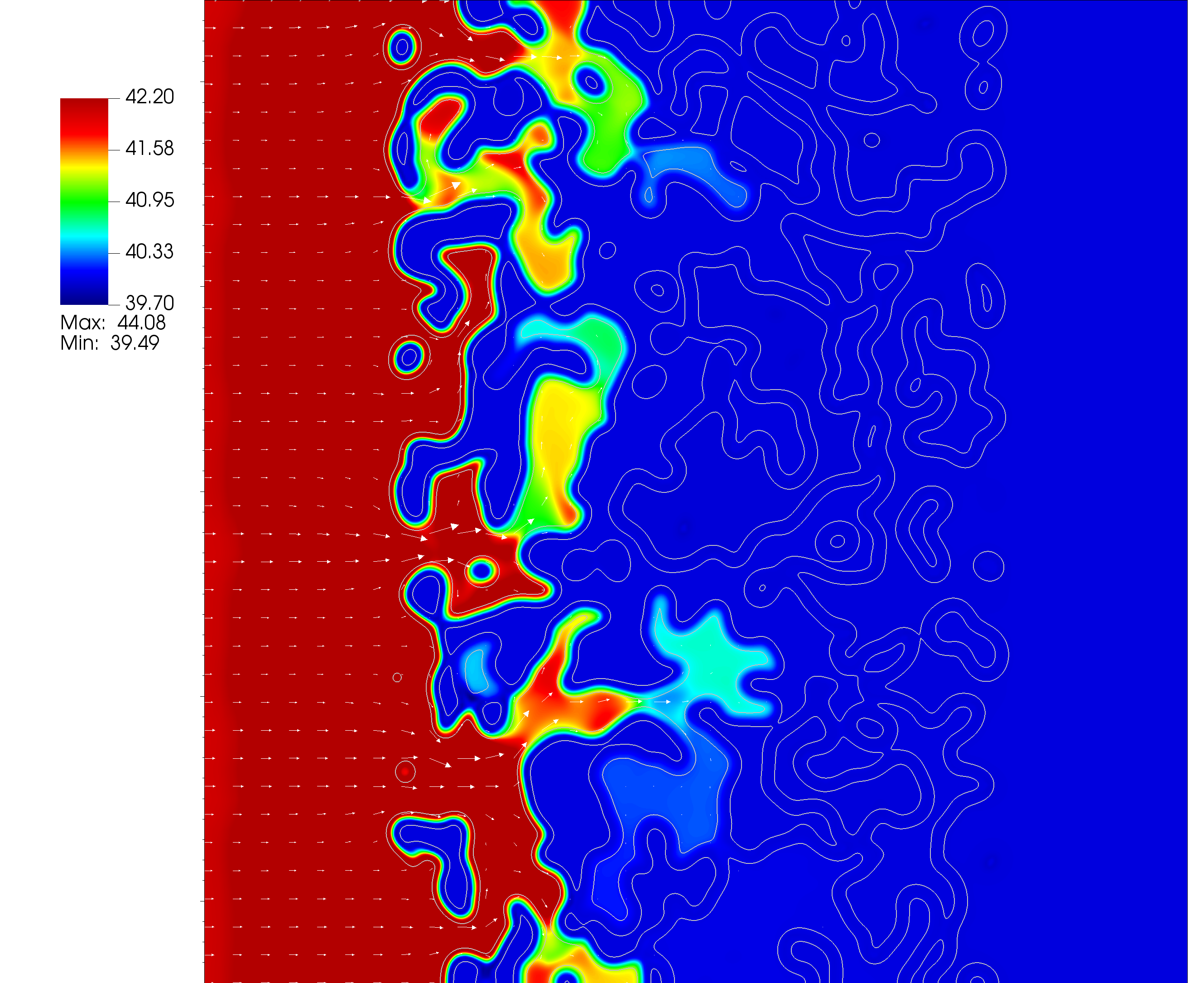}
    \includegraphics[height=3.4cm,clip,trim=15cm 40cm 0cm 0cm]{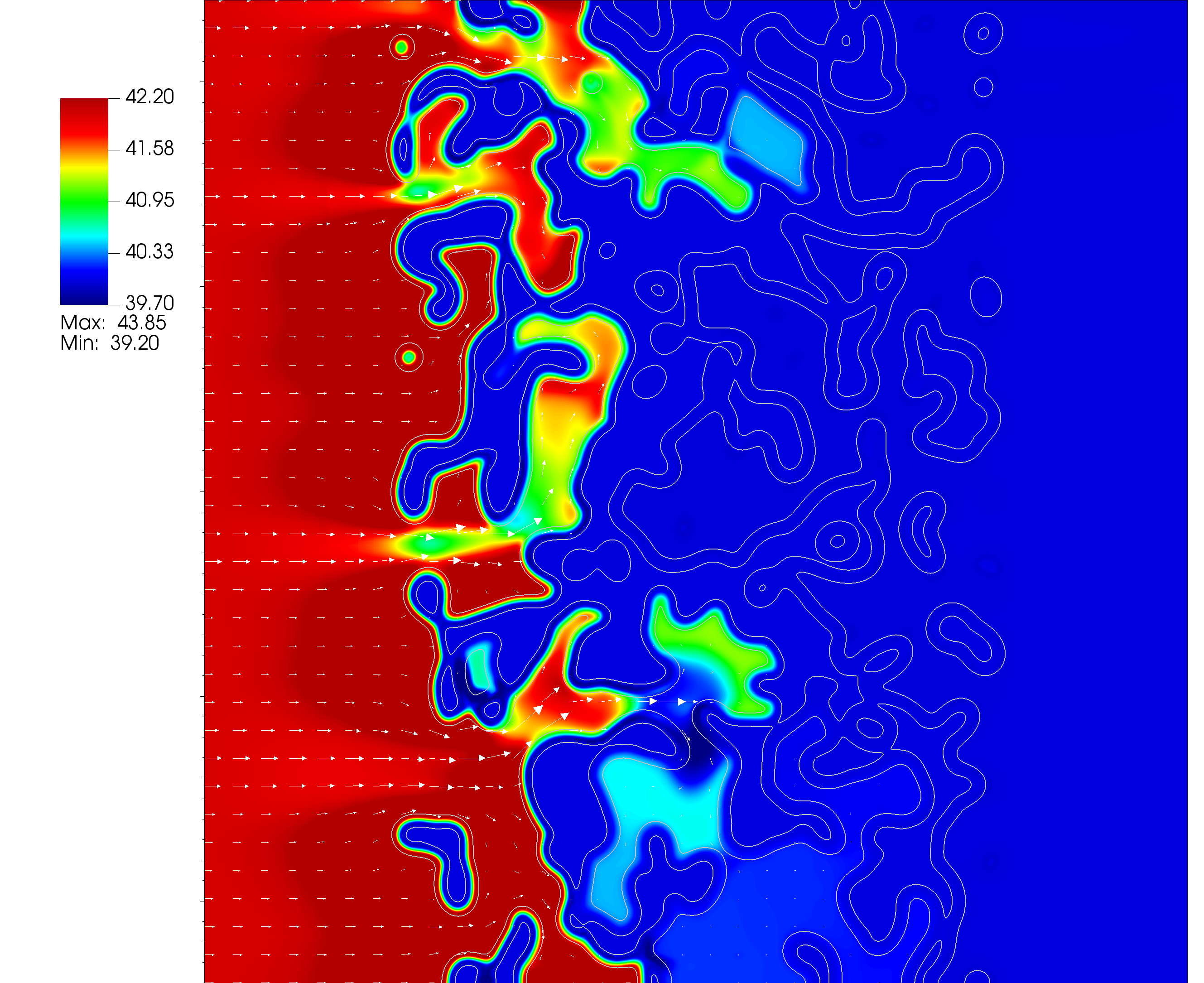}
    \includegraphics[height=3.4cm,clip,trim=15cm 40cm 0cm 0cm]{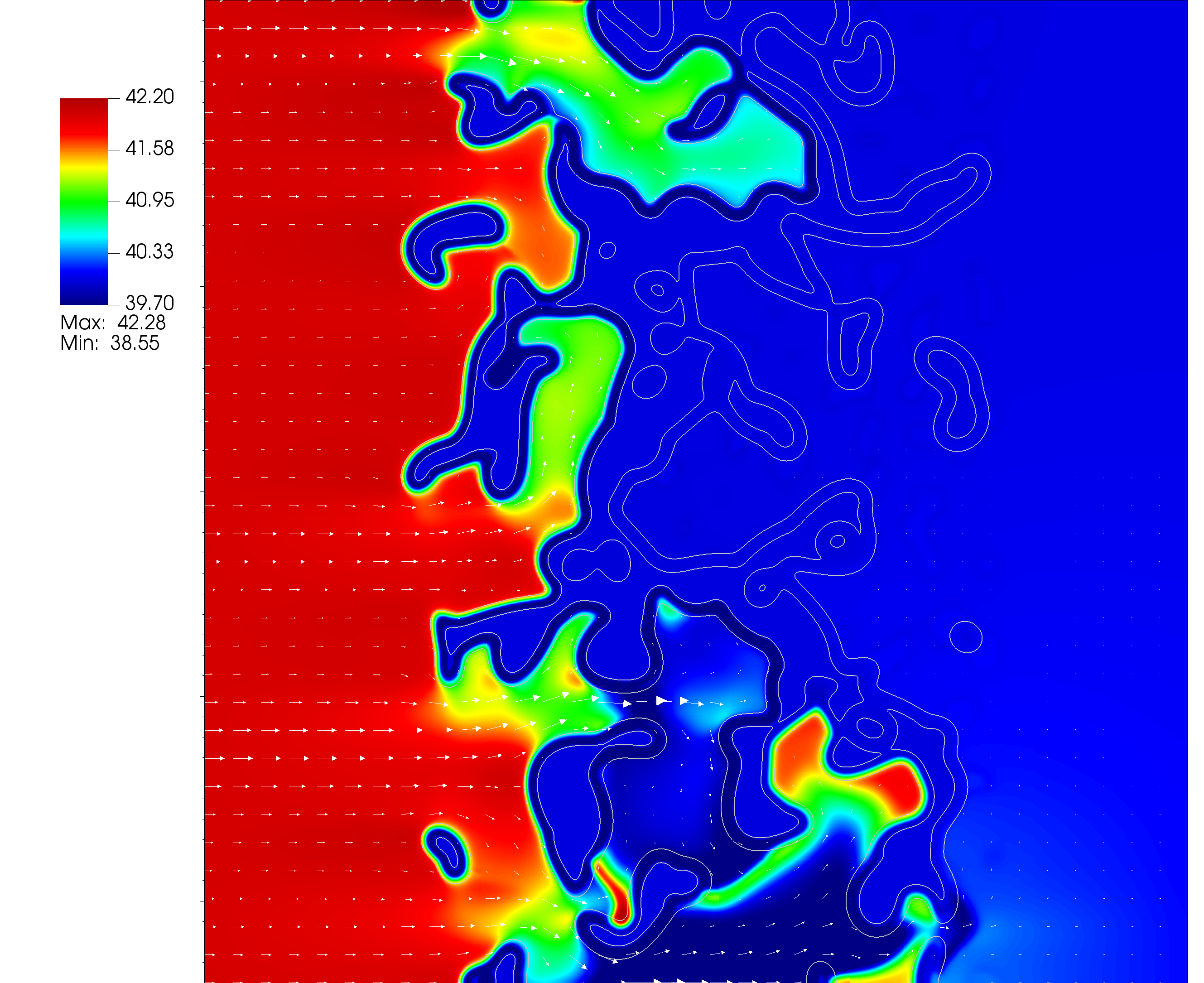}
    \includegraphics[height=3.4cm,clip,trim=15cm 40cm 0cm 0cm]{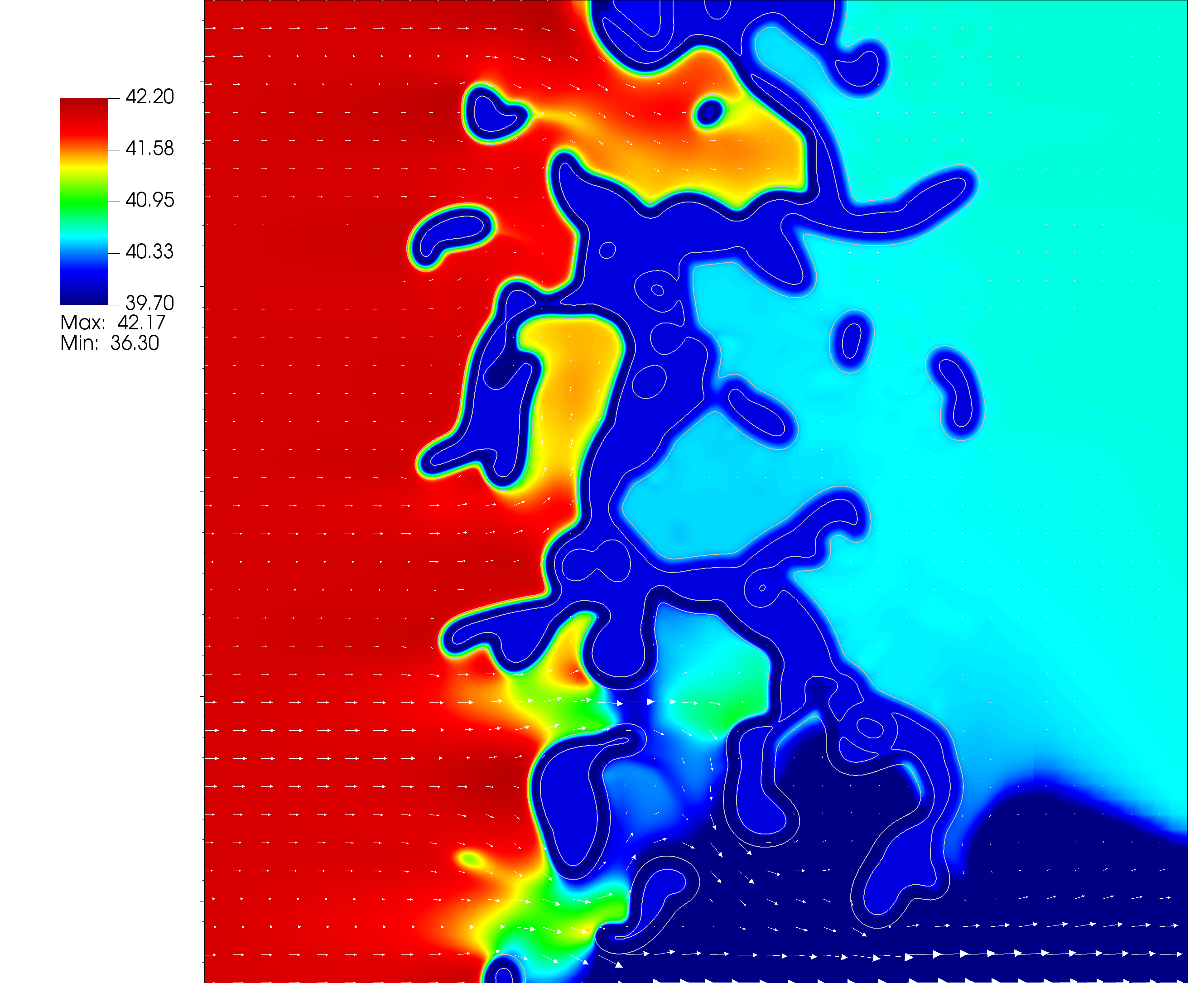}
  \end{minipage}%
  \begin{minipage}{0.55\linewidth}
    \includegraphics[height=3.4cm,clip,trim=15cm 40cm 0cm  0cm]{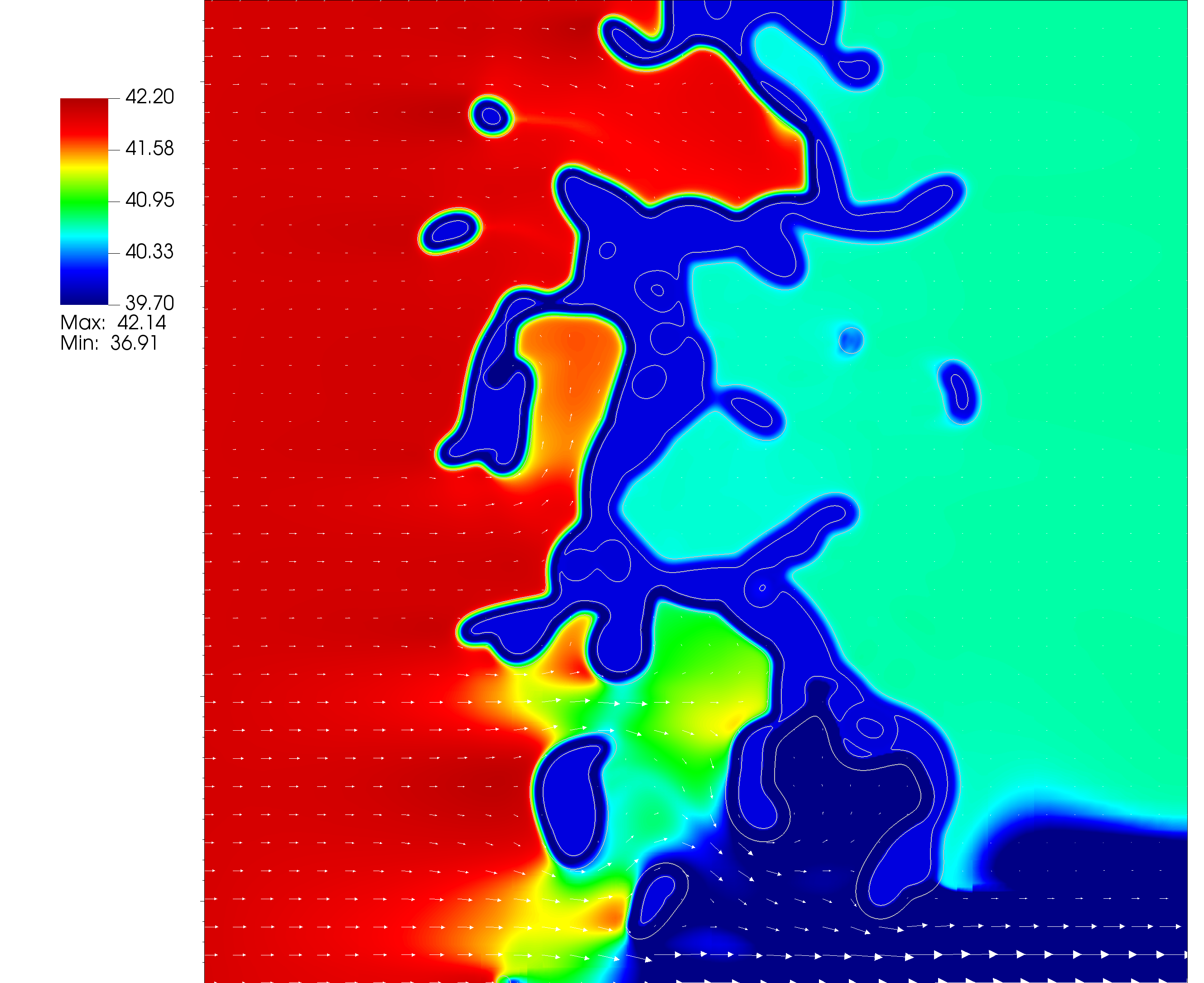}%
    \includegraphics[height=3.4cm,clip,trim=4cm  52.25cm 80cm 5cm]{results/AllenCahn/output.allencahn.27232/rho_vel0000.png}
    \includegraphics[height=3.4cm,clip,trim=15cm 40cm 0cm  0cm]{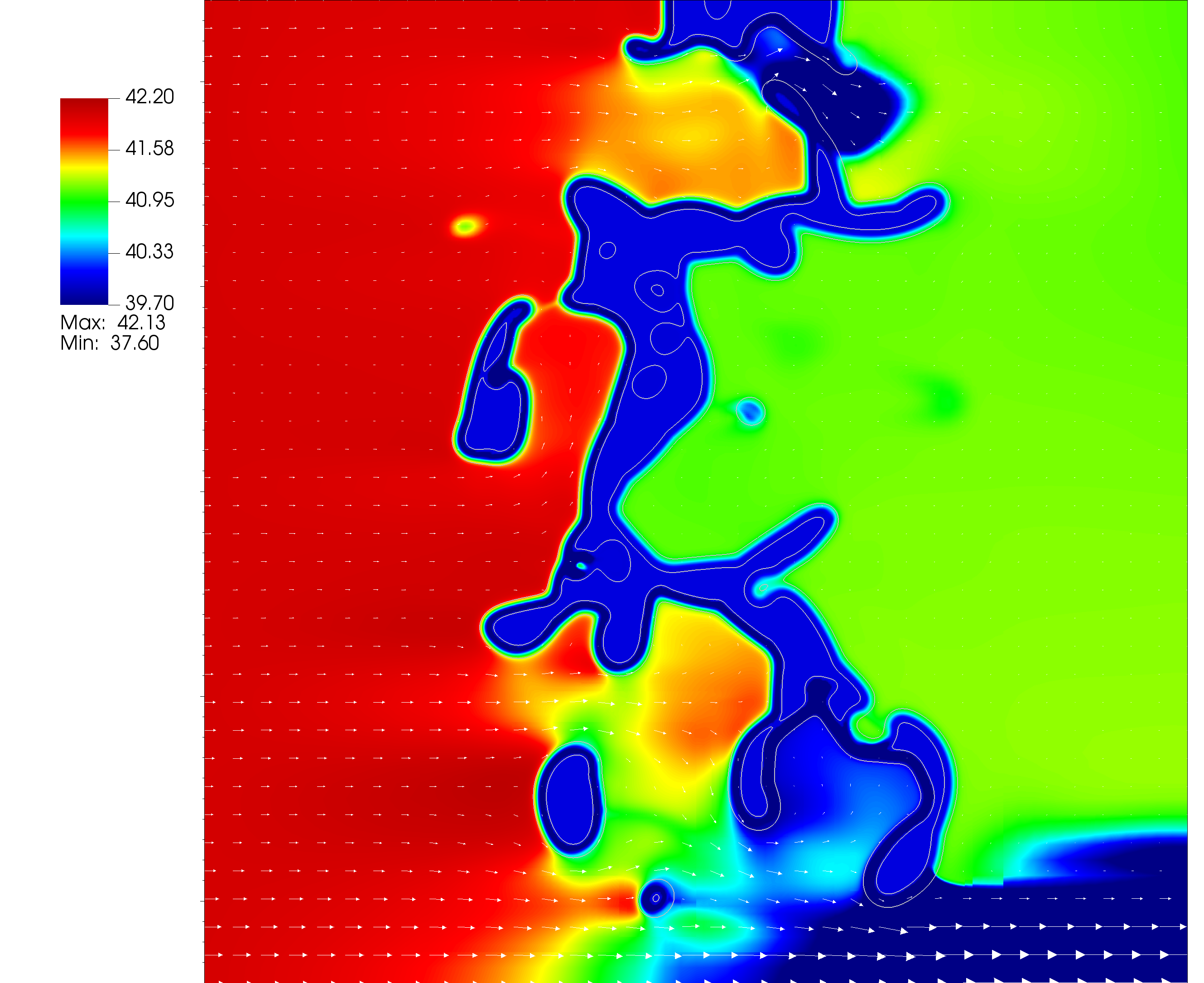}
    \includegraphics[height=3.4cm,clip,trim=15cm 40cm 0cm  0cm]{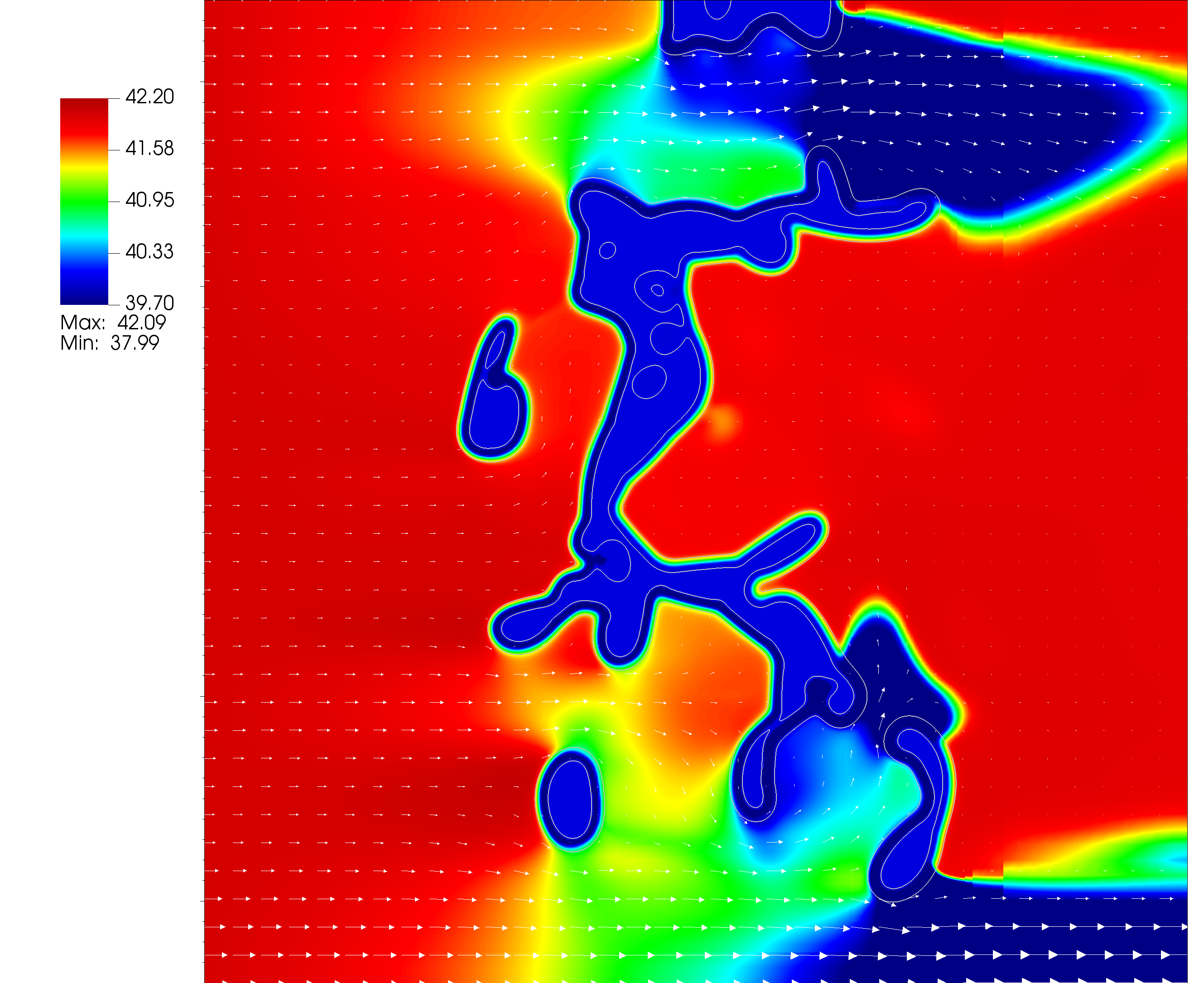}
    \includegraphics[height=3.4cm,clip,trim=15cm 40cm 0cm  0cm]{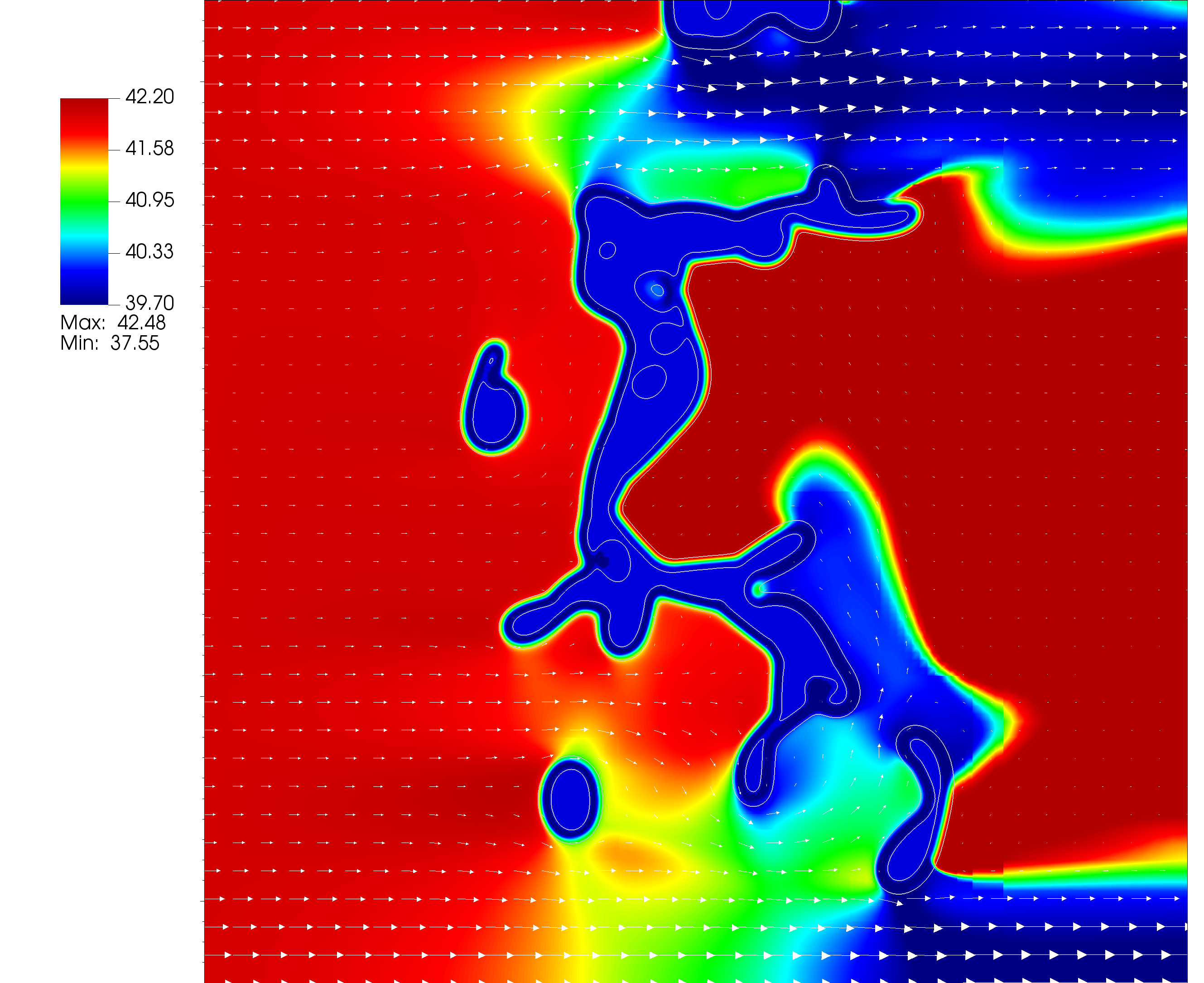}
    \includegraphics[height=3.4cm,clip,trim=15cm 40cm 0cm  0cm]{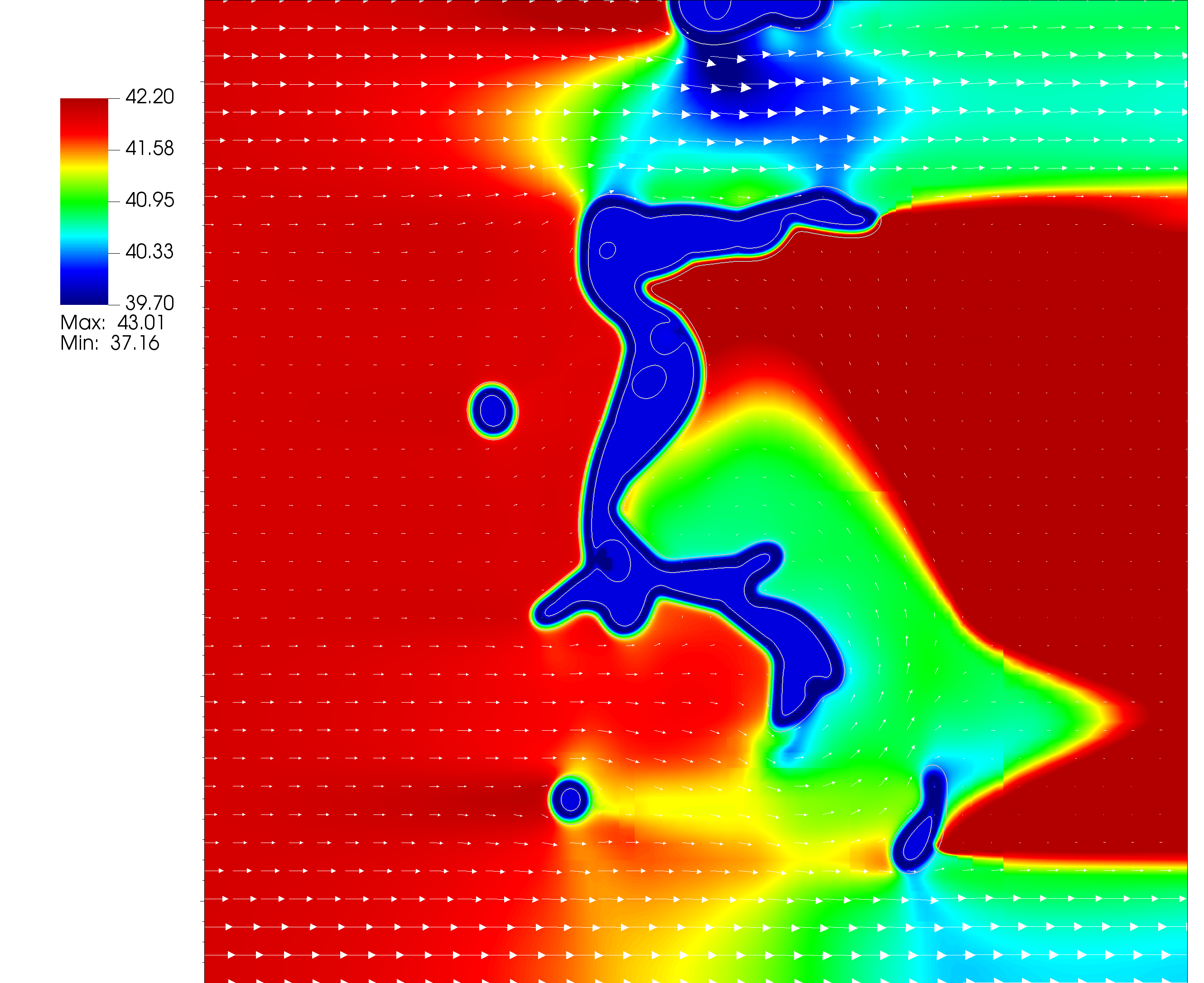}
  \end{minipage}
  \caption{Evolution of flow field: colors correspond to density; vector glyphs indicate direction of velocity, and contours indicate the location of the diffuse boundary. }
  \label{fig:allencahn-rho_vel}
\end{figure}

The final example of the proposed method considers the flow field, driven by a pressure gradient, through a semi-porous eroding solid material.
The applications of flow interaction through eroding\slash evolving media are numerous and range from fracking \cite{bavzant2014fracking} to coffee extraction \cite{matias2021flow}.
The behavior of the porous medium is dependent on the interaction of the flow with the highly complex, evolving structure of the medium.
Numerous such studies have been performed (for example, \cite{schneider2023fluid,chiu2020viscous}), where the evolution of the porous structure is modeled sometimes using a straightforward degradation model or using the phase field method \cite{ehlers2017phase}.
In all cases, the solution of the flow field requires fine detail of the porous media, and is vulnerable to instability in the event of topological transitions.

The scenario of flow driven through a porous media is an application of the present work.
This section considers a simplified Allen-Cahn phase field model for the medium, driven by the free energy functional
\begin{align}
  W = \int_\Omega \Big[\frac{\lambda}{\epsilon}\alpha^2(\alpha - 1)^2 +
  \frac{1}{2}\varepsilon g |\nabla\alpha|^2\Big]\,d\bm{x},
\end{align}
where $\alpha=1-\eta$ is the phase field order parameter, $\varepsilon$ is the diffuse interface width, $\lambda$ is a chemical potential, and $g$ is the value of the interfacial energy per unit area.
Here, $\varepsilon=0.1$, $g=0.01$, and $\lambda=0.01$.
The order parameter $\alpha$ is initialized to random values in $[0,1]$ over a band through the center of the domain, and initialized to $0$ near the inlet and exit.
Neumann boundary conditions are used for all domain boundaries.
The size of the domain is $48\times48$ in nondimensionalized units.
Before initializing the flow simulation, the phase field simulation is evolved according to the L2 gradient flow 
\begin{align}
  \frac{\partial\alpha}{\partial t} 
  =
  -L \frac{\delta W}{\delta \alpha}
  =
  - L \Big(\frac{\lambda}{\epsilon}(
  2\alpha  - 6\alpha^2 + 4\alpha^3
  ) +
  \epsilon\,\kappa\,\Delta \alpha \Big)
\end{align}
with $L=0.1$ and a timestep of $0.01$ until $t=10$ (enough to smooth out the features of the porous medium).
It is unphysical for the material to ``heal'', and so at $t=10$, the mobility parameter $L$ is set to
\begin{align}
  L = \begin{cases}0.01 & \delta W / \delta \alpha > 0 \\ 0 & \text{else}\end{cases},
\end{align}
which restricts all motion to erosion only.
(It is noted that one can easily increase the sophistication of the model using more advanced erosion models that depend on the local flow field.
Here, only the base case is considered as the focus is on the testing of the method, rather than the development of novel erosion models.)
This causes the order parameter to erode following curvature-driven flow (\cref{fig:allencahn-eta}).
The $\alpha$ field initially forms a continuous barrier to the flow, so that after initial equilibration, the pressure gradient is sustained across the boundary.
Eventually, holes are formed; some open up small sub-pockets of previously undisturbed fluid, others cause pathways to open up and allowing the eventual equilibration of the pressure.
The parameters have been selected to demonstrate behavior that is as complex as possible with a fairly substantial difference between timescales.

For the flow, the fluid parameters $\gamma=1.4$ and viscosity $\mu=0.01$ are used.
A small regularization value of $\zeta=1\times 10^{-8}$ was used and the reference pressure is $p_\mathrm{ref} = 10^5$.
The $\lambda$ coefficient for the additional non-penetration condition is set to $2\times 10^4$.
However, because the Allen Cahn equation does not necessarily result in compact support of the diffuse boundary, a fairly large cutoff value of $0.2$ was used to ensure that flow did not bleed through the narrow regions of the solid.
(Note that this could also be adjusted to reflect sub-grid porosity.)
Dirichlet boundary of $\rho=42.0$ is enforced for density and $E=1000$ for energy at the inlet (this drives the flow) and Neumann values are imposed at the outlet.
Momentum boundaries are Neumann at both the inlet and outlet.
At the top and bottom, density, energy, and x-momentum are Neumann; y-momentum is set to zero.
Density is initialized to 42.0, momentum is initialized to zero, and energy is initialized to $10^4$.
These values are also used for the solid density, momentum, and energy. 
Adaptive timestepping is used with a CFL set to 1.0, and the simulation is run until the entire solid region is consumed and the flow equilibrates.

The resulting flow is visualized by plotting the mixed density field $\bar{\rho}$ with isocontours at $\eta=0.1$, $0.5$, and $0.9$, and vector fields corresponding to $\eta\mathbf{v}$ (\cref{fig:allencahn-rho_vel}).
The flow initially interacts with the degrading solid region (frames 1--4) and semi-equilibrates.
Eventually, a hole appears in the solid region, causing pressure to increase at the outlet (frame 5), though the flow towards the top is still mostly equilibrated.
Finally, another break in the solid barrier appears towards the top (frame 6), causing an increased velocity and pressure decrease through the opening (frames 7--9) and near-equilibration of pressure (frame 10).

\begin{figure}
  \centering
    \includegraphics[height=3.6cm,clip,trim=15cm 0cm 0cm 0cm]{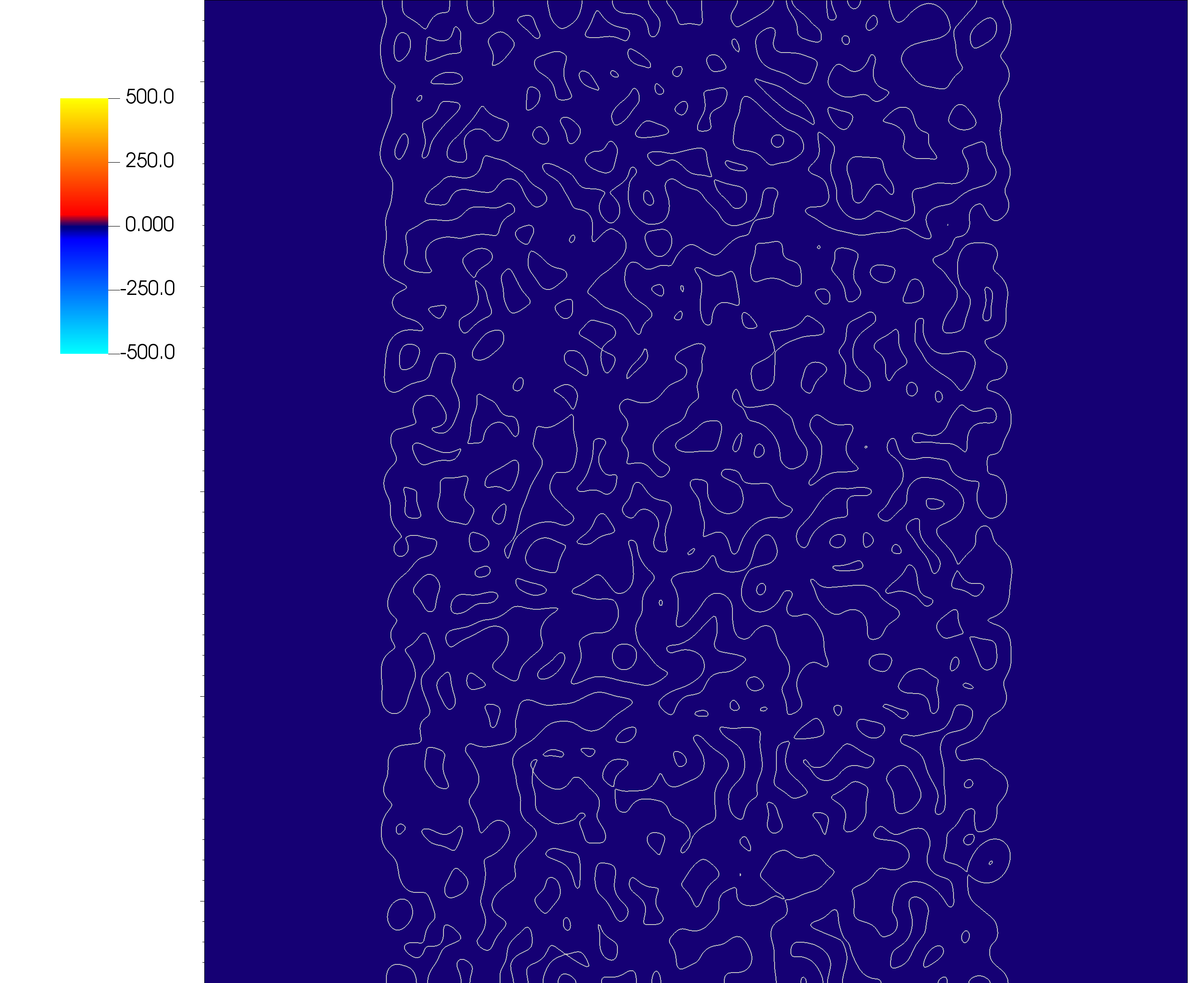}
    \includegraphics[height=3.6cm,clip,trim=15cm 0cm 0cm 0cm]{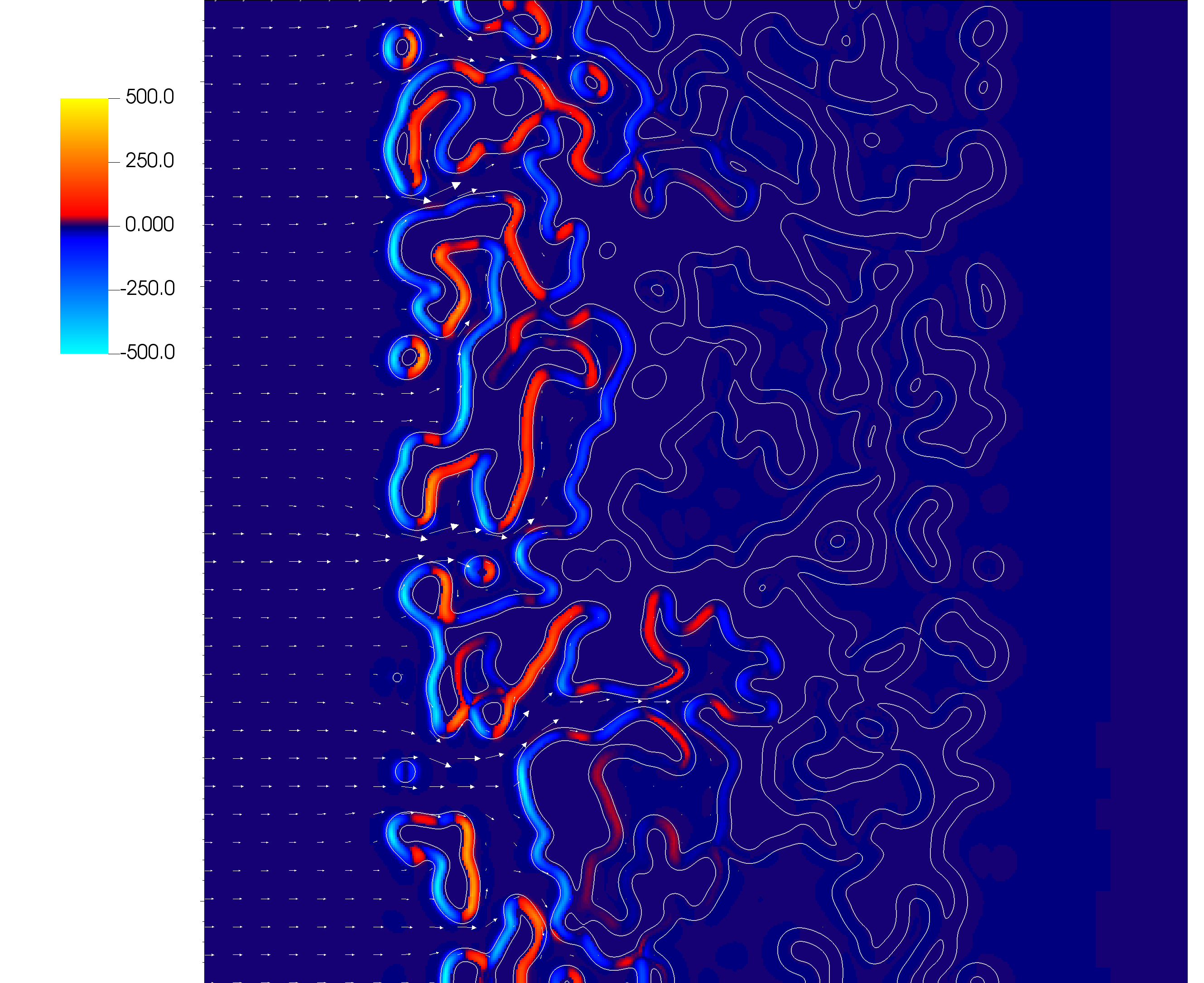}
    \includegraphics[height=3.6cm,clip,trim=15cm 0cm 0cm 0cm]{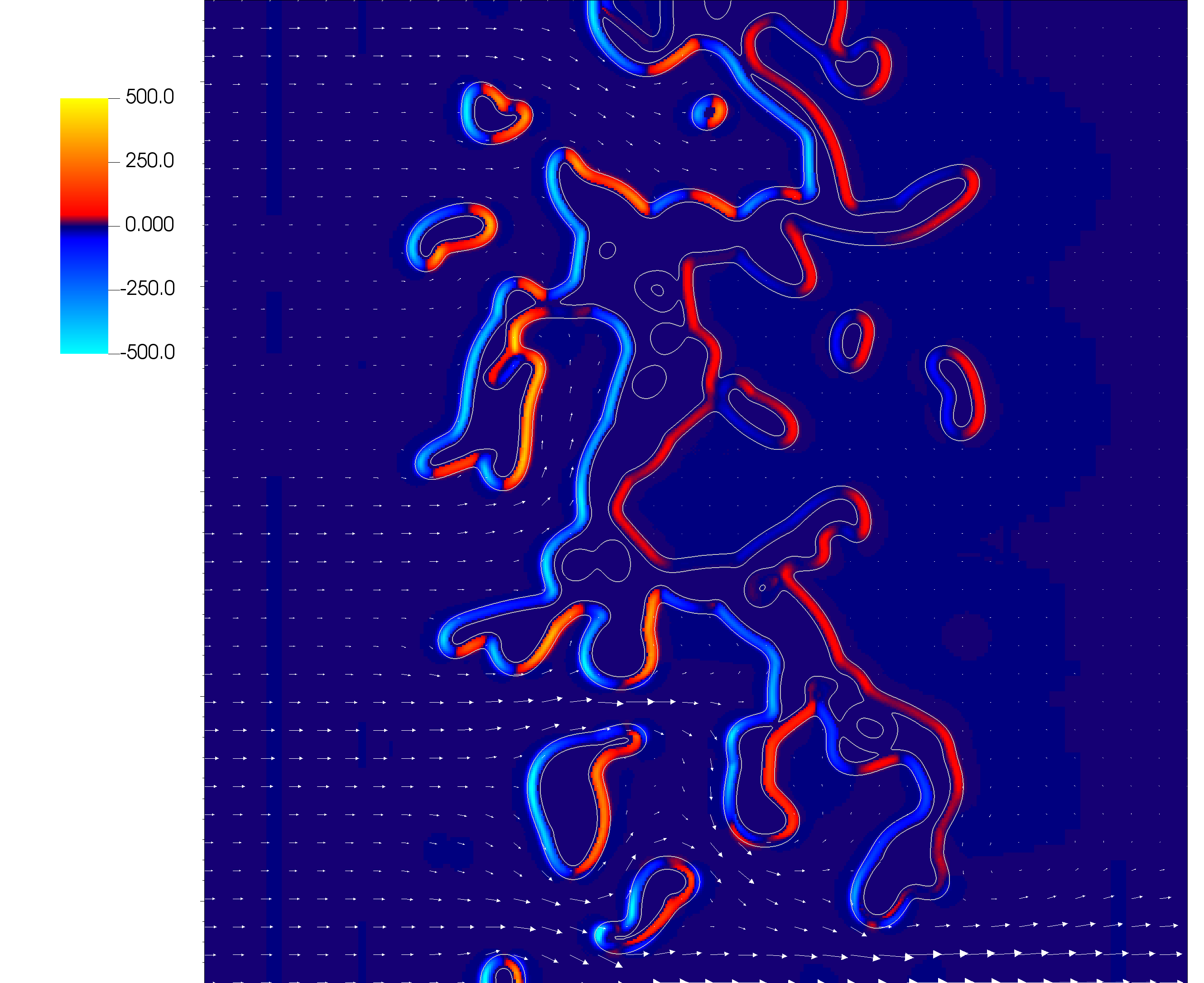}
    \includegraphics[height=3.6cm,clip,trim=15cm 0cm 0cm 0cm]{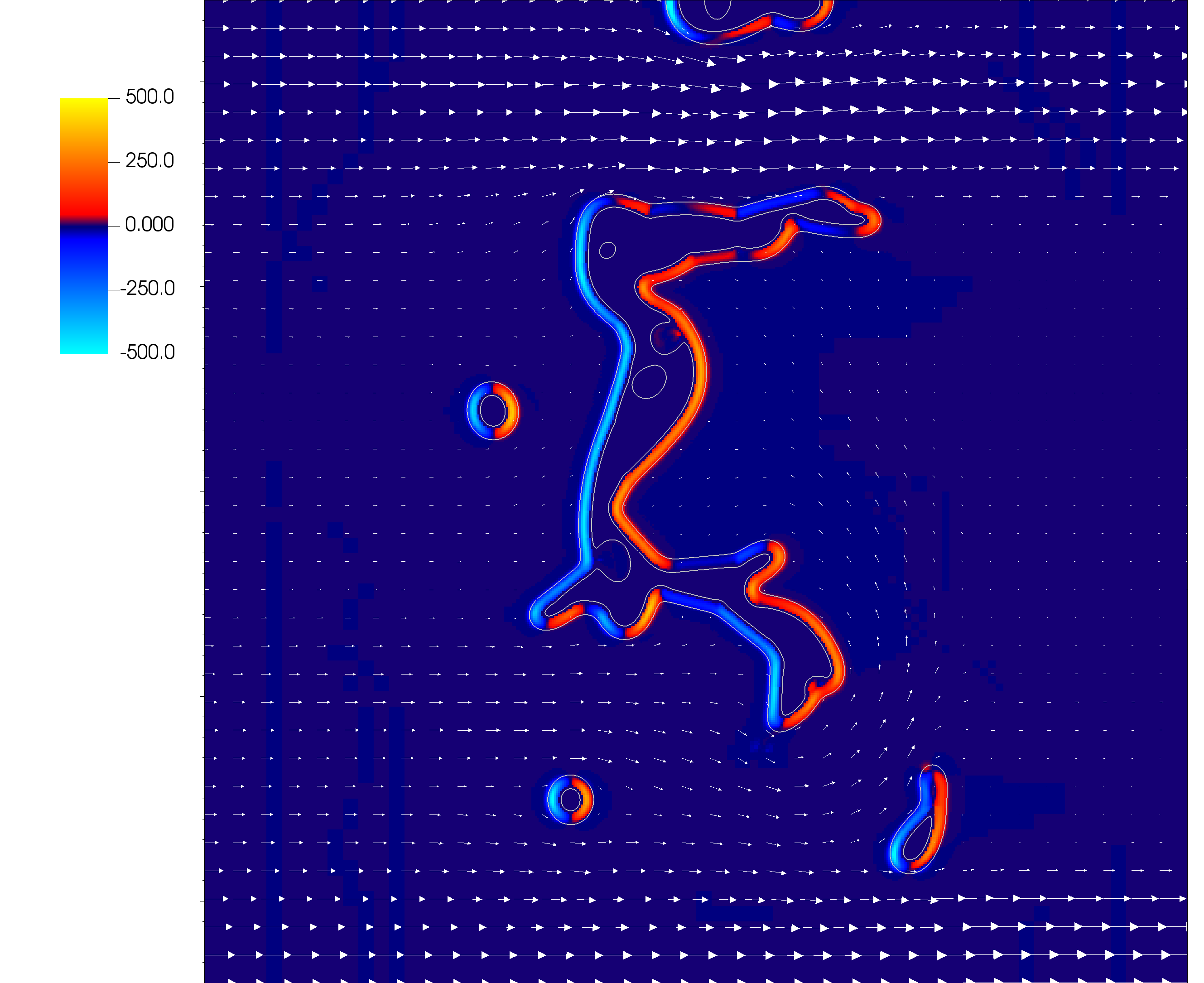}
  \caption{Horizontal component of the linear momentum source term $\dot{\mathbf{P}}$ (red is positive, blue is negative). }
  \label{fig:allencahn-pdot}
\end{figure}

The effect of the flow on the solid can be seen by plotting the linear momentum source term $\dot{\mathbf{P}}$, which, as was discussed previously, is exactly the normal force exerted by the flow on the solid region (\cref{fig:allencahn-pdot}).
Because the angular and linear momentum source terms are combined in this work, the viscous and mass-conserving effects are combined here.

\section{Conclusion}\label{sec:conclusion}

This work presents a systematic process for formulating arbitrary boundary conditions in implicitly defined domains with diffuse boundaries.
It is demonstrated that boundary conditions written in the form of mass\slash linear momentum\slash angular momentum\slash energy fluxes converge to the sharp interface equivalent in the limit as the diffuse boundary thickness goes to zero.
This is then specialized to a selection of boundary conditions of interest, which incidentally yields interesting insight into the role played by constitutive behavior in the prescription of essential boundary conditions (non-penetration is mediated by the equation of state; no-slip is mediated by viscosity).
The formulation is general and agnostic to the particular type of flow solver; however, there are many numerical considerations that must be accounted for.
In this work, the implementation is a compressible hydrodynamics solver with a basic Riemann Roe scheme; specialization to other types of implementations is left to future work.
The method and implementation is then applied to a selection of problems of interest, ranging from simple verification exercises to multiphysics problems driven by complex, evolving boundaries.
These demonstrate the accuracy and efficacy of the method.

Limitations of the work are acknowledged.
The flow solver that is used here is fairly primitive, and is not representative of modern hydrodynamic methods.
As discussed earlier (\cref{sec:numerical_considerations}), the method can be sensitive to particular types of discretization schemes, and it is essential that the diffuse source terms used are consistent.
This must be taken into account when extending this method to more advanced solvers.
It is also noted that, for simplicity, a very simple equation of state was used, which necessitated the integration of an additional ``wall'' EOS to sustain the non-penetration condition.
As described in the theory section of this paper, the EOS plays a critical role in mediating this boundary condition.
Finally, it is noted that all of the results are two-dimensional.
There are no inherent limitations of the method that restrict it to 2D, but the behavior of some source terms (particularly the angular momentum source) are considerably more complex in 3D.
Exploration of the diffuse boundary method extended to 3D is therefore left to future work.

\section*{Acknowledgements}
Authors EB, MM, and BR acknowledge support from the Office of Naval Research, USA, grant number N00014-21-1-2113.
BR also acknowledges support from ONR under grant number N00014-25-1-2029.
ES acknowledges support from the DOD SMART Scholarship Program and NAWCWD Fellowship.
This work used the INCLINE cluster at the University of Colorado Colorado Springs; INCLINE is supported by the National Science Foundation, grant \#2017917. 
The work also used the Nova cluster managed by the HPC group at Iowa State University. 
 
Finally, the authors wish to thank Dr.~Brian Bojko for his valuable insights throughout this work.

\printbibliography

\appendix
\setcounter{figure}{0}

\section{Detailed derivation of angular momentum conservation}\label{sec:angular_momentum_detailed_derivation}

This section provides some of the intermediate steps in the derivation of the angular momentum equations.
Writing in index notation,
\begin{align}
  \frac{d}{dt}\int_{B(t)} \epsilon_{ijk}x_j \rho u_k dV
  &=
    \int_{\partial B(t)} \epsilon_{ijk} x_j \sigma_{km} n_m dA.
    + \int_{\partial B(t)} \dot{\mathrm{L}}_{ij}n_{j}\,dV
\end{align}
Application of Reynolds transport theorem to the left hand side yields
\begin{align}
  \int_{B(t)}\Big( \frac{\partial }{\partial t} (\epsilon_{ijk}x_j \rho u_k)\Big)\, dV + \int_{\partial\Omega(t)} \epsilon_{ijk}x_j \rho u_k u_m n_m dA = \int_{\partial B(t)} \epsilon_{ijk} x_j \sigma_{km} n_m dA
  + \int_{\partial B(t)} \dot{\mathrm{L}}_{ij}n_j\,dV
\end{align}
Next, applying the divergence theorem to both sides allows for the combination, 
\begin{align}
  \int_{B(t)} \Big[\frac{\partial }{\partial t} (\epsilon_{ijk}x_j \rho u_k) + \frac{\partial}{\partial x_m}\left(\epsilon_{ijk}x_j \rho u_k u_m \right)\Big] \, dV
  &= \int_{B(t)}\Big[ \frac{\partial}{\partial x_m}\left(\epsilon_{ijk} x_j \sigma_{km} + \dot{\mathrm{L}}_{im}\right) \Big]\,dV.
\end{align}
Moving all terms to the left hand side gives
\begin{align}
    \int_{B(t)} \Big[\frac{\partial }{\partial t} (\epsilon_{ijk}x_j \rho u_k) + \frac{\partial}{\partial x_m}\left(\epsilon_{ijk}x_j \rho u_k u_m \right) - \frac{\partial}{\partial x_m}\left(\epsilon_{ijk} x_j \sigma_{km}\right)  - \dot{\mathrm{L}}_{ij,j}\Big] dV &= 0.
\end{align}
Applying the product rule to the time derivative and bringing the Levi-Civita tensor out of all derivatives,
\begin{align}
    \int_{B(t)}\Big[ \epsilon_{ijk} \frac{\partial x_j}{\partial t} ( \rho u_k) + \epsilon_{ijk} x_j \frac{\partial (\rho u_k)}{\partial t} + \epsilon_{ijk}\frac{\partial}{\partial x_m}\left(x_j \rho u_k u_m \right) - \epsilon_{ijk}\frac{\partial}{\partial x_m}\left( x_j \sigma_{km}\right) - \dot{\mathrm{L}}_{ij,j}\Big] dV &= 0
\end{align}
Then, recognizing that $\partial x_j/\partial t = u_j$ and noting that $\bm{u}\times\bm{u}=\bm{0}$,
\begin{align}
    \int_{B(t)} \Big[\epsilon_{ijk} x_j \frac{\partial (\rho u_k)}{\partial t} + \epsilon_{ijk}\frac{\partial}{\partial x_m}\left(x_j \rho u_k u_m \right) - \epsilon_{ijk}\frac{\partial}{\partial x_m}\left( x_j \sigma_{km}\right)  - \dot{\mathrm{L}}_{ij,j}\Big] dV &= 0.
\end{align}
Applying the product rule to the two divergence terms,
\begin{align}
  \int_{B(t)} \Big[
  &
  \epsilon_{ijk} x_j \frac{\partial (\rho u_k)}{\partial t}
  + \epsilon_{ijk}\cancelto{\delta_{jm}}{\frac{\partial x_j}{\partial x_m}}\rho u_k u_m
  + \epsilon_{ijk}x_j\frac{\partial}{\partial x_m}\left(\rho u_k u_m \right)
  - \epsilon_{ijk}\cancelto{\delta_{jm}}{\frac{\partial x_j}{\partial x_m}}\,\sigma_{km}
  - \epsilon_{ijk} x_j\frac{\partial \sigma_{km}}{\partial x_m}
  - \dot{\mathrm{L}}_{ij,j}\Big] dV = 0
\end{align}
simplifying further:
\begin{align}
    \int_{B} \Big[\epsilon_{ijk} x_j \frac{\partial (\rho u_k)}{\partial t} + \cancelto{0}{\epsilon_{ijk}\rho u_k u_j}  + \epsilon_{ijk}x_j\frac{\partial}{\partial x_m}\left(\rho u_k u_m \right) - \epsilon_{ijk}\sigma_{kj} - \epsilon_{ijk} x_j\frac{\partial \sigma_{km}}{\partial x_m}  - \dot{\mathrm{L}}_{ij,j}\Big] dV &= 0.
\end{align}
Temporary switching to invariant notation, re-write in a more suggestive form, assuming basis vectors $\{\bm{g}_i\}$,
\begin{align}
  \int_B\bm{x}\times\Big[\frac{\partial\rho\bm{u}}{\partial t} + \operatorname{div}\big(\rho\bm{u}\otimes\bm{u}-\bm{\sigma}\big)\Big]\,dV
  = \int_B\,\Big[\epsilon_{ijk}\sigma_{jk}\bm{g}_i + L_{ij,j}\Big]\,dV,
\end{align}

\end{document}